\tikzset{bgplaq/.style={fill=lightgray!30!white}}
\tikzset{bggplaq/.style={fill=lightgray!70!white}}
\tikzset{bline/.style={line width=0.8mm}}
\tikzset{fline/.style={}}
\tikzset{gbline/.style={line width=0.8mm,draw=black!35!green}}
\tikzset{gline/.style={line width=0.3mm,draw=black!35!green}}
\tikzset{bbline/.style={line width=0.8mm,draw=black!35!blue}}
\tikzset{bline/.style={line width=0.3mm,draw=black!35!blue}}
\tikzset{wbline/.style={line width=0.6mm,draw=lightgray!10!white}}
\tikzset{wline/.style={draw=lightgray!10!white}}
\tikzset{part/.style={thick,solid,circle,fill=black,draw=black,inner sep=1.8pt,outer sep=3pt}}
\tikzset{hole/.style={thick,solid,circle,draw=black,fill=white,inner sep=1.8pt,outer sep=3pt}}
\tikzset{arrow/.style={postaction={decorate,thick,decoration={markings,mark = at position #1 with {\arrow{>}}}}},arrow/.default=0.5}
\def\cellbb{\useasboundingbox (-0.5,-0.5) rectangle ++(1,1); \draw[bgplaq] (-0.5,-0.5) rectangle ++(1,1);}
\tikzset{cross/.style={postaction={decorate,thick,decoration={markings,mark = at position #1 with{\draw (-2pt,-2pt) -- (2pt,2pt);\draw (2pt,-2pt) -- (-2pt,2pt);}}}}}
\tikzset{distort/.style={cm={1,0,-\slt,\sltb,(0,0)}}}
\def\slt{0.2}
\pgfmathsetmacro{\sltb}{sqrt(1-\slt*\slt)}
\newcommand\light[2]{\draw[fline,bgplaq] (#1,#2) -- (#1+1,#2) -- (#1+1,#2+1) -- (#1,#2+1) -- (#1,#2);}
\newcommand\dark[2]{\draw[fline,bggplaq] (#1,#2) -- (#1+1,#2) -- (#1+1,#2+1) -- (#1,#2+1) -- (#1,#2);}
\newcommand\bdark[2]{\draw[fline,bggplaq] (#1,#2) -- (#1+0.5,#2) -- (#1+0.5,#2+1) -- (#1,#2+1) -- (#1,#2);}
\newcommand\blight[2]{\draw[fline,bgplaq] (#1,#2) -- (#1+0.5,#2) -- (#1+0.5,#2+1) -- (#1,#2+1) -- (#1,#2);}
\newcommand\bbull[3]{\filldraw[fill=black!35!blue, draw=black] (#1,#2) circle (#3cm);}
\newcommand\gbull[3]{\filldraw[fill=black!35!green, draw=black] (#1,#2) circle (#3cm);}
\newcommand\ebull[3]{\draw[fill=white,draw=black] (#1,#2) circle (#3cm);}
\def\part{\begin{tikzpicture}[scale=0.5,baseline=-1.7] \gbull{0}{0}{0.12}; \end{tikzpicture}}
\def\hole{\begin{tikzpicture}[scale=0.5,baseline=-1.7] \ebull{0}{0}{0.12}; \end{tikzpicture}}
\def\bpart{\begin{tikzpicture}[scale=0.5,baseline=-1.7] \bbull{0}{0}{0.12}; \end{tikzpicture}}
\newcommand{\ddiagdot}{\mathbin{\rotatebox[origin=c]{45}{$\cdots$}}}
\newcommand{\udiagdot}{\mathbin{\rotatebox[origin=c]{-45}{$\cdots$}}}
\newcommand{\bra}[1]{\left\langle #1\right|}
\newcommand{\ket}[1]{\left|#1\right\rangle}
\newcommand{\pf}{\mathop{\rm Pf}}
\renewcommand{\b}[1]{\bar{#1}}
\renewcommand{\leq}{\leqslant}
\renewcommand{\geq}{\geqslant}
\renewcommand\ss{\scriptstyle}
\newcommand\sss{\scriptscriptstyle}
\renewcommand\succeq{\succcurlyeq}
\def\fs{\footnotesize}
\def\phid{\phi^\dagger}
\def\red{\color{dred}}
\definecolor{dred}{RGB}{176,0,0}
\def\blue{\color{black!35!blue}}
\newtheorem{defn}{Definition}
\newtheorem{lem}{Lemma}
\newtheorem{thm}{Theorem}
\newtheorem{cor}{Corollary}
\newtheorem{rmk}{Remark}
\title[Refined Cauchy/Littlewood identities and six-vertex model partition functions III]
{Refined Cauchy/Littlewood identities and six-vertex model partition functions: III. Deformed bosons}
\author{M.~Wheeler}
\address{Michael Wheeler, School of Mathematics and Statistics, University of Melbourne, Parkville, Victoria 3010, Australia}
\email{mwheeler@ms.unimelb.edu.au}
\author{P.~Zinn-Justin}
\address{Paul Zinn-Justin, Laboratoire de Physique Th\'eorique et Hautes \'Energies, CNRS UMR 7589 and Universit\'e Pierre et Marie Curie (Paris 6), 4 place Jussieu, 75252 Paris cedex 05, France}
\email{pzinn@lpthe.jussieu.fr}
\keywords{Cauchy and Littlewood identities, symmetric functions, alternating sign matrices, six-vertex model}
\thanks{The authors are supported by ARC grant DP140102201 and ERC grant 278124 ``LIC''.
They would like to acknowledge hospitality and support from the Galileo Galilei Institute,
where part of this work was carried out during the program ``Statistical Mechanics, Integrability and Combinatorics".}
\begin{document}

\begin{abstract}
We study Hall--Littlewood polynomials using an integrable lattice model of $t$-deformed bosons. Working with row-to-row transfer matrices, we review the construction of Hall--Littlewood polynomials (of the $A_n$ root system) within the framework of this model. Introducing appropriate double-row transfer matrices, we extend this formalism to 
Hall--Littlewood polynomials based on the $BC_n$ root system, and obtain a new combinatorial formula for them. We then apply our methods to prove a series of refined Cauchy and Littlewood identities involving Hall--Littlewood polynomials. The last two of these identities are new, and relate infinite sums over hyperoctahedrally symmetric 
Hall--Littlewood polynomials with partition functions of the six-vertex model on finite domains.
\end{abstract}

\maketitle

\section{Introduction}

This paper is a continuation of the work initiated in \cite{bw,bwz-j}, where a number of identities involving 
Hall--Littlewood polynomials were conjectured, and in some cases proved. These identities generalize Cauchy and Littlewood formulae \cite{mac}, in which an infinite sum over Hall--Littlewood polynomials on the left hand side is equated with a finite product on the right hand side. The generalization is characterized by {\bf 1.} A small modification of the summand on the left hand side, often featuring a refining parameter; {\bf 2.} The right hand side being replaced with a partition function of the six-vertex model on some finite domain, rather than a simple product as in the original identity. In the current paper we give a method which allows us to prove all such identities in a unified way, and to obtain new ones.

The new feature in this paper, compared with our earlier ones \cite{bw,bwz-j}, is that we use an integrable model of deformed bosons as a tool for studying Hall--Littlewood polynomials. We refer to this model as the {\it $t$-boson model,} with $t$ playing the role of a deformation parameter\footnote{In the literature this model is normally called the {\it $q$-boson model,} but considering that the deformation parameter of the model is the same as the one-parameter deformation of Schur to Hall--Littlewood polynomials, we find that it makes more sense to name it the $t$-boson model. This also avoids confusion with the $q$ parameter from Macdonald polynomials.}. The $t$-boson model was introduced in \cite{bb} as a discretization of the Bose gas model, and studied in the framework of the algebraic Bethe Ansatz (see \cite{bik}, and references therein). Its relation to symmetric functions was discovered by Tsilevich in \cite{tsi}, where it was shown that the off-shell wavefunctions of the model are equal to Hall--Littlewood polynomials. The model has also been studied on the semi-infinite lattice, with a boundary term in its Hamiltonian, leading to connections with hyperoctahedrally symmetric Hall--Littlewood polynomials \cite{vde2,vde1}. Quite recently, Borodin has studied a further one-parameter deformation of the model and used it to define a family of rational symmetric functions which generalize Hall--Littlewood polynomials \cite{bor} (more precisely, the framework adopted in \cite{bor} is that of higher-spin vertex models \cite{man}, rather than bosonic lattice models). More recently, higher rank solutions of the $RLL$ equation (involving commuting copies of the $t$-boson algebra) have finally permitted Macdonald polynomials to be expressed in the setting of quantum integrable models \cite{cdgw,dgw}.

The $t$-boson model is a powerful device for proving Cauchy and Littlewood identities of the type studied in \cite{bw,bwz-j}. The main feature making it adaptable to this task is that its integrability comes from the $R$ matrix of the six-vertex model. This means that certain expressions involving Hall--Littlewood polynomials can be related, using only elementary steps within the Yang--Baxter algebra of the model, with partition functions of the six-vertex model. This straightforward approach suffices as a general technique to derive all the Cauchy and Littlewood identities in this paper, and the steps can be summarized as follows: 

{\bf 1.} Write down a partition function in the $t$-boson model, which can be identified with the left hand side of the identity that we wish to prove (an infinite sum over 
Hall--Littlewood polynomials). We refer to this as a {\it bosonic} partition function;

 {\bf 2.} Multiply the bosonic partition function (from its right edge) by a product of $R$ matrices of the six-vertex model, in such a way that it remains invariant. This invariance is easily deduced using freezing arguments. The vertices thus introduced will be called an {\it auxiliary lattice,} since each lattice line corresponds with an auxiliary vector space in the $t$-boson model. In performing this step, we obtain a non-trivial bosonic partition function which is attached to a trivial auxiliary lattice;
 
 {\bf 3.} Use the integrability of the $t$-boson model (the intertwining or $RLL$ equation) to relocate all $R$ matrices so that they multiply the bosonic partition function from its left edge. This repositioning of the auxiliary lattice has two consequences. First, it causes bosonic partition function to trivialize, so that it produces only an overall multiplicative factor. Second, the auxiliary lattice becomes a non-trivial partition function in the six-vertex model;
 
 {\bf 4.} Evaluate the resulting six-vertex model partition function explicitly as a determinant/Pfaffian (if possible). Three of the partition functions that we consider admit such an evaluation, but in our final two identities we obtain partition functions which do not seem to be expressible via a simple formula. In this way, we obtain the right hand side of the identity that we wish to prove. 

Our paper is split over seven sections. In Section \ref{sec:model} we define the $t$-boson model, identify its space of states with the space of partitions, and explain its integrability via the six-vertex model. We also discuss integrable boundary conditions (a solution of the reflection equation), which is necessary in our study of the hyperoctahedrally symmetric Hall--Littlewood polynomials. In Section \ref{sec:HL} we review the result of \cite{tsi}, expressing Hall--Littlewood polynomials as matrix elements in the $t$-boson model, and reinterpret this at the level of lattice paths. We shall also derive another known expression for Hall--Littlewood polynomials, as a sum over the symmetric group \cite{mac}, using the integrability of the $t$-boson model and working in the $F$ basis \cite{ms,abfr}. We then repeat these steps for Hall--Littlewood polynomials with hyperoctahedral symmetry, using the integrable boundaries of Section \ref{sec:model}. In this way we obtain a new realization of these polynomials in terms of lattice paths (which appears to be simpler than the recently obtained branching rule for Koornwinder polynomials \cite{vde3}), and recover a known sum formula originally due to Venkateswaran \cite{ven}.

The remainder of the paper deals with Cauchy and Littlewood identities for Hall--Littlewood polynomials. In Section \ref{sec:cauchy} we prove the refined Cauchy identity
\begin{multline}
\label{1}
\sum_{\lambda}
\prod_{i=1}^{m_0(\lambda)}
(1 - u t^{i})
b_{\lambda}(t)
P_{\lambda}(x_1,\dots,x_n;t)
P_{\lambda}(y_1,\dots,y_n;t)
=
\mathcal{Z}_{\rm DW}(x_1,\dots,x_n;y_1,\dots,y_n;t;u)
\\
=
\frac{\prod_{i,j=1}^{n} (1- t x_i y_j)}
{\prod_{1 \leq i<j \leq n} (x_i-x_j)(y_i-y_j)}
\det_{1 \leq i,j \leq n}
\left[
\frac{1-ut + (u-1)t x_i y_j}{(1-x_i y_j) (1-t x_i y_j)}
\right]
\end{multline}
where the sum is taken over all partitions of length\footnote{In this paper we shall consider parts of size zero as contributing to the length of a partition. For example, $\lambda = (5,4,2,0,0)$ shall have length $\ell(\lambda) = 5$. While this convention is non-standard, it allows us to place parts of size zero on a common footing with non-zero parts, which is very convenient for our purposes.} $\ell(\lambda) = n$, $P_{\lambda}(x_1,\dots,x_n;t)$ denotes a Hall--Littlewood polynomial, and
\begin{align*}
b_{\lambda}(t) = \prod_{j=1}^{\infty} \prod_{i=1}^{m_j(\lambda)} (1-t^i),
\quad
m_j(\lambda) = \#\{k: \lambda_k = j\},
\quad
m_0(\lambda) = \ell(\lambda) - \sum_{j=1}^{\infty} m_j(\lambda),
\end{align*}
using the method outlined above. This identity is the $q=0$ specialization of an analogous result for Macdonald polynomials, due to Warnaar \cite{war}. The $u=1$ specialization of \eqref{1} was already considered in \cite{bw}, and in \cite{bwz-j} an independent proof was obtained using the Izergin--Korepin technique. As an interesting by-product of the method adopted in the present paper, we obtain a realization of the right hand side of \eqref{1} as a ``hybrid'' partition function $\mathcal{Z}_{\rm DW}$ of the six-vertex model with an extra ``bosonic'' column. By setting $u=1$, this hybrid partition function reduces precisely to the domain wall partition function \cite{kor,ize}.

In Section \ref{sec:little} we prove the refined Littlewood identity
\begin{multline}
\label{2}
\sum_{\lambda' \ {\rm even}}
\prod_{i=1}^{m_0(\lambda)/2}
(1-u t^{2i-1})
b^{\rm ev}_{\lambda}(t)
P_{\lambda}(x_1,\dots,x_{2n};t)
=
\mathcal{Z}_{\rm OS}(x_1,\dots,x_{2n};t;u)
\\
=
\prod_{1 \leq i<j \leq 2n}
\frac{(1-t x_i x_j)}{(x_i - x_j)}
\pf_{1\leq i < j \leq 2n}
\left[
\frac{(x_i - x_j) (1-ut + (u-1)t x_i x_j)}
{(1-x_i x_j) (1-t x_i x_j)}
\right]
\end{multline}
where the sum is over all partitions with $\ell(\lambda)=2n$, whose conjugate $\lambda'$ has even parts (equivalently all part-multiplicities $m_j(\lambda)$ are even), and 
\begin{align*}
b^{\rm ev}_{\lambda}(t) = \prod_{j=1}^{\infty} \prod_{i=1}^{m_j(\lambda)/2} (1-t^{2i-1}),
\end{align*}
by adapting our general method to Littlewood-type sums. The $u=1$ case of \eqref{2} was conjectured in \cite{bw}, before it was proved for general $u$ in \cite{bwz-j}, again by Izergin--Korepin techniques. A $q$ analogue of \eqref{2} was also conjectured in \cite{bwz-j}, and proved very shortly afterward by Rains in \cite{rai2}. Similarly to in Section \ref{sec:cauchy}, we are able to interpret the right hand side of \eqref{2} as a hybrid partition function $\mathcal{Z}_{\rm OS}$ in the six-vertex model. When we specialize $u=1$, this hybrid partition function reduces to a partition function of off-diagonally symmetric alternating sign matrices \cite{kup2}.

In Section \ref{sec:refl-cauchy} we turn to Hall--Littlewood polynomials 
$K_{\lambda}(x_1^{\pm1},\dots,x_n^{\pm1};t)$ with hyperoctahedral symmetry\footnote{We will use the terms ``hyperoctahedral'' and ``$BC_n$'' Hall--Littlewood polynomials interchangeably throughout the paper.}, and prove the reflecting Cauchy identity
\begin{multline}
\label{3}
\sum_{\lambda} 
\prod_{i=1}^{m_0(\lambda)} 
(1-t^i)
b_{\lambda}(t)
K_{\lambda}(x_1^{\pm1},\dots,x_n^{\pm1};t) 
P_{\lambda}(y_1,\dots,y_n;t) 
=
\mathcal{Z}_{\rm U}(x_1^{\pm1},\dots,x_n^{\pm1};y_1,\dots,y_n;t)
\\
=
\frac{
\prod_{i,j=1}^{n} 
(1-t x_i y_j) (1-t \b{x}_i y_j) 
}
{
\prod_{1\leq i<j \leq n} (x_i-x_j) (y_i-y_j)(1 - \b{x}_i \b{x}_j)(1 - t y_i y_j)
}
\det_{1\leq i,j \leq n}
\left[
\frac{(1-t)}{(1- x_i y_j)(1-\b{x}_i y_j)(1-t x_i y_j)(1-t \b{x}_i y_j)}
\right]
\end{multline}
where the sum is over all partitions with $\ell(\lambda) = n$, and $\b{x} := 1/x$, which was conjectured in \cite{bw}. A $u$ refinement of \eqref{3}, featuring ``lifted'' Hall--Littlewood polynomials (lifted Koornwinder polynomials \cite{rai} at $q=0$), was subsequently conjectured in \cite{bwz-j}. Although we hoped this more general conjecture would be accessible by the techniques of this paper, we are unable to prove it for the moment, as we have not succeeded to write lifted Hall--Littlewood polynomials as matrix products in the $t$-boson model. Since the refining parameter is absent from \eqref{3}, we do not have an interpretation of the right hand side as a hybrid partition function in the sense of the previous two examples. As written, the right hand side is equal to the partition function $\mathcal{Z}_{\rm U}$ of the six-vertex model with a reflecting or U-turn boundary \cite{tsu}.

In Sections \ref{sec:doub-refl} and \ref{sec:refl-little} we present two new identities involving hyperoctahedrally symmetric Hall--Littlewood polynomials. The first is the doubly reflecting Cauchy identity
\begin{align}
\label{4}
\sum_{\lambda}
z^{|\lambda|} 
\prod_{i=1}^{m_0(\lambda)}
(1-t^i)
b_{\lambda}(t)
K_{\lambda}(x_1^{\pm1},\dots,x_n^{\pm1};t) 
K_{\lambda}(y_1^{\pm1},\dots,y_n^{\pm1};t) 
=
\mathcal{Z}_{\rm UU}(x_1^{\pm1},\dots,x_n^{\pm1};y_1^{\pm1},\dots,y_n^{\pm1};t;z),
\end{align}
summed over partitions with $\ell(\lambda) = n$, and the second is the reflecting Littlewood identity
\begin{align}
\label{5}
\sum_{\lambda'\ {\rm even}}
z^{|\lambda|/2} 
\prod_{i=1}^{m_0(\lambda)/2}
(1-t^{2i-1})
b^{\rm ev}_{\lambda}(t)
K_{\lambda}(x_1^{\pm1},\dots,x_{2n}^{\pm1};t)
=
\mathcal{Z}_{\rm UO}(x_1^{\pm1},\dots,x_{2n}^{\pm1};t;z).
\end{align}  
summed over partitions with $\ell(\lambda) = 2n$. These identities are distinct from the earlier ones in the paper in two ways. First, they contain another refining parameter $z$, which cannot be absorbed into the $x$ or $y$ variables due to the inhomogeneity of the 
hyperoctahedral Hall--Littlewood polynomials. This parameter serves to regularize the left hand side of \eqref{4} and \eqref{5}, since neither sum is well defined (even as a formal power series) at $z=1$. Second, the right hand sides of \eqref{4} and \eqref{5} do not have a simple evaluation in terms of determinants or Pfaffians, although they are very closely related to partition functions $\mathcal{Z}_{\rm UU}$ and $\mathcal{Z}_{\rm UO}$ corresponding with certain other symmetry classes of ASMs \cite{kup2}.

Finally, in Appendix \ref{app:fbasis} we review some basic facts about factorizing $F$ matrices \cite{ms,abfr}, while in Appendix \ref{app:lagrange} we recall elementary Lagrange interpolation techniques which allow us to evaluate the various partition functions appearing in the paper.

\section{Integrable model of deformed bosons}
\label{sec:model}

\subsection{Definition of physical space and partition states}
 
We consider a semi-infinite one-dimensional lattice, with sites labelled by non-negative integers. Each site $i \geq 0$ is occupied by $m_i$ particles, and these occupation numbers can take any non-negative integer value. We construct a vector space $V$ by taking linear combinations of all possible fillings of this one-dimensional lattice:
\begin{align}
\label{space}
V 
=
{\rm Span} 
\left\{
\ket{m_0}_0
\otimes
\ket{m_1}_1
\otimes
\ket{m_2}_2
\otimes
\cdots
\right\},
\qquad
m_i \geq 0,\ \forall\ i \geq 0.
\end{align}
The vector space $V$ will be the representation space for the operators which we subsequently consider. The basis elements $\bigotimes_{i=0}^{\infty} \ket{m_i}_i$ can be trivially identified with partitions. Let 
$\lambda$ be a partition with $m_i(\lambda)$ parts of size $i$, for all $i \geq 0$ (we allow partitions to have parts of size 0, and distinguish between partitions with different numbers of size 0 parts, even if they are otherwise identical). Then we define $\ket{\lambda} \in V$ as follows:
\begin{align*}
\ket{\lambda}
=
\ket{m_0(\lambda)}_0
\otimes
\ket{m_1(\lambda)}_1
\otimes
\ket{m_2(\lambda)}_2
\otimes
\cdots.
\end{align*}
This simple correspondence between partitions and states in $V$ can be interpreted as projecting a Young diagram onto the semi-infinite line. For example, we identify $\lambda = (5,3,3,1,0)$ with a filling in the following way:
\begin{center}
\begin{tikzpicture}[scale=0.6]
%
\draw (0,4) -- (5,4);
\draw (0,3) -- (5,3);
\draw (0,2) -- (3,2);
\draw (0,1) -- (3,1);
\draw (0,0) -- (1,0);
\draw (0,-1) -- (0,4);
\draw (1,0) -- (1,4);
\draw (2,1) -- (2,4);
\draw (3,1) -- (3,4);
\draw (4,3) -- (4,4);
\draw (5,3) -- (5,4);
\gbull{0}{-0.5}{0.09};
\gbull{1}{0.5}{0.09};
\gbull{3}{1.5}{0.09};\gbull{3}{2.5}{0.09};
\gbull{5}{3.5}{0.09};
\draw[dotted,arrow=0.5] (0,-1) -- (0,-2.5);
\draw[dotted,arrow=0.5] (1,0) -- (1,-2.5);
\draw[dotted,arrow=0.5] (3,1) -- (3,-2.5);
\draw[dotted,arrow=0.5] (5,3) -- (5,-2.5);
\draw[thick,arrow=1] (-0.5,-3) -- (8.5,-3);
\foreach\x in {1,...,9}{
\draw[thick] (-1.5+\x,-3) -- (-1.5+\x,-2.7);
}
\gbull{0}{-3}{0.09};
\gbull{1}{-3}{0.09};
\gbull{3}{-3}{0.09};\gbull{3}{-2.7}{0.09};
\gbull{5}{-3}{0.09};
\node at (0,-3.5) {$\sss m_0$};
\node at (1,-3.5) {$\sss m_1$};
\node at (2,-3.5) {$\sss m_2$};
\node at (3,-3.5) {$\sss m_3$};
\node at (4,-3.5) {$\sss m_4$};
\node at (5,-3.5) {$\sss m_5$};
\node at (6,-3.5) {$\sss m_6$};
\node at (7,-3.5) {$\cdots$};
\end{tikzpicture}
\end{center}
An alternative way of projecting is to first rotate the Young diagram anti-clockwise by 135\degree\ (the ``Russian'' convention for drawing partitions) and then project onto the infinite integral lattice. This approach leads to a fermionic or ``Maya diagram'' representation of a partition, which will not be considered in this work.

\subsection{Inhomogeneous physical space}

A generalization of the vector space \eqref{space}, which allows non-integer fillings at each site of the lattice, will be important in this work. One can define a vector space $V(\alpha_0,\alpha_1,\dots)$ parametrized by complex variables $\alpha_i$ as follows:
\begin{align}
\label{alpha-space}
V(\alpha_0,\alpha_1,\alpha_2,\dots) 
=
{\rm Span} 
\left\{
\ket{m_0+ \alpha_0}_0
\otimes
\ket{m_1+ \alpha_1}_1 
\otimes
\ket{m_2+ \alpha_2}_2 
\otimes
\cdots
\right\},
\qquad
m_i \in \mathbb{Z},\ \alpha_i \in \mathbb{C},\ \forall\ i \geq 0.
\end{align}
The parameters $\alpha_i$ can be regarded as shifts or inhomogeneities, and we now allow negative occupation numbers $m_i$ at each site, meaning that the basis vectors of \eqref{alpha-space} are identified with generalized partitions which admit negative part-multiplicities $m_i(\lambda)$.

In theory it would be possible to keep all parameters $\alpha_i$ general, but for our purposes it is only necessary to preserve the first of these, by setting $\alpha_0 = \alpha$ and $\alpha_i = 0$ for all $ i \geq 1$.
This slight generalization of \eqref{space} will be useful in our study of refined Cauchy and Littlewood identities, with the refining parameter being directly related to $\alpha$. 

\subsection{$t$-boson algebra and Fock representation}

One of the principal tools in this work is the $t$-boson algebra, generated freely by the elements 
$\phi,\phid$ modulo the commutation relation
\begin{align}
\label{t-boson}
\phi \phid - t \phid \phi = 1-t.
\end{align}
We adopt the Fock representation of the algebra \eqref{t-boson}, viewing $\phi$ and $\phid$ as annihilation and creation operators (respectively) on the vector space \eqref{space}. In particular, we take infinitely many commuting copies of the algebra \eqref{t-boson} (distinguished by subscripts), which act non-trivially in $i^{\rm th}$ factor of $V$ as follows:
\begin{align*}
\phi_i \ket{m}_i
=
(1-t^m)
\ket{m-1}_i,
\qquad
\phid_i \ket{m}_i
=
\ket{m+1}_i.
\end{align*}
As a special case of these relations, we have $\phi_i \ket{0}_i = 0$, ensuring the closure of \eqref{space} under the action of the $t$-boson algebra (negative occupation numbers are not possible). 

Similarly, when acting on the factors of the more general vector space \eqref{alpha-space}, we take the representation
\begin{align*}
\phi_i \ket{m+\alpha}_i
=
(1-t^{m+\alpha})
\ket{m+\alpha-1}_i,
\qquad
\phid_i \ket{m+\alpha}_i
=
\ket{m+\alpha+1}_i.
\end{align*}
Since the coefficient $(1-t^{m+\alpha})$ is generically non-vanishing, there is no highest/lowest weight vector, and we are obliged to allow $m_i \in \mathbb{Z}$ to ensure the closure of $V(\alpha_0,\alpha_1,\dots)$ under the action of the algebra.

\subsection{$R$ matrix, Yang--Baxter and unitarity relations}

Before presenting the integrable model to be studied in this paper, we give the $R$ matrix which underpins its integrability:
\begin{align}
\label{Rmat}
R_{ab}(x/y)
=
\begin{pmatrix}
\frac{1-t x/y}{1-x/y} & 0 & 0 & 0 \\
0 & t & \frac{(1-t)x/y}{1-x/y} & 0 \\
0 & \frac{1-t}{1-x/y} & 1 & 0 \\
0 & 0 & 0 & \frac{1-t x/y}{1-x/y} 
\end{pmatrix}_{ab}
\in 
{\rm End}(W_a \otimes W_b).
\end{align}
This is the $R$ matrix of the six-vertex model, in the multiplicative (trigonometric) parametrization. It acts in the tensor product of two-dimensional auxiliary spaces\footnote{Throughout the paper, we use the letter $W$ for vector spaces which are ``auxiliary'' (two-dimensional) and reserve $V$ for the bosonic (infinite-dimensional) spaces.} $W_a$ and $W_b$, and its non-zero components can be represented graphically as follows:
\begin{equation*}
\begin{tabular}{cccccc}
\begin{tikzpicture}[scale=0.6]
\draw[thick, smooth,arrow=0.25,arrow=0.75] (-1,0) -- (1,0);
\node[label={left: \fs \red $x$}] at (-1,0) {};
\draw[thick, smooth,arrow=0.25,arrow=0.75] (0,-1) -- (0,1);
\node[label={below: \fs \red $y$}] at (0,-1) {};
\end{tikzpicture}
&
\begin{tikzpicture}[scale=0.6]
\draw[thick, smooth,arrow=0.25,arrow=0.75] (1,0) -- (-1,0);
\node[label={left: \fs \red $x$}] at (-1,0) {};
\draw[thick, smooth,arrow=0.25,arrow=0.75] (0,1) -- (0,-1);
\node[label={below: \fs \red $y$}] at (0,-1) {};
\end{tikzpicture}
&
\begin{tikzpicture}[scale=0.6]
\draw[thick, smooth,arrow=0.25,arrow=0.75] (-1,0) -- (1,0);
\node[label={left: \fs \red $x$}] at (-1,0) {};
\draw[thick, smooth,arrow=0.25,arrow=0.75] (0,1) -- (0,-1);
\node[label={below: \fs \red $y$}] at (0,-1) {};
\end{tikzpicture}
&
\begin{tikzpicture}[scale=0.6]
\draw[thick, smooth,arrow=0.25,arrow=0.75] (1,0) -- (-1,0);
\node[label={left: \fs \red $x$}] at (-1,0) {};
\draw[thick, smooth,arrow=0.25,arrow=0.75] (0,-1) -- (0,1);
\node[label={below: \fs \red $y$}] at (0,-1) {};
\end{tikzpicture}
&
\begin{tikzpicture}[scale=0.6]
\draw[thick, smooth,arrow=0.5] (1,0) -- (0,0);
\draw[thick, smooth,arrow=0.5] (-1,0) -- (0,0);
\node[label={left: \fs \red $x$}] at (-1,0) {};
\draw[thick, smooth,arrow=0.5] (0,0) -- (0,-1);
\draw[thick, smooth,arrow=0.5] (0,0) -- (0,1);
\node[label={below: \fs \red $y$}] at (0,-1) {};
\end{tikzpicture}
&
\begin{tikzpicture}[scale=0.6]
\draw[thick, smooth,arrow=0.5] (0,0) -- (1,0);
\draw[thick, smooth,arrow=0.5] (0,0) -- (-1,0);
\node[label={left: \fs \red $x$}] at (-1,0) {};
\draw[thick, smooth,arrow=0.5] (0,-1) -- (0,0);
\draw[thick, smooth,arrow=0.5] (0,1) -- (0,0);
\node[label={below: \fs \red $y$}] at (0,-1) {};
\end{tikzpicture}
\\
\quad
$\frac{1-t z}{1-z}$
&
\quad
$\frac{1-t z}{1-z}$
&
\quad
$t$
&
\quad
$1$
&
\quad
$\frac{(1-t)z}{1-z}$
&
\quad
$\frac{1-t}{1-z}$
\end{tabular}
\end{equation*}
where we abbreviate the ratio of horizontal and vertical spectral parameters by $x/y \equiv z$. Since we will ultimately be interested in lattice paths formed by particles, it is more convenient for our purposes to draw these six vertices as
\begin{equation}
\label{six-v}
\begin{tabular}{cccccc}
\begin{tikzpicture}[scale=0.6]
\draw[thick, smooth] (-1,0) -- (1,0);
\node[label={left: \fs \red $x$}] at (-1,0) {};
\draw[thick, smooth] (0,-1) -- (0,1);
\node[label={below: \fs \red $y$}] at (0,-1) {};
\ebull{-0.5}{0}{0.09}; \ebull{0.5}{0}{0.09};
\ebull{0}{-0.5}{0.09}; \ebull{0}{0.5}{0.09};
\end{tikzpicture}
&
\begin{tikzpicture}[scale=0.6]
\draw[thick, smooth] (1,0) -- (-1,0);
\node[label={left: \fs \red $x$}] at (-1,0) {};
\draw[thick, smooth] (0,1) -- (0,-1);
\node[label={below: \fs \red $y$}] at (0,-1) {};
\gbull{-0.5}{0}{0.09}; \gbull{0.5}{0}{0.09};
\gbull{0}{-0.5}{0.09}; \gbull{0}{0.5}{0.09};
\end{tikzpicture}
&
\begin{tikzpicture}[scale=0.6]
\draw[thick, smooth] (-1,0) -- (1,0);
\node[label={left: \fs \red $x$}] at (-1,0) {};
\draw[thick, smooth] (0,1) -- (0,-1);
\node[label={below: \fs \red $y$}] at (0,-1) {};
\ebull{-0.5}{0}{0.09}; \ebull{0.5}{0}{0.09};
\gbull{0}{-0.5}{0.09}; \gbull{0}{0.5}{0.09};
\end{tikzpicture}
&
\begin{tikzpicture}[scale=0.6]
\draw[thick, smooth] (1,0) -- (-1,0);
\node[label={left: \fs \red $x$}] at (-1,0) {};
\draw[thick, smooth] (0,-1) -- (0,1);
\node[label={below: \fs \red $y$}] at (0,-1) {};
\gbull{-0.5}{0}{0.09}; \gbull{0.5}{0}{0.09};
\ebull{0}{-0.5}{0.09}; \ebull{0}{0.5}{0.09};
\end{tikzpicture}
&
\begin{tikzpicture}[scale=0.6]
\draw[thick, smooth] (1,0) -- (0,0);
\draw[thick, smooth] (-1,0) -- (0,0);
\node[label={left: \fs \red $x$}] at (-1,0) {};
\draw[thick, smooth] (0,0) -- (0,-1);
\draw[thick, smooth] (0,0) -- (0,1);
\node[label={below: \fs \red $y$}] at (0,-1) {};
\ebull{-0.5}{0}{0.09}; \gbull{0.5}{0}{0.09};
\gbull{0}{-0.5}{0.09}; \ebull{0}{0.5}{0.09};
\end{tikzpicture}
&
\begin{tikzpicture}[scale=0.6]
\draw[thick, smooth] (0,0) -- (1,0);
\draw[thick, smooth] (0,0) -- (-1,0);
\node[label={left: \fs \red $x$}] at (-1,0) {};
\draw[thick, smooth] (0,-1) -- (0,0);
\draw[thick, smooth] (0,1) -- (0,0);
\node[label={below: \fs \red $y$}] at (0,-1) {};
\gbull{-0.5}{0}{0.09}; \ebull{0.5}{0}{0.09};
\ebull{0}{-0.5}{0.09}; \gbull{0}{0.5}{0.09};
\end{tikzpicture}
\\
\quad
$\frac{1-t z}{1-z}$
&
\quad
$\frac{1-t z}{1-z}$
&
\quad
$t$
&
\quad
$1$
&
\quad
$\frac{(1-t)z}{1-z}$
&
\quad
$\frac{1-t}{1-z}$
\end{tabular}
\end{equation}
where we have replaced a left or upward arrow with a hole, and a right or downward arrow with a particle. Throughout the paper, we use red labels not only to indicate the spectral parameter on a given line, but also to specify the orientation of that line. In all cases, a line is oriented {\it towards} its spectral parameter label.

The $R$ matrix \eqref{Rmat} satisfies the Yang--Baxter equation (graphical version in parentheses)
\begin{align*}
R_{ab}(x/y) R_{ac}(x/z) R_{bc}(y/z)
=
R_{bc}(y/z) R_{ac}(x/z) R_{ab}(x/y),
\quad\quad
\left(
\begin{tikzpicture}[scale=0.7,baseline=0.5cm]
\draw[smooth] (0,1.5) arc (-155:-90:2);
\draw[smooth] (0,0) arc (155:90:2);
\draw[smooth] ({2*cos(25)},{1.5-(2-2*sin(25))})--({2*cos(25)+0.5},{1.5-(2-2*sin(25))});
\draw[smooth] ({2*cos(25)},{(2-2*sin(25))})--({2*cos(25)+0.5},{(2-2*sin(25))});
\draw[smooth] (1.5,-0.25)--(1.5,1.75);
\node[left] at (0,1.5) {\fs \red $x$};
\node[left] at (0,0) {\fs \red $y$};
\node[below] at (1.5,-0.25) {\fs \red $z$};
\end{tikzpicture}
\quad
=
\quad
\begin{tikzpicture}[scale=0.7,baseline=0.5cm]
\draw[smooth] (10,1.5) arc (-25:-90:2);
\draw[smooth] (10,0) arc (25:90:2);
\draw[smooth] ({10-2*cos(25)},{1.5-(2-2*sin(25))})--({10-2*cos(25)-0.5},{1.5-(2-2*sin(25))});
\draw[smooth] ({10-2*cos(25)},{(2-2*sin(25))})--({10-2*cos(25)-0.5},{(2-2*sin(25))});
\draw[smooth] (8.5,-0.25)--(8.5,1.75);
\node[left] at ({10-2*cos(25)-0.5},{1.5-(2-2*sin(25))}) {\fs \red $y$};
\node[left] at ({10-2*cos(25)-0.5},{(2-2*sin(25))}) {\fs \red $x$};
\node[below] at (8.5,-0.25) {\fs \red $z$};
\end{tikzpicture}
\right),
\end{align*}
and the unitarity relation
\begin{align}
\label{unitarity}
R_{ab}(x/y)
R_{ba}(y/x)
=
\frac{(y-tx)(x-ty)}{(y-x)(x-y)},
\quad\quad
\left(
\begin{tikzpicture}[scale=0.7,baseline=0.5cm]
\draw[smooth] (0,1.5) arc (-155:-90:2);
\draw[smooth] (0,0) arc (155:90:2);
\draw[smooth] ({2*cos(25)+1.8},1.5) arc (-25:-90:2);
\draw[smooth] ({2*cos(25)+1.8},0) arc (25:90:2);
\node[left] at (0,1.5) {\fs \red $x$};
\node[left] at (0,0) {\fs \red $y$};
\end{tikzpicture}
\quad
=
{\fs \frac{(y-tx)(x-ty)}{(y-x)(x-y)} }
\times
\begin{tikzpicture}[scale=0.7,baseline=0.5cm]
\draw[smooth] (0,1.2) -- (2,1.2);
\draw[smooth] (0,0.3) -- (2,0.3);
\node[left] at (0,1.2) {\fs \red $x$};
\node[left] at (0,0.3) {\fs \red $y$};
\end{tikzpicture}
\right).
\end{align}

\subsection{Definition of $L$ operators}

An integrable model of $t$-bosons can be constructed from the $L$ matrix
\begin{align}
\label{Lmat}
L_a(x)
=
L_a(x,\phi,\phid)
=
\begin{pmatrix}
x & x \phid \\
\phi & 1
\end{pmatrix}_a
=
\begin{tikzpicture}[baseline=0]
\node at (-0.7,0) {$\ss \red x$};
\dark{-0.5}{-0.5};
\end{tikzpicture}
\end{align}
which we represent graphically by the tile in the final equality. It is an operator acting in $W_a \otimes V$, and its four components in the auxiliary space $W_a$ are denoted by assigning holes/particles to the vertical edges of the tile:
\begin{equation}
\label{tiles}
\begin{tikzpicture}[baseline=0]
\matrix[column sep={2.5cm,between origins},row sep=0.25cm]{
\dark{-0.5}{-0.5}; 
\draw[gbline] (0,-0.5) node[below] {$\ss m$} -- (0,0.5) node[above] {$\ss m$};
\ebull{-0.5}{0}{0.05};\ebull{0.5}{0}{0.05};
& 
\dark{-0.5}{-0.5}; 
\draw[gbline] (0,-0.5) node[below] {$\ss m+1$} -- (0,0.5) node[above] {$\ss m$};
\draw[gline] (0.05,-0.5) -- (0.05,0) -- (0.5,0);
\ebull{-0.5}{0}{0.05};\gbull{0.5}{0}{0.05};
& 
\dark{-0.5}{-0.5}; 
\draw[gbline] (0,-0.5) node[below] {$\ss m-1$} -- (0,0.5) node[above] {$\ss m$};
\draw[gline] (-0.05,0.5) -- (-0.05,0) -- (-0.5,0);
\gbull{-0.5}{0}{0.05};\ebull{0.5}{0}{0.05};
& 
\dark{-0.5}{-0.5};
\draw[gbline] (0,-0.5) node[below] {$\ss m$} -- (0,0.5) node[above] {$\ss m$};
\draw[gline] (-0.5,0) -- (0.5,0);
\gbull{-0.5}{0}{0.05};\gbull{0.5}{0}{0.05};
\\
\node{$x$}; & \node{$x$}; & \node{$(1-t^m)$}; & \node{$1$};
\\
};
\end{tikzpicture}
\end{equation}
where we write the associated Boltzmann weight under each tile. The $L$ matrix satisfies the {\it intertwining equation,} which is essential to the derivation of all Cauchy and Littlewood identities in this paper: 
\begin{align}
\label{rll}
R_{ab}(x/y) L_a(x) L_b(y)
=
L_b(y) L_a(x) R_{ab}(x/y)
\quad
\text{in End}(W_a \otimes W_b \otimes V),
\quad
\left(
\begin{tikzpicture}[baseline=0,scale=0.7]
\draw[thick] (0,0.5) -- (1,-0.5) -- (2,-0.5);
\draw[thick] (0,-0.5) -- (1,0.5) -- (2,0.5);
\dark{1}{-1};\dark{1}{0};
\node at (-0.3,0.5) {$\ss \red x$};
\node at (-0.3,-0.5) {$\ss \red y$};
\end{tikzpicture}
\ \  
=
\
\begin{tikzpicture}[baseline=-0,scale=0.7]
\draw[thick] (-1,0.5) -- (0,0.5) -- (1,-0.5);
\draw[thick] (-1,-0.5) -- (0,-0.5) -- (1,0.5);
\dark{-1}{-1};\dark{-1}{0};
\node at (-1.2,0.5) {$\ss \red x$};
\node at (-1.2,-0.5) {$\ss \red y$};
\end{tikzpicture}
\right),
\end{align}
where $R_{ab}(x/y)$ is given by \eqref{Rmat}.

It will be convenient to define an $L^{*}$ matrix $L^{*}(x)$ which is trivially related to the first:
\begin{align}
\label{Lstar}
L^{*}_a(x)
=
x L_a(\b{x})
=
\begin{pmatrix}
1 & \phid \\
x \phi & x
\end{pmatrix}_a
=
\begin{tikzpicture}[baseline=0]
\node at (-0.7,0) {$\ss \red \b{x}$};
\light{-0.5}{-0.5};
\end{tikzpicture}
\end{align}
where $\b{x} := 1/x$, a notation that we use ubiquitously throughout the paper. The difference between the $L$ matrices \eqref{Lmat} and \eqref{Lstar} is of course only superficial, but the redistribution of $x$ weights turns out to be crucial. To emphasize this difference in the Boltzmann weights, we prescribe the entries of \eqref{Lstar} their own graphical notation:
\begin{equation}
\label{tiles*}
\begin{tikzpicture}[baseline=0]
\matrix[column sep={2.5cm,between origins},row sep=0.25cm]{
\cellbb 
\draw[gbline] (0,-0.5) node[below] {$\ss m$} -- (0,0.5) node[above] {$\ss m$};
\ebull{-0.5}{0}{0.05};\ebull{0.5}{0}{0.05};
& 
\cellbb 
\draw[gbline] (0,-0.5) node[below] {$\ss m+1$} -- (0,0.5) node[above] {$\ss m$};
\draw[gline] (0.05,-0.5) -- (0.05,0) -- (0.5,0);
\ebull{-0.5}{0}{0.05};\gbull{0.5}{0}{0.05};
& 
\cellbb 
\draw[gbline] (0,-0.5) node[below] {$\ss m-1$} -- (0,0.5) node[above] {$\ss m$};
\draw[gline] (-0.05,0.5) -- (-0.05,0) -- (-0.5,0);
\gbull{-0.5}{0}{0.05};\ebull{0.5}{0}{0.05};
& 
\cellbb
\draw[gbline] (0,-0.5) node[below] {$\ss m$} -- (0,0.5) node[above] {$\ss m$};
\draw[gline] (-0.5,0) -- (0.5,0);
\gbull{-0.5}{0}{0.05};\gbull{0.5}{0}{0.05};
\\
\\
\node{$1$}; & \node{$1$}; & \node{$(1-t^m)x$}; & \node{$x$}; 
\\
};
\end{tikzpicture}
\end{equation}
using a lighter shading for these tiles. In view of the simple relation between the $L^{*}$ matrix and the original one \eqref{Lmat}, we immediately deduce the intertwining equations
\begin{align}
\label{rll*}
R_{ab}(xy) L_a(x) L^{*}_b(y)
&=
L^{*}_b(y) L_a(x) R_{ab}(xy),
\quad
\left(
\begin{tikzpicture}[baseline=0,scale=0.7]
\draw[thick] (0,0.5) -- (1,-0.5) -- (2,-0.5);
\draw[thick] (0,-0.5) -- (1,0.5) -- (2,0.5);
\dark{1}{-1};\light{1}{0};
\node at (-0.3,0.5) {$\ss \red x$};
\node at (-0.3,-0.5) {$\ss \red \b{y}$};
\end{tikzpicture}
\ \  
=
\
\begin{tikzpicture}[baseline=-0,scale=0.7]
\draw[thick] (-1,0.5) -- (0,0.5) -- (1,-0.5);
\draw[thick] (-1,-0.5) -- (0,-0.5) -- (1,0.5);
\light{-1}{-1};\dark{-1}{0};
\node at (-1.2,0.5) {$\ss \red x$};
\node at (-1.2,-0.5) {$\ss \red \b{y}$};
\end{tikzpicture}
\right),
\end{align}

\begin{align}
\label{rl*l*}
R_{ab}(y/x) L^{*}_a(x) L^{*}_b(y)
=
L^{*}_b(y) L^{*}_a(x) R_{ab}(y/x),
\quad
\left(
\begin{tikzpicture}[baseline=0,scale=0.7]
\draw[thick] (0,0.5) -- (1,-0.5) -- (2,-0.5);
\draw[thick] (0,-0.5) -- (1,0.5) -- (2,0.5);
\light{1}{-1};\light{1}{0};
\node at (-0.3,0.5) {$\ss \red \b{x}$};
\node at (-0.3,-0.5) {$\ss \red \b{y}$};
\end{tikzpicture}
\ \  
=
\
\begin{tikzpicture}[baseline=-0,scale=0.7]
\draw[thick] (-1,0.5) -- (0,0.5) -- (1,-0.5);
\draw[thick] (-1,-0.5) -- (0,-0.5) -- (1,0.5);
\light{-1}{-1};\light{-1}{0};
\node at (-1.2,0.5) {$\ss \red \b{x}$};
\node at (-1.2,-0.5) {$\ss \red \b{y}$};
\end{tikzpicture}
\right).
\end{align}

\subsection{Row-to-row transfer matrices}

We introduce two types of row-to-row transfer matrices, obtained by taking the product of infinitely many $L$ matrices\footnote{Some care is needed to properly define such an infinite product. We note that \eqref{T} and \eqref{T*} have the same action on any finite element of $V$ as
\begin{align*}
\prod_{i=0}^{M} L_a(x,\phi_i,\phid_i) 
\prod_{j=M+1}^{\infty} \begin{pmatrix} x & 0 \\ 0 & 1 \end{pmatrix}_a
\quad
{\rm and}
\quad
\prod_{i=0}^{M} L^{*}_a(x,\phi_i,\phid_i) 
\prod_{j=M+1}^{\infty} \begin{pmatrix} 1 & 0 \\ 0 & x \end{pmatrix}_a
\end{align*}
respectively, where $M$ is chosen to be sufficiently large. So from a graphical point of view, $T_a(x)$ is the sum over all possible infinite row configurations of the tiles \eqref{tiles}, such that only the tiles \begin{tikzpicture}[scale=0.5,baseline=-0.1cm] \dark{-0.5}{-0.5};  \ebull{-0.5}{0}{0.09};\ebull{0.5}{0}{0.09}; \end{tikzpicture} and \begin{tikzpicture}[scale=0.5,baseline=-0.1cm] \dark{-0.5}{-0.5};  \gbull{-0.5}{0}{0.09};\gbull{0.5}{0}{0.09}; \end{tikzpicture} appear when sufficiently far to the right. An analogous statement applies to $T_a^{*}(x)$.}:
\begin{align}
\label{T}
T_a(x)
&=
\prod_{i=0}^{\infty}
L_a(x,\phi_i,\phid_i)
\in
{\rm End}(W_a \otimes V_0 \otimes V_1 \otimes \cdots),
\quad
\left(
T_a(x)
=
\begin{tikzpicture}[baseline=0.25cm,scale=0.7]
\node at (0.5,0.5) {$\ss \red x$};
\foreach\x in {1,...,5}{
\dark{\x}{0};
}
\node at (6.5,0.5) {$\cdots$};
\end{tikzpicture}
\ \
\right)
\\
\label{T*}
T^{*}_a(x)
&=
\prod_{i=0}^{\infty}
L^{*}_a(x,\phi_i,\phid_i)
\in
{\rm End}(W_a \otimes V_0 \otimes V_1 \otimes \cdots),
\quad
\left(
T^{*}_a(x)
=
\begin{tikzpicture}[baseline=0.25cm,scale=0.7]
\node at (0.5,0.5) {$\ss \red \b{x}$};
\foreach\x in {1,...,5}{
\light{\x}{0};
}
\node at (6.5,0.5) {$\cdots$};
\end{tikzpicture}
\ \ 
\right)
\end{align}
where we have emphasized that each $L$ matrix depends on a different copy of the 
$t$-boson algebra, and the product is ordered from left to right as the index increases. $T_a(x)$ and 
$T^{*}_a(x)$ are $2\times 2$ matrices, with operator entries which act on the physical space 
$V=\bigotimes_{i=0}^{\infty} V_i$. Note that their entries are only well defined if we assume that the spectral parameter satisfies $|x| < 1$, which ensures that terms with unbounded degree in $x$ are equal to zero. We find that
\begin{align}
\label{mat-form-dark}
T_a(x)
=
\begin{pmatrix}
0 & T_{+}(x)
\medskip
\\
0 & T_{-}(x)
\end{pmatrix}_a
=
\begin{pmatrix}
\begin{tikzpicture}[baseline=0.25cm,scale=0.5]
\node at (0.5,0.5) {$\ss \red x$};
\foreach\x in {1,...,5}{
\dark{\x}{0};
}
\ebull{1}{0.5}{0.09}; \ebull{6}{0.5}{0.09};
\node at (6.75,0.5) {$\cdots$};
\end{tikzpicture}
& 
\begin{tikzpicture}[baseline=0.25cm,scale=0.5]
\node at (0.5,0.5) {$\ss \red x$};
\foreach\x in {1,...,5}{
\dark{\x}{0};
}
\ebull{1}{0.5}{0.09}; \gbull{6}{0.5}{0.09};
\node at (6.75,0.5) {$\cdots$};
\end{tikzpicture}
\ \
\medskip
\\
\begin{tikzpicture}[baseline=0.25cm,scale=0.5]
\node at (0.5,0.5) {$\ss \red x$};
\foreach\x in {1,...,5}{
\dark{\x}{0};
}
\gbull{1}{0.5}{0.09}; \ebull{6}{0.5}{0.09};
\node at (6.75,0.5) {$\cdots$};
\end{tikzpicture} 
& 
\begin{tikzpicture}[baseline=0.25cm,scale=0.5]
\node at (0.5,0.5) {$\ss \red x$};
\foreach\x in {1,...,5}{
\dark{\x}{0};
}
\gbull{1}{0.5}{0.09}; \gbull{6}{0.5}{0.09};
\node at (6.75,0.5) {$\cdots$};
\end{tikzpicture}
\ \ 
\end{pmatrix}_a
\\ \nonumber \\
\label{mat-form-light}
T^{*}_a(x)
=
\begin{pmatrix}
T^{*}_{+}(x) & 0
\medskip
\\
T^{*}_{-}(x) & 0
\end{pmatrix}_a
=
\begin{pmatrix}
\begin{tikzpicture}[baseline=0.25cm,scale=0.5]
\node at (0.5,0.5) {$\ss \red \b{x}$};
\foreach\x in {1,...,5}{
\light{\x}{0};
}
\ebull{1}{0.5}{0.09}; \ebull{6}{0.5}{0.09};
\node at (6.75,0.5) {$\cdots$};
\end{tikzpicture}
& 
\begin{tikzpicture}[baseline=0.25cm,scale=0.5]
\node at (0.5,0.5) {$\ss \red \b{x}$};
\foreach\x in {1,...,5}{
\light{\x}{0};
}
\ebull{1}{0.5}{0.09}; \gbull{6}{0.5}{0.09};
\node at (6.75,0.5) {$\cdots$};
\end{tikzpicture}
\ \ 
\medskip
\\
\begin{tikzpicture}[baseline=0.25cm,scale=0.5]
\node at (0.5,0.5) {$\ss \red \b{x}$};
\foreach\x in {1,...,5}{
\light{\x}{0};
}
\gbull{1}{0.5}{0.09}; \ebull{6}{0.5}{0.09};
\node at (6.75,0.5) {$\cdots$};
\end{tikzpicture}
& 
\begin{tikzpicture}[baseline=0.25cm,scale=0.5]
\node at (0.5,0.5) {$\ss \red \b{x}$};
\foreach\x in {1,...,5}{
\light{\x}{0};
}
\gbull{1}{0.5}{0.09}; \gbull{6}{0.5}{0.09};
\node at (6.75,0.5) {$\cdots$};
\end{tikzpicture}
\ \ 
\end{pmatrix}_a
\end{align}
where $T_{\pm}(x)$ and $T^{*}_{\pm}(x)$ are non-trivial operator-valued power series in $x$. We define, further, the sums of matrix entries:
\begin{align*}
\mathcal{T}(x) = T_{+}(x) + T_{-}(x),
\qquad
\mathcal{T}^{*}(x) = T^{*}_{+}(x) + T^{*}_{-}(x).
\end{align*}
The vanishing conditions of certain matrix elements in \eqref{mat-form-dark} and \eqref{mat-form-light} will be very important in our subsequent proofs, since they allow us to eliminate unwanted terms.


\subsection{Boundary covector, reflection and fish equations}

In this work we will not consider the most general solution of the reflection equation, but restrict ourselves to the constant boundary covector
\begin{align*}
\langle K |_{a\b{a}}
=
\begin{pmatrix}
1 & 0
\end{pmatrix}_a
\otimes
\begin{pmatrix}
0 & 1
\end{pmatrix}_{\b{a}}
-
t
\begin{pmatrix}
0 & 1
\end{pmatrix}_a
\otimes
\begin{pmatrix}
1 & 0
\end{pmatrix}_{\b{a}},
\qquad
\langle K |_{a\b{a}}
\in
W^{*}_a \otimes W^{*}_{\b{a}}.
\end{align*}
As a 4-dimensional row-vector this has two non-zero components, which are represented by the U-turn vertices 
\begin{align*}
\bra{K}_{a\b{a}}
\begin{pmatrix}
1 \\ 0
\end{pmatrix}_{a}
\otimes
\begin{pmatrix}
0 \\ 1
\end{pmatrix}_{\b{a}}
=
\begin{tikzpicture}[scale=0.4,baseline=9]
\begin{scope}[xscale=-1,yscale=1]
\draw[smooth] ({2*sqrt(2) + 3 + 4},0) arc (-90:90:1);
\draw[smooth] ({2*sqrt(2) + 3 + 3.5},2)--({2*sqrt(2) + 3 + 4},2);
\draw[smooth] ({2*sqrt(2) + 3 + 3.5},0)--({2*sqrt(2) + 3 + 4},0);
\node at ({2*sqrt(2) + 3 + 5},1) {$\bullet$};
\ebull{{2*sqrt(2) + 6.5}}{2}{0.14};
\gbull{{2*sqrt(2) + 6.5}}{0}{0.14};
\end{scope}
\end{tikzpicture}
=
1,
\qquad
\bra{K}_{a\b{a}}
\begin{pmatrix}
0 \\ 1
\end{pmatrix}_{a}
\otimes
\begin{pmatrix}
1 \\ 0
\end{pmatrix}_{\b{a}}
=
\begin{tikzpicture}[scale=0.4,baseline=9]
\begin{scope}[xscale=-1,yscale=1]
\draw[smooth] ({2*sqrt(2) + 3 + 4},0) arc (-90:90:1);
\draw[smooth] ({2*sqrt(2) + 3 + 3.5},2)--({2*sqrt(2) + 3 + 4},2);
\draw[smooth] ({2*sqrt(2) + 3 + 3.5},0)--({2*sqrt(2) + 3 + 4},0);
\node at ({2*sqrt(2) + 3 + 5},1) {$\bullet$};
\gbull{{2*sqrt(2) + 6.5}}{2}{0.14};
\ebull{{2*sqrt(2) + 6.5}}{0}{0.14};
\end{scope}
\end{tikzpicture}
=
-t.
\end{align*}
The boundary covector satisfies the reflection equation:
\begin{align}
\label{reflection}
\langle K |_{a \b{a}} 
\langle K |_{b \b{b}}\ 
R_{\b{a} b}(1/x/y)
R_{ab}(x/y)
=
\langle K |_{b \b{b}}
\langle K |_{a \b{a}}\
R_{\b{b} a}(1/x/y)
R_{\b{b} \b{a}}(x/y),
\quad
\left(
\begin{tikzpicture}[scale=0.30,baseline=1cm]
\begin{scope}[xscale=-1,yscale=1]
\draw[smooth] (17,0)--(21,0);
\draw[smooth] (17,4)--(21,4);
\draw[smooth] (17,6)--(21,6);
\draw[smooth] (21,4) arc (-90:90:1); 
\draw[smooth] (21,0) arc (-90:90:1); 
\draw[smooth] (20,2) arc (270:180:1);
\draw[smooth] (19,8)--(19,3);
\draw[smooth] (20,2)--(21,2);
\node at (22,1) {$\bullet$};
\node at (22,5) {$\bullet$};
\node at (20,4.4) {$\ss \red \b{x}$};
\node at (20,6.4) {$\ss \red x$};
\node at (18.7,2) {$\ss \red y$};
\end{scope}
\end{tikzpicture}
\quad
=
\quad
\begin{tikzpicture}[scale=0.30,rotate=90,baseline=-1.5cm]
\begin{scope}[xscale=-1,yscale=1]
\draw[smooth] (2,2)--(6,2);
\draw[smooth] (2,4)--(6,4);
\draw[smooth] (2,8)--(6,8);
\draw[smooth] (6,2) arc (-90:90:1); 
\draw[smooth] (6,6) arc (-90:90:1);
\draw[smooth] (4,5) arc (-180:-270:1);
\draw[smooth] (4,5)--(4,0);
\draw[smooth] (5,6)--(6,6);
\node at (7,3) {$\bullet$};
\node at (7,7) {$\bullet$};
\node at (5.5,2.4) {$\ss \red \b{x}$};
\node at (5.5,4.4) {$\ss \red x$};
\node at (4,6.5) {$\ss \red \b{y}$};
\end{scope}
\end{tikzpicture}
\right)
\end{align}
The next lemma is a consequence of the reflection equation \eqref{reflection} and unitarity \eqref{unitarity}, and will be useful at several stages in the paper.

\begin{lem}
\label{lem:vanishing}
The following partition function is identically zero for all $n \geq 1$, and all possible configurations of holes/particles along the top edge of the lattice:
\begin{align}
\label{vanishing-lem}
\begin{tikzpicture}[scale=0.8,baseline=0,rotate=45]
\foreach\x in {1,...,2}{
\draw[fline] (-4.5+0.5*\x,4+0.5*\x) -- (-1+0.5*\x,0.5+0.5*\x);
\gbull{-1+0.5*\x}{0.5+0.5*\x}{0.07}
}
\begin{scope}[xscale=1,yscale=-1,yshift=-8cm]
\foreach\x in {1,...,6}{
\draw[fline] (-4.5+0.5*\x,4+0.5*\x) -- (-3+0.5*\x,2.5+0.5*\x);
}
\end{scope}
\foreach\x in {0}{
\draw[smooth] (-4+\x,4.5+\x) arc (225:45:0.353553);
\node at (-4+\x,5+\x) {\fs$\bullet$};
}
\begin{scope}[xscale=1,yscale=-1,yshift=-8cm]
\foreach\x in {0,...,2}{
\draw[smooth] (-4+\x,4.5+\x) arc (225:45:0.353553);
\node at (-4+\x,5+\x) {\fs$\bullet$};
}
\end{scope}
\node at (-4,5.5) {$\ss \red x$};
\node at (-4.5,5) {$\ss \red \b{x}$};
\node at (-4.5,3) {$\ss \red y_n$};
\node at (-4,2.5) {$\ss \red \b{y}_n$};
\node at (-3.2,1.8) {\tiny\red $\cdots$};
\node at (-2.5,1) {$\ss \red y_1$};
\node at (-2,0.5) {$\ss \red \b{y}_1$};
\node at (0.5,0.5) {$=0$};
\end{tikzpicture}
\end{align}
\end{lem}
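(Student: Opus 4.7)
The plan is to use the reflection equation~\eqref{reflection} together with the unitarity relation~\eqref{unitarity} to show that the stated partition function reduces to one in which the boundary K-matrix $\langle K|_{a\bar a}$ for the $(x,\bar x)$ pair is applied directly to the edges carrying the two prescribed particles. Since $\langle K|_{a\bar a}$ is by construction supported only on the (hole, particle) and (particle, hole) sectors, with weights $1$ and $-t$ respectively, its evaluation on the (particle, particle) state is identically zero, which gives the claim.

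I would proceed by induction on $n$. For the base case $n=1$, a single application of the reflection equation commutes the $(x,\bar x)$ K-matrix past the $(y_1,\bar y_1)$ K-matrix, introducing a pair of $R$-matrix crossings on the opposite side of the picture. These newly introduced $R$-matrices are inverses of one another in the sense of~\eqref{unitarity} and hence collapse to a scalar, reducing the diagram to the elementary one in which $\langle K|_{a\bar a}$ acts on the (particle, particle) state and therefore vanishes. For the inductive step, the same reflection-equation-plus-unitarity move peels off the outermost $(y_n,\bar y_n)$ U-turn and converts the $n$-fold picture into the $n-1$-fold one, now with the two prescribed particles carried along the $(x,\bar x)$ lines as before, so the inductive hypothesis applies.

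The main obstacle is the diagrammatic bookkeeping. After each application of the reflection equation the $R$-matrix crossings that appear on the other side must be matched precisely with adjacent crossings in order for unitarity to apply cleanly, and one must keep careful track of the spectral parameters and line orientations — especially since the picture as drawn is rotated by $45^\circ$ and the $(y_i,\bar y_i)$ U-turns are interleaved along the $(x,\bar x)$ lines. Once the correct geometric moves are identified, however, each step is an elementary consequence of the Yang--Baxter algebra of the $t$-boson model, with the vanishing ultimately coming from the single algebraic fact that $\langle K|_{a\bar a}$ annihilates the symmetric (particle, particle) state.
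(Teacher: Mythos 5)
Your proposal is correct and follows essentially the same route as the paper: the paper likewise applies the reflection equation followed by two uses of unitarity to trivialize, one at a time, each $2\times 2$ block of crossings between the $(x,\b{x})$ double line and a $(y_i,\b{y}_i)$ pair, until only the bare covector $\bra{K}_{a\b{a}}$ contracted with the (particle, particle) state remains, which vanishes. The only cosmetic differences are that the paper phrases the reduction as an iteration rather than an induction, and that after the reflection-equation move the two relocated crossings each cancel via unitarity against one of the two crossings already present on that side (rather than against each other), which does not affect the argument.
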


\begin{proof}
We consider the leftmost $2 \times 2$ block of the lattice, which is given algebraically by
\begin{align*}
\begin{tikzpicture}[scale=0.6,baseline=0.3cm,rotate=45]
\foreach\x in {1,...,2}{
\draw[fline] (-4.5+0.5*\x,4+0.5*\x) -- (-3+0.5*\x,2.5+0.5*\x);
}
\begin{scope}[xscale=1,yscale=-1,yshift=-8cm]
\foreach\x in {1,...,2}{
\draw[fline] (-4.5+0.5*\x,4+0.5*\x) -- (-3+0.5*\x,2.5+0.5*\x);
}
\end{scope}
\foreach\x in {0}{
\draw[smooth] (-4+\x,4.5+\x) arc (225:45:0.353553);
\node at (-4+\x,5+\x) {\fs$\bullet$};
}
\begin{scope}[xscale=1,yscale=-1,yshift=-8cm]
\foreach\x in {0}{
\draw[smooth] (-4+\x,4.5+\x) arc (225:45:0.353553);
\node at (-4+\x,5+\x) {\fs$\bullet$};
}
\end{scope}
\node at (-4,5.5) {$\ss \red x$};
\node at (-4.5,5) {$\ss \red \b{x}$};
\node at (-4.5,3) {$\ss \red y_n$};
\node at (-4,2.5) {$\ss \red \b{y}_n$};
\end{tikzpicture}
=
\langle K |_{a \b{a}}  \langle K |_{n \b{n}} 
R_{\b{a} n}(1/x/y_n) R_{a n}(x/y_n) 
R_{\b{a} \b{n}} (y_n/x) R_{a \b{n}} (x y_n).
\end{align*}
Using the reflection equation \eqref{reflection}, followed by two applications of the unitarity relation \eqref{unitarity}, this becomes
\begin{align*}
\begin{tikzpicture}[scale=0.6,baseline=0.3cm,rotate=45]
\foreach\x in {1,...,2}{
\draw[fline] (-4.5+0.5*\x,4+0.5*\x) -- (-3+0.5*\x,2.5+0.5*\x);
}
\begin{scope}[xscale=1,yscale=-1,yshift=-8cm]
\foreach\x in {1,...,2}{
\draw[fline] (-4.5+0.5*\x,4+0.5*\x) -- (-3+0.5*\x,2.5+0.5*\x);
}
\end{scope}
\foreach\x in {0}{
\draw[smooth] (-4+\x,4.5+\x) arc (225:45:0.353553);
\node at (-4+\x,5+\x) {\fs$\bullet$};
}
\begin{scope}[xscale=1,yscale=-1,yshift=-8cm]
\foreach\x in {0}{
\draw[smooth] (-4+\x,4.5+\x) arc (225:45:0.353553);
\node at (-4+\x,5+\x) {\fs$\bullet$};
}
\end{scope}
\node at (-4,5.5) {$\ss \red x$};
\node at (-4.5,5) {$\ss \red \b{x}$};
\node at (-4.5,3) {$\ss \red y_n$};
\node at (-4,2.5) {$\ss \red \b{y}_n$};
\end{tikzpicture}
&=
\langle K |_{n \b{n}} \langle K |_{a \b{a}}
R_{\b{n} a}(1/x/y_n) R_{\b{n} \b{a}}(x/y_n) 
R_{\b{a} \b{n}} (y_n/x) R_{a \b{n}} (x y_n)
\\[-0.5cm]
&=
\frac{(1-t x y_n)(1-t x \b{y}_n)(1-t \b{x} y_n)(1-t \b{x} \b{y}_n)}
{(1-x y_n)(1-x \b{y}_n)(1-\b{x} y_n)(1-\b{x} \b{y}_n)}
\langle K |_{n \b{n}} \langle K |_{a \b{a}},
\end{align*}
and we have trivialized the leftmost block, and can proceed to the next one to the right. Iterating this $n$ times, we eventually end up with the U-turn vertex 
$
\begin{tikzpicture}[scale=0.4,baseline=9]
\begin{scope}[xscale=-1,yscale=1]
\draw[smooth] ({2*sqrt(2) + 3 + 4},0) arc (-90:90:1);
\draw[smooth] ({2*sqrt(2) + 3 + 3.5},2)--({2*sqrt(2) + 3 + 4},2);
\draw[smooth] ({2*sqrt(2) + 3 + 3.5},0)--({2*sqrt(2) + 3 + 4},0);
\node at ({2*sqrt(2) + 3 + 5},1) {$\bullet$};
\gbull{{2*sqrt(2) + 6.5}}{2}{0.14};
\gbull{{2*sqrt(2) + 6.5}}{0}{0.14};
\end{scope}
\end{tikzpicture}
$ 
which is zero.
\end{proof}

One further important relation of the boundary covector is the \textit{fish equation}:
\begin{align}
\label{fish}
\bra{K}_{a\b{a}} R_{a\b{a}}(x^2)
=
\frac{x^2-t}{1-x^2}
\bra{K}_{\b{a}a},
\qquad
\left(
\begin{tikzpicture}[scale=0.4,baseline=2cm]
\def\dy{4}
\begin{scope}[xscale=-1]
\draw[smooth] (-0.5,0+\dy)--(0,0+\dy);
\draw[smooth] (-0.5,2+\dy)--(0,2+\dy);
\draw[smooth] (0,2+\dy) arc (90:45:1);
\draw[smooth] (0,0+\dy) arc (-90:-45:1);
\draw[smooth] ({sqrt(2)/2},{1+sqrt(2)/2+\dy})--({3*sqrt(2)/2},{1-sqrt(2)/2+\dy});
\draw[smooth] ({sqrt(2)/2},{1-sqrt(2)/2+\dy})--({3*sqrt(2)/2},{1+sqrt(2)/2+\dy});
\draw[smooth] ({2*sqrt(2)},2+\dy) arc (90:135:1);
\draw[smooth] ({2*sqrt(2)},0+\dy) arc (-90:-135:1);
\draw[smooth] ({2*sqrt(2)},2+\dy) arc (90:-90:1);
\node at ({2*sqrt(2)+1},1+\dy) {$\bullet$};
\node[label={left: \fs \red $\b{x}$}] at (3, 0+\dy) {};
\node[label={left: \fs \red $x$}] at (3, 2+\dy) {};
\end{scope}
\end{tikzpicture}
\quad
=
\frac{x^2-t}{1-x^2}\ \ 
\times
\quad
\begin{tikzpicture}[scale=0.4,baseline=2cm]
\def\dy{4}
\begin{scope}[xscale=-1]
\draw[smooth] ({2*sqrt(2) + 3 + 4},2+\dy) arc (90:-90:1);
\draw[smooth] ({2*sqrt(2) + 3 + 2.5},2+\dy)--({2*sqrt(2) + 3 + 4},2+\dy);
\draw[smooth] ({2*sqrt(2) + 3 + 2.5},0+\dy)--({2*sqrt(2) + 3 + 4},0+\dy);
\node at ({2*sqrt(2) + 3 + 5},1+\dy) {$\bullet$};
\node[label={left: \fs \red $x$}] at (10, 0+\dy) {};
\node[label={left: \fs \red $\b{x}$}] at (10, 2+\dy) {};
\end{scope}
\end{tikzpicture}
\right).
\end{align}

\subsection{Boundary $B$ operator}

Yet another type of $L$ matrix will be used in this work. We call it a {\it boundary} operator or $B$ matrix, since it will always appear in contexts where a boundary covector is also present, but irrespective of its name it should simply be considered as another solution of the intertwining equation. It is given by
\begin{align*}
B_a^{(\gamma)}(x)
=
\begin{pmatrix}
t^{\mathcal{N}} & \gamma x \phid t^{\mathcal{N}}
\\
\phi & 1 - \gamma x t^{\mathcal{N}}
\end{pmatrix}_a
=
\begin{tikzpicture}[baseline=0]
\node at (-0.7,0) {$\ss \red x$};
\bdark{-0.5}{-0.5};
\node at (-0.25,0) {$\ss \red \gamma$};
\end{tikzpicture}
\end{align*}
where $\mathcal{N}$ is the particle-number operator, $[\mathcal{N},\phi] = -\phi$, $[\mathcal{N},\phid] = \phid$, and satisfies the intertwining equation
\begin{align}
\label{boundary-rll}
R_{ab}(x/y)
B_a^{(\gamma)}(x)
B_b^{(\gamma)}(y)
=
B_b^{(\gamma)}(y)
B_a^{(\gamma)}(x)
R_{ab}(x/y),
\quad
\left(
\begin{tikzpicture}[baseline=0,scale=0.7]
\draw[thick] (0,0.5) -- (1,-0.5);
\draw[thick] (0,-0.5) -- (1,0.5);
\bdark{1}{-1};\bdark{1}{0};
\node at (-0.3,0.5) {$\ss \red x$};
\node at (-0.3,-0.5) {$\ss \red y$};
\node at (1.25,0.5) {$\ss \red \gamma$};
\node at (1.25,-0.5) {$\ss \red \gamma$};
\end{tikzpicture}
\ \  
=
\
\begin{tikzpicture}[baseline=-0,scale=0.7]
\draw[thick] (0.5,0.5) -- (1.5,-0.5);
\draw[thick] (0.5,-0.5) -- (1.5,0.5);
\bdark{0}{-1};\bdark{0}{0};
\node at (-0.2,0.5) {$\ss \red x$};
\node at (-0.2,-0.5) {$\ss \red y$};
\node at (0.25,0.5) {$\ss \red \gamma$};
\node at (0.25,-0.5) {$\ss \red \gamma$};
\end{tikzpicture}
\right).
\end{align}  
Analogously to above, we also define a $B^{*}$ matrix, whose spectral parameter is inverted:
\begin{align*}
B^{*(\gamma)}_a(x)
=
\begin{pmatrix}
x t^{\mathcal{N}} & \gamma \phid t^{\mathcal{N}}
\\
x \phi & x - \gamma t^{\mathcal{N}}
\end{pmatrix}_a
=
\begin{tikzpicture}[baseline=0]
\node at (-0.7,0) {$\ss \red \b{x}$};
\blight{-0.5}{-0.5};
\node at (-0.25,0) {$\ss \red \gamma$};
\end{tikzpicture}
\end{align*}
which satisfies the intertwining equations
\begin{align*}
R_{ab}(xy) B^{(\gamma)}_a(x) B^{*(\gamma)}_b(y)
&=
B^{*(\gamma)}_b(y) B^{(\gamma)}_a(x) R_{ab}(xy),
\quad
\left(
\begin{tikzpicture}[baseline=0,scale=0.7]
\draw[thick] (0,0.5) -- (1,-0.5);
\draw[thick] (0,-0.5) -- (1,0.5);
\bdark{1}{-1};\blight{1}{0};
\node at (-0.3,0.5) {$\ss \red x$};
\node at (-0.3,-0.5) {$\ss \red \b{y}$};
\node at (1.25,0.5) {$\ss \red \gamma$};
\node at (1.25,-0.5) {$\ss \red \gamma$};
\end{tikzpicture}
\ \  
=
\
\begin{tikzpicture}[baseline=-0,scale=0.7]
\draw[thick] (0.5,0.5) -- (1.5,-0.5);
\draw[thick] (0.5,-0.5) -- (1.5,0.5);
\blight{0}{-1};\bdark{0}{0};
\node at (-0.2,0.5) {$\ss \red x$};
\node at (-0.2,-0.5) {$\ss \red \b{y}$};
\node at (0.25,0.5) {$\ss \red \gamma$};
\node at (0.25,-0.5) {$\ss \red \gamma$};
\end{tikzpicture}
\right),
\\
R_{ab}(y/x) B^{*(\gamma)}_a(x) B^{*(\gamma)}_b(y)
&=
B^{*(\gamma)}_b(y) B^{*(\gamma)}_a(x) R_{ab}(y/x),
\quad
\left(
\begin{tikzpicture}[baseline=0,scale=0.7]
\draw[thick] (0,0.5) -- (1,-0.5);
\draw[thick] (0,-0.5) -- (1,0.5);
\blight{1}{-1};\blight{1}{0};
\node at (-0.3,0.5) {$\ss \red \b{x}$};
\node at (-0.3,-0.5) {$\ss \red \b{y}$};
\node at (1.25,0.5) {$\ss \red \gamma$};
\node at (1.25,-0.5) {$\ss \red \gamma$};
\end{tikzpicture}
\ \  
=
\
\begin{tikzpicture}[baseline=-0,scale=0.7]
\draw[thick] (0.5,0.5) -- (1.5,-0.5);
\draw[thick] (0.5,-0.5) -- (1.5,0.5);
\blight{0}{-1};\blight{0}{0};
\node at (-0.2,0.5) {$\ss \red \b{x}$};
\node at (-0.2,-0.5) {$\ss \red \b{y}$};
\node at (0.25,0.5) {$\ss \red \gamma$};
\node at (0.25,-0.5) {$\ss \red \gamma$};
\end{tikzpicture}
\right).
\end{align*}
One can combine the boundary covector and boundary operators in the following way\footnote{We use negative indices to label the physical space of boundary $L$ matrices, to distinguish them from ordinary $L$ matrices, which act on sites labelled by non-negative integers.}:
\begin{align}
\label{extended-boundary1}
\bra{K(x;\gamma,\delta)}_{a\b{a}}
&=
\bra{K}_{a\b{a}}
B_{\b{a}}^{(\gamma)}(\b{x},\phi_{-2},\phid_{-2})
B_{\b{a}}^{(\delta)}(\b{x},\phi_{-1},\phid_{-1})
B_{a}^{(\gamma)}(x,\phi_{-2},\phid_{-2})
B_{a}^{(\delta)}(x,\phi_{-1},\phid_{-1})
\\
&=
\begin{tikzpicture}[scale=0.8,baseline=0.5cm]
\begin{scope}[xscale=-1,yscale=1]
\draw[smooth] (1,0) arc (-90:90:0.5);
\draw[smooth] (0.5,1)--(1,1);
\draw[smooth] (0.5,0)--(1,0);
\node at (1.5,0.5) {$\bullet$};
\bdark{-0.5}{-0.5};\bdark{0}{-0.5};
\bdark{-0.5}{0.5};\bdark{0}{0.5};
\node at (1.5,0) {$\ss \red \b{x}$};
\node at (1.5,1) {$\ss \red x$};
\node at (-0.27,0) {$\ss \red \delta$};
\node at (0.23,0) {$\ss \red \gamma$};
\node at (-0.27,1) {$\ss \red \delta$};
\node at (0.23,1) {$\ss \red \gamma$};
\end{scope}
\end{tikzpicture}
\\
\label{extended-boundary2}
\bra{K^{*}(x;\gamma,\delta)}_{a\b{a}}
&=
\bra{K}_{a\b{a}}
B^{*(\gamma)}_{\b{a}}(\b{x},\phi_{-2},\phid_{-2})
B^{*(\delta)}_{\b{a}}(\b{x},\phi_{-1},\phid_{-1})
B^{*(\gamma)}_{a}(x,\phi_{-2},\phid_{-2})
B^{*(\delta)}_{a}(x,\phi_{-1},\phid_{-1})
\\
&=
\begin{tikzpicture}[scale=0.8,baseline=0.5cm]
\begin{scope}[xscale=-1,yscale=1]
\draw[smooth] (1,0) arc (-90:90:0.5);
\draw[smooth] (0.5,1)--(1,1);
\draw[smooth] (0.5,0)--(1,0);
\node at (1.5,0.5) {$\bullet$};
\blight{-0.5}{-0.5};\blight{0}{-0.5};
\blight{-0.5}{0.5};\blight{0}{0.5};
\node at (1.5,0) {$\ss \red x$};
\node at (1.5,1) {$\ss \red \b{x}$};
\node at (-0.27,0) {$\ss \red \delta$};
\node at (0.23,0) {$\ss \red \gamma$};
\node at (-0.27,1) {$\ss \red \delta$};
\node at (0.23,1) {$\ss \red \gamma$};
\end{scope}
\end{tikzpicture}
\end{align}
with both $\bra{K(x;\gamma,\delta)}_{a\b{a}}, \bra{K^{*}(x;\gamma,\delta)}_{a\b{a}} \in 
W_a^{*} \otimes W_{\b{a}}^{*} \otimes {\rm End}(\widehat{V})$, where we abbreviate the tensor product of boundary bosonic spaces by $\widehat{V} = V_{-2} \otimes V_{-1}$.


\subsection{Double-row transfer matrices}

We define double-row transfer matrices as follows:
\begin{align*}
\mathbb{T}_{a\b{a}}(x;\gamma,\delta)
=
\bra{K(x;\gamma,\delta)}_{a \b{a}}
\prod_{i=0}^{\infty}
L_{a}(\b{x},\phi_i,\phid_i)
\prod_{j=0}^{\infty}
L_{a}(x,\phi_j,\phid_j)
&=
\begin{tikzpicture}[scale=0.8,baseline=0.5cm]
\begin{scope}[xscale=-1,yscale=1]
\draw[smooth] (1,0) arc (-90:90:0.5);
\draw[smooth] (0.5,1)--(1,1);
\draw[smooth] (0.5,0)--(1,0);
\node at (1.5,0.5) {$\bullet$};
\bdark{-0.5}{-0.5};\bdark{0}{-0.5};
\bdark{-0.5}{0.5};\bdark{0}{0.5};
\foreach\x in {1,...,5}{
\dark{-\x-0.5}{0.5};\dark{-\x-0.5}{-0.5};
}
\node at (-6,1) {$\cdots$};
\node at (-6,0) {$\cdots$};
\node at (1.5,0) {$\ss \red \b{x}$};
\node at (1.5,1) {$\ss \red x$};
\node at (-0.27,0) {$\ss \red \delta$};
\node at (0.23,0) {$\ss \red \gamma$};
\node at (-0.27,1) {$\ss \red \delta$};
\node at (0.23,1) {$\ss \red \gamma$};
\end{scope}
\end{tikzpicture}
\\
\mathbb{T}_{a\b{a}}^{*}(x;\gamma,\delta)
=
\bra{K^{*}(x;\gamma,\delta)}_{a \b{a}}
\prod_{i=0}^{\infty}
L^{*}_{a}(\b{x},\phi_i,\phid_i)
\prod_{j=0}^{\infty}
L^{*}_{a}(x,\phi_j,\phid_j)
&=
\begin{tikzpicture}[scale=0.8,baseline=0.5cm]
\begin{scope}[xscale=-1,yscale=1]
\draw[smooth] (1,0) arc (-90:90:0.5);
\draw[smooth] (0.5,1)--(1,1);
\draw[smooth] (0.5,0)--(1,0);
\node at (1.5,0.5) {$\bullet$};
\blight{-0.5}{-0.5};\blight{0}{-0.5};
\blight{-0.5}{0.5};\blight{0}{0.5};
\foreach\x in {1,...,5}{
\light{-\x-0.5}{0.5};\light{-\x-0.5}{-0.5};
}
\node at (-6,1) {$\cdots$};
\node at (-6,0) {$\cdots$};
\node at (1.5,0) {$\ss \red x$};
\node at (1.5,1) {$\ss \red \b{x}$};
\node at (-0.27,0) {$\ss \red \delta$};
\node at (0.23,0) {$\ss \red \gamma$};
\node at (-0.27,1) {$\ss \red \delta$};
\node at (0.23,1) {$\ss \red \gamma$};
\end{scope}
\end{tikzpicture}
\end{align*}
which are both operators in $W_a^{*} \otimes W_{\b{a}}^{*} \otimes {\rm End} (\widehat{V} \otimes V)$. The name ``double-row transfer matrix'' is somewhat misleading, and it should be borne in mind that these are really 
operator-valued covectors in $W_a^{*} \otimes W_{\b{a}}^{*}$. We shall make use of the following components:
\begin{align*}
\mathbb{T}_{--}(x;\gamma,\delta)
&=
\begin{tikzpicture}[scale=0.8,baseline=0.5cm]
\begin{scope}[xscale=-1,yscale=1]
\draw[smooth] (1,0) arc (-90:90:0.5);
\draw[smooth] (0.5,1)--(1,1);
\draw[smooth] (0.5,0)--(1,0);
\node at (1.5,0.5) {$\bullet$};
\bdark{-0.5}{-0.5};\bdark{0}{-0.5};
\bdark{-0.5}{0.5};\bdark{0}{0.5};
\foreach\x in {1,...,5}{
\dark{-\x-0.5}{0.5};\dark{-\x-0.5}{-0.5};
}
\node at (-6,1) {$\cdots$};
\node at (-6,0) {$\cdots$};
\node at (1.5,0) {$\ss \red \b{x}$};
\node at (1.5,1) {$\ss \red x$};
\node at (-0.27,0) {$\ss \red \delta$};
\node at (0.23,0) {$\ss \red \gamma$};
\node at (-0.27,1) {$\ss \red \delta$};
\node at (0.23,1) {$\ss \red \gamma$};
\gbull{-5.5}{0}{0.07};
\gbull{-5.5}{1}{0.07};
\end{scope}
\end{tikzpicture}
\\
\mathbb{T}_{++}^{*}(x;\gamma,\delta)
&=
\begin{tikzpicture}[scale=0.8,baseline=0.5cm]
\begin{scope}[xscale=-1,yscale=1]
\draw[smooth] (1,0) arc (-90:90:0.5);
\draw[smooth] (0.5,1)--(1,1);
\draw[smooth] (0.5,0)--(1,0);
\node at (1.5,0.5) {$\bullet$};
\blight{-0.5}{-0.5};\blight{0}{-0.5};
\blight{-0.5}{0.5};\blight{0}{0.5};
\foreach\x in {1,...,5}{
\light{-\x-0.5}{0.5};\light{-\x-0.5}{-0.5};
}
\node at (-6,1) {$\cdots$};
\node at (-6,0) {$\cdots$};
\node at (1.5,0) {$\ss \red x$};
\node at (1.5,1) {$\ss \red \b{x}$};
\node at (-0.27,0) {$\ss \red \delta$};
\node at (0.23,0) {$\ss \red \gamma$};
\node at (-0.27,1) {$\ss \red \delta$};
\node at (0.23,1) {$\ss \red \gamma$};
\ebull{-5.5}{0}{0.07};
\ebull{-5.5}{1}{0.07};
\end{scope}
\end{tikzpicture}
\end{align*}
which are to be understood as configurations which have only the tiles \begin{tikzpicture}[scale=0.5,baseline=-0.1cm] \dark{-0.5}{-0.5};  \gbull{-0.5}{0}{0.09};\gbull{0.5}{0}{0.09}; \end{tikzpicture} and \begin{tikzpicture}[scale=0.5,baseline=-0.1cm] \light{-0.5}{-0.5};  \ebull{-0.5}{0}{0.09};\ebull{0.5}{0}{0.09}; \end{tikzpicture}, respectively, sufficiently far to the right.

\section{Hall--Littlewood polynomials}
\label{sec:HL}

\subsection{$A_n$ Hall--Littlewood polynomials: lattice paths}

An interesting connection between the $t$-boson model and Hall--Littlewood polynomials was first pointed out in \cite{tsi}. There it was shown that Hall--Littlewood polynomials are the Bethe wavefunctions (more specifically, the expansion coefficients of Bethe vectors on the basis $\ket{\lambda}$, in the absence of Bethe equations) of the $t$-boson model. This connection was also pursued in \cite{korff}, where an explicit bijection was found between $t$-weighted semi-standard Young tableaux (that evaluate Hall--Littlewood polynomials) and the lattice paths considered here. The approach adopted in \cite{korff} also gave rise to combinatorial formulae for the $q$-Whittaker functions, and cylindrical versions of both the Hall--Littlewood and $q$-Whittaker functions.

Below we state the result of \cite{tsi,korff} in terms of our notations. For the purposes of stating the theorem, the physical space will now start at the first, rather than the zeroth lattice site: $V = \bigotimes_{i=1}^{\infty} V_i$. This means that, in what follows, the row-to-row transfer matrices \eqref{T} and \eqref{T*} have their product starting at $i=1$ instead of $i=0$. We will only adopt this convention momentarily, returning to the larger lattice immediately after the theorem.

\begin{thm}
\label{thm:hl}
The Hall--Littlewood polynomials $P_{\lambda}(x;t)$ and $Q_{\lambda}(x;t)$ are given by
\begin{align}
\label{P}
P_{\lambda}(x_1,\dots,x_n;t)
&=
\bra{\lambda}
\mathcal{T}(x_n)
\dots
\mathcal{T}(x_1)
\ket{0}
\\
\label{Q}
Q_{\lambda}(x_1,\dots,x_n;t)
&=
\bra{0}
\mathcal{T}^{*}(x_1)
\dots
\mathcal{T}^{*}(x_n)
\ket{\lambda}
\end{align}
where $\ket{\lambda} = \bigotimes_{i=1}^{\infty} \ket{m_i(\lambda)}_i$, and similarly for the dual state $\bra{\lambda}$.
\end{thm}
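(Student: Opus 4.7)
The plan is to expand each matrix element into a sum over lattice configurations and match it against the branching/tableau formula for Hall--Littlewood polynomials. For $P_{\lambda}$, the quantity $\bra{\lambda}\mathcal{T}(x_n)\cdots\mathcal{T}(x_1)\ket{0}$ is, by \eqref{T} and the matrix form \eqref{mat-form-dark}, a partition function on an $n\times\infty$ strip of dark tiles \eqref{tiles} in which the bottom row is entirely in state $\ket{0}$, the top row encodes the occupation numbers $m_i(\lambda)$, and on the right boundary of each row only the configurations \begin{tikzpicture}[scale=0.4,baseline=-0.1cm]\dark{-0.5}{-0.5};\ebull{-0.5}{0}{0.09};\ebull{0.5}{0}{0.09};\end{tikzpicture} and \begin{tikzpicture}[scale=0.4,baseline=-0.1cm]\dark{-0.5}{-0.5};\gbull{-0.5}{0}{0.09};\gbull{0.5}{0}{0.09};\end{tikzpicture} appear (since $\mathcal{T}=T_{+}+T_{-}$ allows either boundary state on the left, but forces a hole exiting asymptotically on the right).

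I would then read off a partition $\mu^{(i)}$ from the horizontal slice between rows $i$ and $i+1$ by identifying the sites carrying an outgoing arrow (a ``particle'' edge), obtaining a sequence $\emptyset=\mu^{(0)}\subseteq\mu^{(1)}\subseteq\cdots\subseteq\mu^{(n)}=\lambda$. The structure of the four admissible tiles in \eqref{tiles} means that along row $i$ each column either preserves its vertical occupation or transfers one particle from a higher site to an adjacent lower one, with a horizontal arrow carrying that particle between them. A simple local check shows this forces $\mu^{(i)}/\mu^{(i-1)}$ to be a horizontal strip (equivalently, $\mu^{(i-1)}$ and $\mu^{(i)}$ interlace).

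The key computational step is to show that the product of Boltzmann weights across row $i$ evaluates to $x_i^{|\mu^{(i)}/\mu^{(i-1)}|}\,\psi_{\mu^{(i)}/\mu^{(i-1)}}(t)$, where $\psi_{\lambda/\mu}(t)$ is Macdonald's Pieri coefficient. The factor $x_i^{|\mu^{(i)}/\mu^{(i-1)}|}$ is immediate from counting the $x$-weighted tiles (those with a hole on the left or top edge). The $\psi$ coefficient arises from the $(1-t^m)$ weights produced whenever a horizontal arrow terminates at a vertical edge of occupancy $m$, combined with the $t$ weights at unaffected columns; careful bookkeeping using the column multiplicities $m_j(\mu^{(i)})$ and $m_j(\mu^{(i-1)})$ reproduces the known formula $\psi_{\lambda/\mu}(t)=\prod_{j}(1-t^{m_j(\mu)})$ over the appropriate index set. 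Summing over all interlacing sequences then yields the tableau formula for $P_{\lambda}(x_1,\dots,x_n;t)$.

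The formula for $Q_{\lambda}$ is obtained by the same lattice-path argument applied to $L^{*}$ tiles \eqref{tiles*}; the only differences are that the roles of ``dark'' and ``light'' halves are swapped (so the boundary conditions are reversed consistent with $\bra{0}\cdots\ket{\lambda}$) and the placement of the $(1-t^m)$ factors on different tiles generates an extra $\prod_j\prod_{i=1}^{m_j(\lambda)}(1-t^i) = b_{\lambda}(t)$ from reading the top boundary, consistent with $Q_{\lambda}=b_{\lambda}P_{\lambda}$. The principal obstacle is the combinatorial matching in the previous paragraph: while the interlacing structure is forced by the tiles, verifying that the $(1-t^m)$ factors reassemble into the precise Macdonald $\psi$-coefficient requires a careful local analysis, which I would carry out either by direct column-by-column computation or, equivalently, via the bijection with $t$-weighted column-strict tableaux given by Korff \cite{korff}.
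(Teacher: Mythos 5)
Your proposal is essentially the paper's own proof: slicing the lattice between rows (equivalently, inserting complete sets of intermediate states) reduces both identities to a single-row matrix element, which is identified with the one-variable skew polynomial $x^{|\lambda|-|\mu|}\psi_{\lambda/\mu}(t)$, respectively $x^{|\lambda|-|\mu|}\varphi_{\lambda/\mu}(t)$, and matched against the branching rules for $P_\lambda$ and $Q_\lambda$ --- the interlacing forced by non-crossing paths and the bookkeeping of the $(1-t^m)$ factors are exactly the steps the paper carries out. Two small slips to fix before doing that bookkeeping: the surviving entries of \eqref{mat-form-dark} carry a \emph{particle}, not a hole, on the right edge at infinity (an asymptotic hole would force infinitely many weight-$x$ tiles and hence vanish for $|x|<1$), and the tiles \eqref{tiles} have only the weights $x,x,(1-t^m),1$, with no bare ``$t$ weights at unaffected columns'', so $\psi_{\lambda/\mu}(t)=\prod_{i:\,m_i(\mu)=m_i(\lambda)+1}(1-t^{m_i(\mu)})$ must be assembled entirely from the $(1-t^m)$ tiles (the paper makes this transparent by passing to the complemented tiles \eqref{tiles2}, where that factor appears precisely when a particle joins $m-1$ others).
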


\begin{proof}
We begin with the proof \eqref{Q}. Inserting a complete set of states before the final transfer matrix, we have the branching formula
\begin{align*}
\bra{0}
\mathcal{T}^{*}(x_1)
\dots
\mathcal{T}^{*}(x_n)
\ket{\lambda}
=
\sum_{\mu}
\bra{0}
\mathcal{T}^{*}(x_1)
\dots
\mathcal{T}^{*}(x_{n-1})
\ket{\mu}
\bra{\mu}
\mathcal{T}^{*}(x_n)
\ket{\lambda}.
\end{align*}
Comparing this with the branching rule for $Q_{\lambda}(x;t)$,
\begin{align*}
Q_{\lambda}(x_1,\dots,x_n;t)
=
\sum_{\mu}
Q_{\mu}(x_1,\dots,x_{n-1};t)
Q_{\lambda/\mu}(x_n;t),
\end{align*}
it suffices to show that $\bra{\mu} \mathcal{T}^{*}(x) \ket{\lambda} = Q_{\lambda/\mu}(x;t)$, the skew Hall--Littlewood $Q$ polynomial in a single variable. Following \cite{mac}, 
$Q_{\lambda/\mu}(x;t)$ is given explicitly by
\begin{align}
\label{Q-branch}
Q_{\lambda/\mu}(x;t)
=
\left\{
\begin{array}{ll}
x^{|\lambda|-|\mu|}
\varphi_{\lambda/\mu}(t),
&
\lambda \succeq \mu,
\\
\\
0,
&
\text{otherwise},
\end{array}
\right.
\qquad
\varphi_{\lambda/\mu}(t)
:=
\prod_{i \geq 1: m_i(\mu) + 1 = m_i(\lambda)}
(1-t^{m_i(\lambda)}),
\end{align}
where we use the notation $\lambda \succeq \mu$ for interlacing partitions, {\it i.e.} partitions which satisfy 
$\lambda_i \geq \mu_i \geq \lambda_{i+1}$ for all $i \geq 1$. Using the graphical interpretation of $\bra{\mu} \mathcal{T}^{*}(x) \ket{\lambda}$, one can easily deduce that it has the form of \eqref{Q-branch}. Since lattice paths cannot cross, it is immediate that $\bra{\mu} \mathcal{T}^{*}(x) \ket{\lambda} \neq 0$ if and only if $\lambda \succeq \mu$. We obtain a weight of $x$ for every horizontal step taken by a line, and the number of such steps is precisely the difference in the weights of the partitions $\lambda$ and $\mu$. Finally, in the transition from $\lambda$ to $\mu$, we get a weight of $(1-t^m)$ every time a particle departs from a group of $m$ particles at a certain site.

The proof of \eqref{P} is very similar. We begin in the same way, with the branching formula
\begin{align*}
\bra{\lambda}
\mathcal{T}(x_n)
\dots
\mathcal{T}(x_1)
\ket{0}
=
\sum_{\mu}
\bra{\lambda}
\mathcal{T}(x_n)
\ket{\mu}
\bra{\mu}
\mathcal{T}(x_{n-1})
\dots
\mathcal{T}(x_1)
\ket{0},
\end{align*}
and compare it with the branching rule for $P_{\lambda}(x;t)$:
\begin{align*}
P_{\lambda}(x_1,\dots,x_n;t)
=
\sum_{\mu}
P_{\lambda/\mu}(x_n;t)
P_{\mu}(x_1,\dots,x_{n-1};t),
\end{align*}
where the one-variable skew Hall--Littlewood $P$ polynomial is given by
\begin{align}
\label{P-branch}
P_{\lambda/\mu}(x;t)
=
\left\{
\begin{array}{ll}
x^{|\lambda|-|\mu|}
\psi_{\lambda/\mu}(t),
&
\lambda \succeq \mu,
\\
\\
0,
&
\text{otherwise},
\end{array}
\right.
\qquad
\psi_{\lambda/\mu}(t)
:=
\prod_{i \geq 1: m_i(\mu) = m_i(\lambda)+1}
(1-t^{m_i(\mu)}).
\end{align}
Hence we wish to show that $\bra{\lambda} \mathcal{T}(x_n) \ket{\mu}$ is given by \eqref{P-branch}. To do that, we observe that the tiles in \eqref{tiles} can be drawn instead as
\begin{equation}
\label{tiles2}
\begin{tikzpicture}[baseline=0]
\matrix[column sep={2.5cm,between origins},row sep=0.25cm]{
\dark{-0.5}{-0.5};
\draw[bbline] (0,-0.5) node[below] {$\ss m$} -- (0,0.5) node[above] {$\ss m$};
\draw[bline] (-0.5,0) -- (0.5,0);
\bbull{-0.5}{0}{0.05};\bbull{0.5}{0}{0.05};
& 
\dark{-0.5}{-0.5}; 
\draw[bbline] (0,-0.5) node[below] {$\ss m+1$} -- (0,0.5) node[above] {$\ss m$};
\draw[bline] (-0.05,-0.5) -- (-0.05,0) -- (-0.5,0);
\bbull{-0.5}{0}{0.05};\ebull{0.5}{0}{0.05};
& 
\dark{-0.5}{-0.5}; 
\draw[bbline] (0,-0.5) node[below] {$\ss m-1$} -- (0,0.5) node[above] {$\ss m$};
\draw[bline] (0.05,0.5) -- (0.05,0) -- (0.5,0);
\ebull{-0.5}{0}{0.05};\bbull{0.5}{0}{0.05};
&
\dark{-0.5}{-0.5};
\draw[bbline] (0,-0.5) node[below] {$\ss m$} -- (0,0.5) node[above] {$\ss m$};
\ebull{-0.5}{0}{0.05};\ebull{0.5}{0}{0.05};
\\
\node{$x$}; & \node{$x$}; & \node{$(1-t^m)$}; & \node{$1$}; 
\\
};
\end{tikzpicture}
\end{equation}
where we have complemented particles with holes, and vice versa, on the auxiliary (horizontal) space of each tile\footnote{Whenever we perform this complementation, we indicate it by colouring particles in blue rather than green. In other words, we make the substitution $\part \rightarrow \hole$ and $\hole \rightarrow \bpart$ for the labels of each auxiliary space.}. Using this alternative graphical interpretation of $\bra{\lambda} \mathcal{T}(x_n) \ket{\mu}$, we get as before a weight of $x$ for every horizontal step taken. In the transition from $\lambda$ to $\mu$, we get a weight of $(1-t^m)$ every time a particle joins $m-1$ others at a certain site.
\end{proof}

Let us return to the larger physical space, $V = \bigotimes_{i=0}^{\infty} V_i$. Consider partitions augmented by a zeroth site, with a non-integral shift $\alpha$:
\begin{align*}
\ket{\lambda;\alpha}
= 
\ket{m_0(\lambda)+\alpha}_0 
\otimes 
\ket{m_1(\lambda)}_1
\otimes
\ket{m_2(\lambda)}_2 
\otimes
\cdots
\end{align*}
and similarly for dual states $\bra{\lambda;\alpha}$. The following result can be easily deduced from Theorem \ref{thm:hl}.
\begin{lem}
The Hall--Littlewood polynomials admit the alternative expressions
\label{cor:hl}
\begin{align}
\label{hl-P}
\prod_{i=1}^{n}
x_i
P_{\lambda}(x_1,\dots,x_n;t)
&=
\bra{\lambda;\alpha}
T_{+}(x_n)
\dots
T_{+}(x_1)
\ket{0;\alpha}
\\
\label{hl-Q}
\prod_{j=1}^{m_0(\lambda)}
(1-t^{\alpha+j})
\prod_{i=1}^{n}
x_i
Q_{\lambda}(x_1,\dots,x_n;t)
&=
\bra{0;\alpha}
T^{*}_{-}(x_1)
\dots
T^{*}_{-}(x_n)
\ket{\lambda;\alpha}
\end{align}
\end{lem}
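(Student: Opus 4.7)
The plan is to reduce both identities to Theorem \ref{thm:hl} on the reduced lattice $\bigotimes_{i \geq 1} V_i$ by peeling off the $L$-matrix acting at site $0$. Extracting the leftmost factor from the infinite product defining $T_a(x)$ and $T_a^{*}(x)$, and summing over the internal auxiliary index between sites $0$ and $1$, I obtain the decompositions
\begin{align*}
T_+(x) = x\, T_+^{(>0)}(x) + x\, \phi_0^{\dagger}\, T_-^{(>0)}(x),
\qquad
T_-^{*}(x) = x\, \phi_0\, T_+^{*(>0)}(x) + x\, T_-^{*(>0)}(x),
\end{align*}
where $T_\pm^{(>0)}$ and $T_\pm^{*(>0)}$ are the analogous operators built from $L$-matrices at sites $i \geq 1$ only.

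Since $\phi_0$ and $\phi_0^{\dagger}$ act on site $0$ alone and commute with every $T^{(>0)}_{\pm}$ and $T^{*(>0)}_{\pm}$, I would next expand the products over the $n$ rows and push all site-$0$ factors to the left, producing sums indexed by subsets $S \subseteq \{1,\dots,n\}$ that record which rows contribute a $\phi_0^{\dagger}$ (respectively a $\phi_0$). Applied to $\ket{0;\alpha}$ (resp.\ $\ket{\lambda;\alpha}$) and contracted against $\bra{\lambda;\alpha}$ (resp.\ $\bra{0;\alpha}$), the overlap at site $0$ immediately forces $|S| = m_0(\lambda)$. For the $Q$-case, the identity $\phi_0^{|S|}\ket{m_0(\lambda)+\alpha}_0 = \prod_{j=0}^{|S|-1}(1 - t^{m_0(\lambda)+\alpha-j})\ket{m_0(\lambda)+\alpha-|S|}_0$ produces, after the reindexing $i = m_0(\lambda) - j$, the scalar prefactor $\prod_{i=1}^{m_0(\lambda)}(1 - t^{\alpha+i})$ present in the statement; no such prefactor arises in the $P$-case because $\phi_0^{\dagger}$ acts without any scalar weight.

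It then remains to show that this restricted sum equals the unrestricted sum over all $S \subseteq \{1,\dots,n\}$, which by expanding $\mathcal{T}^{(>0)} = T_+^{(>0)} + T_-^{(>0)}$ (resp.\ $\mathcal{T}^{*(>0)}$) and invoking Theorem \ref{thm:hl} is exactly $P_\lambda(x_1,\dots,x_n;t)$ (resp.\ $Q_\lambda$). The main (and really only) technical point is this equivalence, which I would establish by a conservation-of-parts argument: inspection of the tiles in \eqref{tiles} and \eqref{tiles*} shows that $T_+^{(>0)}$ raises the number of parts $\ell$ by one while $T_-^{(>0)}$ preserves it, and dually $T_+^{*(>0)}$ preserves $\ell$ while $T_-^{*(>0)}$ lowers it by one. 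Since $\ket{0}_{i\geq 1}$ has $\ell = 0$ while $\ket{\lambda}_{i\geq 1}$ has $\ell = \ell(\lambda) - m_0(\lambda) = n - m_0(\lambda)$, any subset $S$ with $|S| \neq m_0(\lambda)$ gives a vanishing matrix element on the reduced lattice, so the restricted and unrestricted sums coincide and both identities follow.
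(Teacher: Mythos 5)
Your argument is correct and is essentially the paper's own proof in algebraic rather than graphical dress: the decomposition $T_{+}(x)=x\,T_{+}^{(>0)}(x)+x\,\phi_0^{\dagger}\,T_{-}^{(>0)}(x)$ together with the sum over subsets $S$ with $|S|=m_0(\lambda)$ is exactly the paper's enumeration of the $\binom{n}{m_0(\lambda)}$ admissible configurations of the zeroth column, each carrying weight $\prod_i x_i$ (resp. $\prod_{j}(1-t^{\alpha+j})\prod_i x_i$ in the $Q$ case). Your particle-conservation argument for why the restricted sum equals the free sum $\prod_k \mathcal{T}^{(>0)}(x_k)$ required by Theorem \ref{thm:hl} makes explicit a step the paper leaves implicit when it asserts that stripping the zeroth column leaves free boundary conditions at the left edge.
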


\begin{proof}

We start with the proof of \eqref{hl-P}. Using its graphical form \eqref{mat-form-dark}, each $T_{+}(x_i)$ operator has a hole at its left edge and a green particle at its right edge (which is at infinity). Alternatively, after complementing particles with holes and vice versa, as we did in the proof of Theorem \ref{thm:hl}, each $T_{+}(x_i)$ operator has a blue particle at its left edge and a hole at its right edge. A typical configuration of the right hand side of \eqref{hl-P} is thus
\begin{equation*}
\begin{tikzpicture}[scale=0.7]
\foreach\y in {1,...,5}{
\foreach\x in {1,...,6}{
\dark{\x}{\y}
}
\bbull{1}{\y+0.5}{0.07};
}
\bbull{1.4}{1}{0.07}; \bbull{2.5}{1}{0.07}; 
\bbull{4.3}{1}{0.07}; \bbull{4.5}{1}{0.07}; \bbull{5.5}{1}{0.07};
\draw[bline] (1,1.5) -- (1.4,1.5) -- (1.4,1);
\draw[bline] (1,2.5) -- (1.6,2.5) -- (1.6,1.5) -- (2.5,1.5) -- (2.5,1);
\draw[bline] (1,3.5) -- (2.5,3.5) -- (2.5,2.5) -- (4.3,2.5) -- (4.3,1);
\draw[bline] (1,4.5) -- (3.5,4.5) -- (3.5,3.5) -- (4.5,3.5) -- (4.5,1);
\draw[bline] (1,5.5) -- (4.7,5.5) -- (4.7,2.5) -- (5.5,2.5) -- (5.5,1);
\foreach\y in {1,...,5}{
\node at (0.5,-\y+6.5) {$\ss \red x_{\y}$};
}
\node at (1.5,6.5) {$\ss \alpha$};
\node at (2.5,6.5) {$\ss 0$};
\node at (3.5,6.5) {$\ss 0$};
\node at (4.5,6.5) {$\ss 0$};
\node at (5.5,6.5) {$\ss 0$};
\node at (6.5,6.5) {$\ss 0$};
\node at (1.5,0.5) {$\ss m_0+\alpha$};
\node at (2.5,0.5) {$\ss m_1$};
\node at (3.5,0.5) {$\ss m_2$};
\node at (4.5,0.5) {$\ss m_3$};
\node at (5.5,0.5) {$\ss m_4$};
\node at (6.5,0.5) {$\ss m_5$};
\end{tikzpicture}
\end{equation*}
which is shown here for $\lambda = (4,3,3,1,0)$. Consulting the Boltzmann weights \eqref{tiles2}, we see that in general $\left(\substack{n \\ \\ m_0(\lambda)}\right)$ configurations of the zeroth column are possible (corresponding with choosing the positions of the holes on the left edge of the first column), and every such configuration has the overall weight $\prod_{i=1}^{n} x_i$. Hence we can strip away the zeroth column at the expense of the factor $\prod_{i=1}^{n} x_i$ and restrict the lattice to $\bigotimes_{i=1}^{\infty} V_i$, with free boundary conditions at the left edge, which is precisely equation \eqref{P} for $P_{\lambda}(x_1,\dots,x_n;t)$.

The proof of \eqref{hl-Q} is completely analogous. From \eqref{mat-form-light}, each $T^{*}_{-}(x_i)$ operator has a green particle at its left edge and a hole at its right edge (situated at infinity). A typical configuration of the right hand side of \eqref{hl-Q} is thus
\begin{equation*}
\begin{tikzpicture}[scale=0.7]
\foreach\y in {1,...,5}{
\foreach\x in {1,...,6}{
\light{\x}{\y}
}
\gbull{1}{\y+0.5}{0.07};
}
\gbull{1.3}{6}{0.07}; \gbull{1.5}{6}{0.07};
\gbull{3.5}{6}{0.07}; \gbull{4.4}{6}{0.07}; \gbull{5.5}{6}{0.07};
\draw[gline] (1,5.5) -- (1.3,5.5) -- (1.3,6);
\draw[gline] (1,4.5) -- (1.5,4.5) -- (1.5,6);
\draw[gline] (1,3.5) -- (1.7,3.5) -- (1.7,4.5) -- (2.5,4.5) -- (2.5,5.5) -- (3.5,5.5) -- (3.5,6);
\draw[gline] (1,2.5) -- (3.5,2.5) -- (3.5,3.5) -- (4.4,3.5) -- (4.4,6);
\draw[gline] (1,1.5) -- (4.6,1.5) -- (4.6,4.5) -- (5.5,4.5) -- (5.5,6);
\foreach\y in {1,...,5}{
\node at (0.5,\y+0.5) {$\ss \red \b{x}_{\y}$};
}
\node at (1.5,6.5) {$\ss m_0+\alpha$};
\node at (2.5,6.5) {$\ss m_1$};
\node at (3.5,6.5) {$\ss m_2$};
\node at (4.5,6.5) {$\ss m_3$};
\node at (5.5,6.5) {$\ss m_4$};
\node at (6.5,6.5) {$\ss m_5$};
\node at (1.5,0.5) {$\ss \alpha$};
\node at (2.5,0.5) {$\ss 0$};
\node at (3.5,0.5) {$\ss 0$};
\node at (4.5,0.5) {$\ss 0$};
\node at (5.5,0.5) {$\ss 0$};
\node at (6.5,0.5) {$\ss 0$};
\end{tikzpicture}
\end{equation*}
which is shown here for $\lambda = (4,3,2,0,0)$. As before, $\left(\substack{n \\ \\ m_0(\lambda)}\right)$ configurations of the zeroth column are possible, and in all configurations we have a group of $m_0(\lambda)$ particles starting at the top of the column and branching away from this group one by one to the left edge of the lattice. From the Boltzmann weights \eqref{tiles*}, we see that every configuration of this column gives rise to the total weight $\prod_{j=1}^{m_0(\lambda)} (1-t^{\alpha+j}) \prod_{i=1}^{n} x_i$, and can otherwise be neglected. Restricting the lattice to $\bigotimes_{i=1}^{\infty} V_i$, with a free boundary at the left edge, from \eqref{Q} we obtain $Q_{\lambda}(x_1,\dots,x_n;t)$.

\end{proof}

\subsection{$A_n$ Hall--Littlewood polynomials: sum over symmetric group}
\label{sec:a_n-sym}

Hall--Littlewood polynomials were originally defined, not combinatorially as in the previous subsection, but as a sum over the symmetric group \cite{mac}. Our next aim is to show that such summation formulae can also be recovered very naturally in the framework of the $t$-boson model. We begin by defining the column-to-column transfer matrix:
\begin{align*}
S^{[i]}(x_1,\dots,x_n)
=
L_{n}(x_n,\phi_i,\phid_i)
\dots
L_{1}(x_1,\phi_i,\phid_i)
\in
{\rm End}(W_{1} \otimes \cdots \otimes W_{n} \otimes V_i),
\end{align*}
which is a $2^n \times 2^n$ matrix formed by taking the tensor product of $n$ $L$ matrices with different auxiliary spaces, whose entries act non-trivially in $V_i$. Consider the components of $S^{[i]}(x_1,\dots,x_n)$ in the bosonic space $V_i$:
\begin{align*}
S^{[l,m]}(x_1,\dots,x_n)
\equiv
\bra{l}_i
L_{n}(x_n,\phi_i,\phid_i)
\dots
L_{1}(x_1,\phi_i,\phid_i)
\ket{m}_i
=
\begin{tikzpicture}[scale=0.5,baseline=(mid.base)]
\node (mid) at (1,3) {};
\node at (0.5,1.5) {$\ss \red x_{n}$};
\node at (0.5,3.3) {$\red \vdots$};
\node at (0.5,4.5) {$\ss \red x_{1}$};
\foreach\y in {1,...,4}{
\dark{1}{\y};
}
\node at (1.5,0.5) {$\ss l$};
\node at (1.5,5.5) {$\ss m$};
\end{tikzpicture}\ \ .
\end{align*}
The Hall--Littlewood polynomials can be expressed in terms of these, as follows:
\begin{align}
\label{P-columns1}
\prod_{i=1}^{n} x_i
P_{\lambda}(x_1,\dots,x_n;t)
=
\bra{\hole,\dots,\hole}
\prod_{i=0}^{\infty}
S^{[m_i(\lambda),0]}(x_1,\dots,x_n)
\ket{\part,\dots,\part},
\end{align}
where the product is ordered from left to right as the index $i$ increases, and where we use the shorthand
\begin{align*}
\bra{\hole,\dots,\hole}
=
\bigotimes_{i=1}^{n} \begin{pmatrix} 1 & 0 \end{pmatrix}_i
\quad
{\rm and}
\quad
\ket{\part,\dots,\part}
=
\bigotimes_{i=1}^{n} \begin{pmatrix} 0 \\ 1 \end{pmatrix}_i
\end{align*} 
to encode the states at the left and right edges of the lattice. Indeed this is simply a re-writing of equation \eqref{hl-P}, in which we decompose the lattice not in terms of rows, but in terms of columns. We now study the operators $S^{[i]}(x_1,\dots,x_n)$ under conjugation by $F$ matrices \cite{ms}, and refer the reader to Appendix \ref{app:fbasis} for more details on this change of basis. We are interested in the twisted transfer column-to-column matrix, given by
\begin{align*}
\widetilde{S}^{[i]}_{1\dots n}(x_1,\dots,x_n)
=
F_{1\dots n}(x_1,\dots,x_n)
S^{[i]}_{1 \dots n}(x_1,\dots,x_n)
F^{-1}_{1\dots n}(x_1,\dots,x_n),
\end{align*}
which has the following symmetry property
\begin{align}
\label{symmetry}
\widetilde{S}^{[i]}_{1\dots n}(x_1,\dots,x_n)
=
\widetilde{S}^{[i]}_{\sigma(1) \dots \sigma(n)}
\left(x_{\sigma(1)},\dots,x_{\sigma(n)}\right)
\end{align}
for any permutation $\sigma \in S_n$. Given that\footnote{The identities \eqref{eigenvector} are well-known properties of factorizing $F$ matrices, that follow directly from the expressions \eqref{F} and \eqref{F-inv} for $F$ and its inverse, as well as the eigenvector equations
\begin{align*}
\begin{pmatrix} 1 & 0 \end{pmatrix}_a
\otimes
\begin{pmatrix} 1 & 0 \end{pmatrix}_b
\mathcal{R}_{ab}(x/y)
=
\begin{pmatrix} 1 & 0 \end{pmatrix}_a
\otimes
\begin{pmatrix} 1 & 0 \end{pmatrix}_b,
\quad
\mathcal{R}_{ab}(x/y)
\begin{pmatrix} 0 \\ 1 \end{pmatrix}_a
\otimes
\begin{pmatrix} 0 \\ 1 \end{pmatrix}_b
=
\begin{pmatrix} 0 \\ 1 \end{pmatrix}_a
\otimes
\begin{pmatrix} 0 \\ 1 \end{pmatrix}_b,
\end{align*}
which can easily be verified as properties of $\mathcal{R}$ matrix \eqref{renorm-R}. 
}
\begin{align}
\label{eigenvector}
\bra{\hole,\dots,\hole}
F(x_1,\dots,x_n)
=
\bra{\hole,\dots,\hole},
\qquad
F^{-1}(x_1,\dots,x_n)
\ket{\part,\dots,\part}
=
\ket{\part,\dots,\part},
\end{align}
we can clearly conjugate each matrix appearing in \eqref{P-columns1} by $F$, to obtain
\begin{align}
\label{P-columns2}
\prod_{i=1}^{n} x_i
P_{\lambda}(x_1,\dots,x_n;t)
=
\bra{\hole,\dots,\hole}
\prod_{i=0}^{\infty}
\widetilde{S}^{[m_i(\lambda),0]}(x_1,\dots,x_n)
\ket{\part,\dots,\part}.
\end{align}
The motivation for this change of basis is that, in view of the symmetry \eqref{symmetry}, one expects to be able to write explicit formulae for each $\widetilde{S}^{[m,0]}(x_1,\dots,x_n)$ and to calculate the right hand side of 
\eqref{P-columns2} directly. To proceed in that direction, we have the following result.

\begin{thm}
\label{thm:twisting}
For all $0 \leq m \leq n$, one has
\begin{align}
\label{twist-explicit}
\widetilde{S}^{[m,0]}(x_1,\dots,x_n)
=
\sum_{\substack{I \subset \{1,\ldots,n\} \\ |I|=m}}
\bigotimes_{i \in I}
\begin{pmatrix} 
0 & x_i
\\ 
0 & 0
\end{pmatrix}_i
\ \ 
\bigotimes_{j \not\in I}
\begin{pmatrix} 
x_j
\prod_{k \in I} \left( \frac{x_j-t x_k}{x_j-x_k} \right) & 0
\\ 
0 & 1
\end{pmatrix}_j
\end{align}
where the sum is taken over all subsets $I$ of $\{1,\dots,n\}$ of cardinality $m$.
\end{thm}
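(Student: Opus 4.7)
The plan is to prove Theorem \ref{thm:twisting} by induction on $n$, combining the natural recursion for the column operator $S^{[m,0]}$ with the recursive structure of the $F$-matrix reviewed in Appendix \ref{app:fbasis}.

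The base case $n=1$ is a direct verification: since $F(x_1)$ is trivial, $\widetilde{S}^{[m,0]}(x_1) = \bra{m}_i L_1(x_1,\phi_i,\phid_i)\ket{0}_i$ equals $\begin{pmatrix} x_1 & 0 \\ 0 & 1 \end{pmatrix}$ for $m=0$, $\begin{pmatrix} 0 & x_1 \\ 0 & 0 \end{pmatrix}$ for $m=1$, and vanishes for $m \geq 2$, matching the claimed formula for $I = \emptyset$ and $I = \{1\}$. For the inductive step, inserting a resolution of the identity in $V_i$ between $L_n$ and $L_{n-1}\cdots L_1$ produces the three-term recursion
\begin{align*}
S^{[m,0]}(x_1,\ldots,x_n)
&= \begin{pmatrix} x_n & 0 \\ 0 & 1 \end{pmatrix}_n \otimes S^{[m,0]}(x_1,\ldots,x_{n-1})
+ \begin{pmatrix} 0 & x_n \\ 0 & 0 \end{pmatrix}_n \otimes S^{[m-1,0]}(x_1,\ldots,x_{n-1}) \\
&\quad+ \begin{pmatrix} 0 & 0 \\ 1-t^{m+1} & 0 \end{pmatrix}_n \otimes S^{[m+1,0]}(x_1,\ldots,x_{n-1}),
\end{align*}
onto which I would apply the conjugation by $F(x_1,\ldots,x_n)$, using its recursive factorization in terms of $F(x_1,\ldots,x_{n-1})$ and an extension factor acting on $W_n$ together with the other auxiliary spaces.

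Substituting the inductive hypothesis for $\widetilde{S}^{[k,0]}(x_1,\ldots,x_{n-1})$ with $k \in \{m-1, m, m+1\}$ should then yield the claimed formula for $\widetilde{S}^{[m,0]}(x_1,\ldots,x_n)$: subsets $I \subset \{1,\ldots,n\}$ of size $m$ with $n \notin I$ should arise from the diagonal term acting on $\widetilde{S}^{[m,0]}(x_1,\ldots,x_{n-1})$, while those with $n \in I$ should come from the $E_{12}^{(n)}$ term acting on $\widetilde{S}^{[m-1,0]}(x_1,\ldots,x_{n-1})$, with the additional weights $\prod_{j \notin I}\tfrac{x_j - t x_n}{x_j - x_n}$ contributed by the $F$-matrix conjugation. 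The main obstacle is twofold: showing that the $E_{21}^{(n)} \otimes \widetilde{S}^{[m+1,0]}$ contribution and the various cross-terms generated by the conjugation cancel exactly (since no $E_{21}$ entries appear in the claimed formula), and verifying that the surviving rational functions collect into the factorized form $\prod_{k \in I}\tfrac{x_j - t x_k}{x_j - x_k}$. Both reduce to partial-fraction identities in $x_n$, which by the symmetry \eqref{symmetry} it suffices to check in a single variable.

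An attractive alternative that bypasses the explicit $F$-matrix recursion would be to compute the matrix elements $\bra{\epsilon}\widetilde{S}^{[m,0]}(x_1,\ldots,x_n)\ket{\part,\ldots,\part}$ directly. Since $F^{-1}\ket{\part,\ldots,\part} = \ket{\part,\ldots,\part}$ by \eqref{eigenvector}, these reduce to $\bra{\epsilon}F(x_1,\ldots,x_n) \cdot S^{[m,0]}(x_1,\ldots,x_n)\ket{\part,\ldots,\part}$, which can be evaluated using the explicit form of $\bra{\epsilon}F$ from Appendix \ref{app:fbasis} combined with the partition-function interpretation of $S^{[m,0]}$. The symmetry \eqref{symmetry} would then propagate this family of matrix elements to the full operator identity.
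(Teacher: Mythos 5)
Your inductive strategy is plausible in outline, but as written it has genuine gaps, and they sit exactly where the content of the theorem lies. The three-term recursion for $S^{[m,0]}(x_1,\dots,x_n)$ is correct, but the induction step is deferred rather than carried out: conjugating that recursion by $F(x_1,\dots,x_n)$ requires an explicit factorization $F_{1\dots n}=G_n\,(F_{1\dots n-1}\otimes 1_n)$, which is not among the facts recorded in Appendix \ref{app:fbasis} (the appendix gives the closed form \eqref{F}, not the recursive Maillet--Sanchez de Santos construction), and the two computations you yourself flag as ``the main obstacle'' --- the cancellation of the lower-triangular term $\bigl(\begin{smallmatrix}0&0\\ 1-t^{m+1}&0\end{smallmatrix}\bigr)_n\otimes \widetilde{S}^{[m+1,0]}$ against the cross-terms produced by $G_n$, and the recombination of the surviving coefficients into $\prod_{k\in I}(x_j-tx_k)/(x_j-x_k)$ --- are precisely the assertion of \eqref{twist-explicit}. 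Declaring that these ``reduce to partial-fraction identities in $x_n$'' does not establish them, and the appeal to \eqref{symmetry} to reduce the check to a single variable is unjustified: \eqref{symmetry} permutes the auxiliary spaces together with their variables and does not by itself control the $x_n$-dependence of an individual tensor component.

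The proposed alternative contains a logical error. Since $\ket{\part,\dots,\part}$ is invariant under permutation of the auxiliary spaces, applying \eqref{symmetry} to the matrix elements $\bra{\epsilon}\widetilde{S}^{[m,0]}\ket{\part,\dots,\part}$ only relates these elements to one another; it cannot generate the elements $\bra{\epsilon}\widetilde{S}^{[m,0]}\ket{\eta}$ for $\eta\neq\ket{\part,\dots,\part}$, so it does not ``propagate to the full operator identity.'' Knowing the action on $\ket{\part,\dots,\part}$ alone is also insufficient for the application \eqref{P-columns2}, where the operators are iterated. The paper's actual proof avoids both difficulties by computing every component $\bigl[\widetilde{S}^{[m,0]}\bigr]_{i_1\dots i_n}^{j_1\dots j_n}$ at once: for a fixed index set the components of $F$ and $F^{-1}$ are single ordered products $\mathcal{R}_\sigma$ and $\mathcal{R}^\rho$ times explicit diagonal factors, the $\mathcal{R}_\sigma$ is pushed through $S^{[m,0]}$ with the intertwining equation \eqref{rll}, and the resulting lattice is evaluated by freezing and particle-conservation arguments --- in particular the vanishing of all components with $i_k=\part$, $j_k=\hole$, which is exactly the cancellation your induction leaves open. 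If you wish to salvage your second route, compute the components against arbitrary bra \emph{and} ket index sets, not only against $\ket{\part,\dots,\part}$; that essentially turns it into the paper's argument.
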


\begin{proof}
Using the results of Appendix \ref{app:fbasis}, we can write the components of $\widetilde{S}^{[m,0]}(x_1,\dots,x_n)$ as follows:
\begin{align}
\Big[
\widetilde{S}^{[m,0]}_{1\dots n}(x_1,\dots,x_n)
\Big]_{i_1\dots i_n}^{j_1\dots j_n}
&=
\Big[
\mathcal{R}^{1\dots n}_{\sigma}
S^{[m,0]}_{1\dots n}(x_1,\dots,x_n)
\mathcal{R}^{\rho}_{1\dots n}
\prod_{1\leq k<l \leq n}
\Delta_{kl}^{-1}(x_k,x_l)
\Big]_{i_1\dots i_n}^{j_1\dots j_n}
\nonumber
\\
&=
\Big[
S^{[m,0]}_{\sigma(1) \dots \sigma(n)}\left(x_{\sigma(1)},\dots,x_{\sigma(n)}\right)
\mathcal{R}^{1\dots n}_{\sigma}
\mathcal{R}^{\rho}_{1\dots n}
\prod_{1\leq k<l \leq n}
\Delta_{kl}^{-1}(x_k,x_l)
\Big]_{i_1\dots i_n}^{j_1\dots j_n}
\nonumber
\\
&=
\Big[
S^{[m,0]}_{\sigma(1) \dots \sigma(n)}\left(x_{\sigma(1)},\dots,x_{\sigma(n)}\right)
\mathcal{R}^{1\dots n}_{\sigma}
\mathcal{R}^{\rho}_{1\dots n}
\Big]_{i_1\dots i_n}^{j_1\dots j_n}
\prod_{1\leq k<l \leq n}
b^{-1}_{j_k,j_l}(x_k,x_l)
\label{...1}
\end{align}
where the permutations $\sigma$ and $\rho$ are such that $i_{\sigma(1)} \geq \cdots \geq i_{\sigma(n)}$ and $j_{\rho(1)} \leq \cdots \leq j_{\rho(n)}$. To complete the calculation, we remark that graphically\footnote{We will use the right hand side of \eqref{running-example} as a sufficiently general running example, preferring this to a rigorous algebraic proof, which would be unnecessarily tedious.} 
\begin{align}
\label{running-example}
\Big[
S^{[m,0]}_{\sigma(1) \dots \sigma(n)}\left(x_{\sigma(1)},\dots,x_{\sigma(n)}\right)
\mathcal{R}^{1\dots n}_{\sigma}
\mathcal{R}^{\rho}_{1\dots n}
\Big]_{i_1\dots i_n}^{j_1\dots j_n}
=
\begin{tikzpicture}[scale=0.6,baseline=(mid.base)]
\node (mid) at (1,4) {};
\draw[fline] plot [smooth] coordinates {(2,6.5) (2.5,6.5) (5.5,5.5) (7.5,3.5) (8.5,3.5)};
\draw[fline] plot [smooth] coordinates {(2,5.5) (2.5,5.5) (5.5,4.5) (7.5,2.5) (8.5,2.5)};
\draw[fline] plot [smooth] coordinates {(2,4.5) (2.5,4.5) (5.5,6.5) (7.5,4.5) (8.5,4.5)};
\draw[fline] plot [smooth] coordinates {(2,3.5) (2.5,3.5) (5.5,2.5) (7.5,5.5) (8.5,5.5)};
\draw[fline] plot [smooth] coordinates {(2,2.5) (2.5,2.5) (5.5,3.5) (7.5,6.5) (8.5,6.5)};
\draw[fline] (2,1.5) -- (8.5,1.5);
\draw[dotted] (5.5,1) -- (5.5,7);
\foreach\y in {1,...,6}{
\dark{1}{\y};
}
\node at (1.5,0.5) {$\ss m$};
\node at (1.5,7.5) {$\ss 0$};
\foreach\x in {1,...,3}{
\ebull{1}{0.5+\x}{0.09};
\gbull{1}{3.5+\x}{0.09};
}
\foreach\x in {1,...,4}{
\gbull{8.5}{0.5+\x}{0.09};
}
\foreach\x in {5,...,6}{
\ebull{8.5}{0.5+\x}{0.09};
}
\node at (0.3,6.5) {$\ss \red x_{\sigma(1)} $};
\node at (0.3,4.2) {$\red \vdots$};
\node at (0.3,1.5) {$\ss \red x_{\sigma(n)} $};
\node at (4,1) {$\ss \red R^{1\dots n}_{\sigma}$};
\node at (7,1) {$\ss \red R^{\rho}_{1\dots n}$};
\end{tikzpicture}
\end{align}
in which the states on the left edge of the column are ordered, with holes occupying the lowest edges followed by particles occupying the highest ones. This follows from the fact that 
$i_{\sigma(1)} \geq \cdots \geq i_{\sigma(n)}$. Similarly, at the right edge of the product of $\mathcal{R}$ matrices, the lowest edges are occupied by particles, while the highest ones are empty. This comes from the fact that $j_{\rho(1)} \leq \cdots \leq j_{\rho(n)}$.
Since the occupation number at the top of the column is 0, the tiles with a particle at their left edge freeze, and we obtain
\begin{align}
\label{...2}
\Big[
S^{[m,0]}_{\sigma(1) \dots \sigma(n)}\left(x_{\sigma(1)},\dots,x_{\sigma(n)}\right)
\mathcal{R}^{1\dots n}_{\sigma}
\mathcal{R}^{\rho}_{1\dots n}
\Big]_{i_1\dots i_n}^{j_1\dots j_n}
=
\begin{tikzpicture}[scale=0.6,baseline=(mid.base)]
\node (mid) at (1,4) {};
\draw[fline] plot [smooth] coordinates {(3.5,6.5) (5.5,5.5) (7.5,3.5) (8.5,3.5)};
\draw[fline] plot [smooth] coordinates {(3.5,5.5) (5.5,4.5) (7.5,2.5) (8.5,2.5)};
\draw[fline] plot [smooth] coordinates {(3.5,4.5) (5.5,6.5) (7.5,4.5) (8.5,4.5)};
\draw[fline] plot [smooth] coordinates {(2,3.5) (2.5,3.5) (5.5,2.5) (7.5,5.5) (8.5,5.5)};
\draw[fline] plot [smooth] coordinates {(2,2.5) (2.5,2.5) (5.5,3.5) (7.5,6.5) (8.5,6.5)};
\draw[fline] (2,1.5) -- (8.5,1.5);
\draw[dotted] (5.5,1) -- (5.5,7);
\foreach\y in {1,...,3}{
\dark{1}{\y};
}
\node at (1.5,0.5) {$\ss m$};
\node at (1.5,4.5) {$\ss 0$};
\foreach\x in {1,...,3}{
\ebull{1}{0.5+\x}{0.09};
\gbull{3.5}{3.5+\x}{0.09};
}
\foreach\x in {1,...,4}{
\gbull{8.5}{0.5+\x}{0.09};
}
\foreach\x in {5,...,6}{
\ebull{8.5}{0.5+\x}{0.09};
}
\node at (2,5.5) {$\ss \red x_k: i_k = \part $};
\draw[dred] (3.2,4) -- (3,4) -- (3,7) -- (3.2,7);
\node at (-0.5,2.5) {$\ss \red x_l: i_l = \hole $};
\draw[dred] (0.7,1) -- (0.5,1) -- (0.5,4) -- (0.7,4);
\node at (10,6) {$\ss \red x_l: j_l = \hole$};
\draw[dred] (8.8,5.2) -- (9,5.2) -- (9,7) -- (8.8,7);
\node at (10,3) {$\ss \red x_k: j_k = \part$};
\draw[dred] (8.8,1) -- (9,1) -- (9,4.8) -- (8.8,4.8);
\node at (4,1) {$\ss \red R^{1\dots n}_{\sigma}$};
\node at (7,1) {$\ss \red R^{\rho}_{1\dots n}$};
\end{tikzpicture}
\end{align}
where we have indicated groups of lines/rapidities with a common edge state. Studying the right hand side of \eqref{...2} and using particle-conservation arguments, it is not hard to conclude that it vanishes if $i_k = \part$, $j_k=\hole$ for some $1\leq k \leq n$ (such a situation would give rise to a vertex which is not among the six in \eqref{six-v}). Furthermore, all non-vanishing components have the common factor $\prod_{k: i_k = \part} \prod_{l: j_l = \hole} (x_l-x_k)/(x_l-t x_k)$, due to the fact that 
\begin{align*}
\begin{tikzpicture}[scale=0.6,baseline=-1]
\draw[thick, smooth] (1,0) -- (-1,0);
\node[label={left: \fs \red $x$}] at (-1,0) {};
\draw[thick, smooth] (0,1) -- (0,-1);
\node[label={below: \fs \red $y$}] at (0,-1) {};
\gbull{-0.5}{0}{0.09}; \gbull{0.5}{0}{0.09};
\gbull{0}{-0.5}{0.09}; \gbull{0}{0.5}{0.09};
\end{tikzpicture}
=
1,
\qquad
\begin{tikzpicture}[scale=0.6,baseline=-1]
\draw[thick, smooth] (1,0) -- (-1,0);
\node[label={left: \fs \red $x$}] at (-1,0) {};
\draw[thick, smooth] (0,-1) -- (0,1);
\node[label={below: \fs \red $y$}] at (0,-1) {};
\gbull{-0.5}{0}{0.09}; \gbull{0.5}{0}{0.09};
\ebull{0}{-0.5}{0.09}; \ebull{0}{0.5}{0.09};
\end{tikzpicture}
=
\frac{1-x/y}{1-t x/y},
\end{align*}
in the normalization of the $\mathcal{R}$ matrix. Eliminating the vertices which give rise to this overall factor, we find that
\begin{multline*}
\Big[
S^{[m,0]}_{\sigma(1) \dots \sigma(n)}\left(x_{\sigma(1)},\dots,x_{\sigma(n)}\right)
\mathcal{R}^{1\dots n}_{\sigma}
\mathcal{R}^{\rho}_{1\dots n}
\Big]_{i_1\dots i_n}^{j_1\dots j_n}
=
\\[-1cm]
\delta_{0,\#\{k:i_k=\part, j_k=\hole\}}
\prod_{k: i_k = \part}\ \prod_{l: j_l = \hole} 
\frac{(x_l-x_k)}{(x_l-t x_k)}
\times
\left(
\begin{tikzpicture}[scale=0.6,baseline=(mid.base)]
\node (mid) at (1,2.5) {};
\draw[fline] plot [smooth] coordinates {(2,3.5) (2.5,3.5) (4.5,2.5)};
\draw[fline] plot [smooth] coordinates {(2,2.5) (2.5,2.5) (4.5,3.5)};
\draw[fline] (2,1.5) -- (4.5,1.5);
\foreach\y in {1,...,3}{
\dark{1}{\y};
}
\node at (1.5,0.5) {$\ss m$};
\node at (1.5,4.5) {$\ss 0$};
\foreach\x in {1,...,3}{
\ebull{1}{0.5+\x}{0.09};
}
\gbull{4.5}{1.5}{0.09};
\ebull{4.5}{2.5}{0.09};
\ebull{4.5}{3.5}{0.09};
\node at (-0.5,2.5) {$\ss \red x_l: i_l = \hole $};
\draw[dred] (0.7,1) -- (0.5,1) -- (0.5,4) -- (0.7,4);
\node at (6,3) {$\ss \red x_l: j_l = \hole$};
\draw[dred] (4.8,2.2) -- (5,2.2) -- (5,4) -- (4.8,4);
\node at (6,1.25) {$\ss \red \substack{x_k: i_k = \hole \\ \quad j_k = \part}$};
\draw[dred] (4.8,1) -- (5,1) -- (5,2) -- (4.8,2);
\end{tikzpicture}
\right).
\end{multline*}
For the term remaining in parentheses, we have
\begin{align}
\begin{tikzpicture}[scale=0.6,baseline=(mid.base)]
\node (mid) at (1,2.5) {};
\draw[fline] plot [smooth] coordinates {(2,3.5) (2.5,3.5) (4.5,2.5)};
\draw[fline] plot [smooth] coordinates {(2,2.5) (2.5,2.5) (4.5,3.5)};
\draw[fline] (2,1.5) -- (4.5,1.5);
\foreach\y in {1,...,3}{
\dark{1}{\y};
}
\node at (1.5,0.5) {$\ss m$};
\node at (1.5,4.5) {$\ss 0$};
\foreach\x in {1,...,3}{
\ebull{1}{0.5+\x}{0.09};
}
\gbull{4.5}{1.5}{0.09};
\ebull{4.5}{2.5}{0.09};
\ebull{4.5}{3.5}{0.09};
\node at (-0.5,2.5) {$\ss \red x_l: i_l = \hole $};
\draw[dred] (0.7,1) -- (0.5,1) -- (0.5,4) -- (0.7,4);
\node at (6,3) {$\ss \red x_l: j_l = \hole$};
\draw[dred] (4.8,2.2) -- (5,2.2) -- (5,4) -- (4.8,4);
\node at (6,1.25) {$\ss \red \substack{x_k: i_k = \hole \\ \quad j_k = \part}$};
\draw[dred] (4.8,1) -- (5,1) -- (5,2) -- (4.8,2);
\end{tikzpicture}
=
\begin{tikzpicture}[scale=0.6,baseline=(mid.base)]
\node (mid) at (1,2.5) {};
\draw[fline] plot [smooth] coordinates {(1,3.5) (0.5,3.5) (-1.5,2.5)};
\draw[fline] plot [smooth] coordinates {(1,2.5) (0.5,2.5) (-1.5,3.5)};
\draw[fline] (1,1.5) -- (-1.5,1.5);
\foreach\y in {1,...,3}{
\dark{1}{\y};
}
\node at (1.5,0.5) {$\ss m$};
\node at (1.5,4.5) {$\ss 0$};
\foreach\x in {1,...,3}{
\ebull{-1.5}{0.5+\x}{0.09};
}
\gbull{2}{1.5}{0.09};
\ebull{2}{2.5}{0.09};
\ebull{2}{3.5}{0.09};
\node at (-3,2.5) {$\ss \red x_l: i_l = \hole $};
\draw[dred] (-1.8,1) -- (-2,1) -- (-2,4) -- (-1.8,4);
\node at (3.5,3) {$\ss \red x_l: j_l = \hole$};
\draw[dred] (2.3,2.2) -- (2.5,2.2) -- (2.5,4) -- (2.3,4);
\node at (3.5,1.25) {$\ss \red \substack{x_k: i_k = \hole \\ \quad j_k = \part}$};
\draw[dred] (2.3,1) -- (2.5,1) -- (2.5,2) -- (2.3,2);
\end{tikzpicture}
&=
\begin{tikzpicture}[scale=0.6,baseline=(mid.base)]
\node (mid) at (1,2.5) {};
\foreach\y in {1,...,3}{
\dark{1}{\y};
}
\node at (1.5,0.5) {$\ss m$};
\node at (1.5,4.5) {$\ss 0$};
\foreach\x in {1,...,3}{
\ebull{1}{0.5+\x}{0.09};
}
\gbull{2}{1.5}{0.09};
\ebull{2}{2.5}{0.09};
\ebull{2}{3.5}{0.09};
\node at (-0.5,2.5) {$\ss \red x_l: i_l = \hole $};
\draw[dred] (0.7,1) -- (0.5,1) -- (0.5,4) -- (0.7,4);
\node at (3.5,3) {$\ss \red x_l: j_l = \hole$};
\draw[dred] (2.3,2.2) -- (2.5,2.2) -- (2.5,4) -- (2.3,4);
\node at (3.5,1.25) {$\ss \red \substack{x_k: i_k = \hole \\ \quad j_k = \part}$};
\draw[dred] (2.3,1) -- (2.5,1) -- (2.5,2) -- (2.3,2);
\end{tikzpicture}
\\
&=
\delta_{m,\#\{k:i_k=\hole, j_k=\part\}} 
\prod_{l: i_l=\hole} x_l,
\nonumber
\end{align}
where the first equality follows from the intertwining equation \eqref{rll}, the second equality from the fact that $\begin{tikzpicture}[scale=0.6,baseline=-1]
\draw[thick, smooth] (1,0) -- (-1,0);
\node[label={left: \fs \red $x$}] at (-1,0) {};
\draw[thick, smooth] (0,1) -- (0,-1);
\node[label={below: \fs \red $y$}] at (0,-1) {};
\ebull{-0.5}{0}{0.09}; \ebull{0.5}{0}{0.09};
\ebull{0}{-0.5}{0.09}; \ebull{0}{0.5}{0.09};
\end{tikzpicture}
=1$, and the third equality from the admissible tiles \eqref{tiles}. In summary, we have shown that
\begin{multline*}
\Big[
S^{[m,0]}_{\sigma(1) \dots \sigma(n)}\left(x_{\sigma(1)},\dots,x_{\sigma(n)}\right)
\mathcal{R}^{1\dots n}_{\sigma}
\mathcal{R}^{\rho}_{1\dots n}
\Big]_{i_1\dots i_n}^{j_1\dots j_n}
=
\\
\delta_{0,\#\{k:i_k=\part, j_k=\hole\}}
\delta_{m,\#\{k:i_k=\hole, j_k=\part\}} 
\prod_{l: i_l=\hole} x_l
\prod_{k: i_k = \part}\ \prod_{l: j_l = \hole} 
\frac{(x_l-x_k)}{(x_l-t x_k)}.
\end{multline*}
Finally, combining with equation \eqref{...1} and the fact that
\begin{align*}
\prod_{1\leq k<l \leq n}
b^{-1}_{j_k,j_l}(x_k,x_l)
=
\prod_{k: j_k = \part}\ \prod_{l: j_l = \hole} 
\frac{(x_l-t x_k)}{(x_l-x_k)},
\end{align*}
there is some cancellation and we obtain
\begin{align}
\label{twist-comp}
\Big[
\widetilde{S}^{[m,0]}_{1\dots n}(x_1,\dots,x_n)
\Big]_{i_1\dots i_n}^{j_1\dots j_n}
=
\delta_{0,\#\{k:i_k=\part, j_k=\hole\}}
\delta_{m,\#\{k:i_k=\hole, j_k=\part\}} 
\prod_{l: i_l=\hole} x_l
\prod_{k: \left\{ \substack{i_k = \hole \\ j_k = \part} \right.}\ \prod_{l: j_l = \hole} 
\frac{(x_l-t x_k)}{(x_l-x_k)}.
\end{align}
It is immediate that the $2^n \times 2^n$ matrix \eqref{twist-explicit} has the same components as those given by 
\eqref{twist-comp}.
\end{proof}

Returning to \eqref{P-columns2} and using the explicit form \eqref{twist-explicit} of the twisted column-to-column transfer matrices, we now recover the sum formula that was the aim of this subsection.
\begin{lem}
\label{lem:P-sym}
Hall--Littlewood polynomials are given by
\begin{align}
\label{P-sym}
P_{\lambda}(x_1,\dots,x_n;t)
=
\sum_{\sigma \in S_n/S^{\lambda}_n}
\
\sigma
\left(
\prod_{i=1}^{n}
x_{i}^{\lambda_i}
\prod_{i,j:\lambda_i>\lambda_j}
\frac{(x_{i}-tx_{j})}{(x_{i}-x_{j})}
\right)
\end{align}
where $S_n$ is the ordinary symmetric group, and $S_n^{\lambda}$ is the subgroup of $S_n$ which stabilizes $\lambda$.
\end{lem}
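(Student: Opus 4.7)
Starting from the column decomposition \eqref{P-columns2}, my plan is to substitute the explicit form \eqref{twist-explicit} of $\widetilde{S}^{[m,0]}$ given by Theorem \ref{thm:twisting} and evaluate the resulting matrix element between $\bra{\hole,\dots,\hole}$ and $\ket{\part,\dots,\part}$ directly. Each factor in the tensor product \eqref{twist-explicit} is upper triangular in the basis $\{\hole,\part\}$, so along each of the $n$ auxiliary lines the state can only transition once, from $\hole$ to $\part$, and since the boundary covectors/vectors are all-$\hole$ and all-$\part$ respectively, every line must transition exactly once.

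My first step will be to expand $\prod_{i=0}^{\infty} \widetilde{S}^{[m_i(\lambda),0]}$ as a sum over tuples $(I_0, I_1, \dots)$ with $|I_i| = m_i(\lambda)$, and to observe that non-zero contributions require the $I_i$ to be pairwise disjoint (a line already in state $\part$ is annihilated by the off-diagonal block indexed by $k \in I_i$). Such tuples are in bijection with functions $\tau : \{1,\dots,n\} \to \mathbb{Z}_{\geq 0}$ satisfying $|\tau^{-1}(i)| = m_i(\lambda)$, where $\tau(k)$ records the unique column at which line $k$ flips. Reading off the weight of a fixed $\tau$ from \eqref{twist-explicit}, line $k$ contributes $x_k$ at column $\tau(k)$ (from the off-diagonal block $I_{\tau(k)}$), the diagonal factor $x_k \prod_{l \in I_i}(x_k - tx_l)/(x_k - x_l)$ at each column $i < \tau(k)$ where it is still $\hole$, and $1$ at each subsequent column; multiplying over $k$ yields a total weight
\begin{align*}
\prod_{k=1}^n x_k^{\tau(k)+1} \prod_{\substack{1 \leq k, l \leq n \\ \tau(l) < \tau(k)}} \frac{x_k - t x_l}{x_k - x_l}.
\end{align*}
The a priori infinite product over columns collapses to a finite one, since $m_i(\lambda) = 0$ once $i > \lambda_1$ and the corresponding factors then act as the identity on $\ket{\part,\dots,\part}$.

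Finally, I will reindex the sum over $\tau$ as a sum over cosets $\sigma \in S_n/S_n^{\lambda}$ by declaring $\tau(\sigma(i)) = \lambda_i$; two permutations produce the same $\tau$ precisely when they differ by an element of the stabilizer $S_n^{\lambda}$, so this is a well-defined bijection. Under this substitution, $\prod_k x_k^{\tau(k)+1}$ becomes $(\prod_i x_i) \cdot \sigma\bigl(\prod_i x_i^{\lambda_i}\bigr)$, and the condition $\tau(l) < \tau(k)$ pulls back to $\lambda_{\sigma^{-1}(l)} < \lambda_{\sigma^{-1}(k)}$, which turns the rational prefactor into $\sigma\bigl(\prod_{\lambda_i > \lambda_j}(x_i - tx_j)/(x_i - x_j)\bigr)$. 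Dividing \eqref{P-columns2} through by $\prod_i x_i$ then yields the desired identity \eqref{P-sym}. The main technical obstacle I anticipate is keeping the combinatorial bookkeeping clean, in particular verifying rigorously the bijection between non-vanishing tuples $(I_0,I_1,\dots)$ and cosets $S_n/S_n^{\lambda}$, and checking that the accumulated diagonal weights across columns $i < \tau(k)$ reorganize without error into the stated product over pairs $(k,l)$ with $\tau(l) < \tau(k)$.
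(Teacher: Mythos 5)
Your proposal is correct and follows essentially the same route as the paper: both start from the column decomposition \eqref{P-columns2}, truncate the product at $i=\lambda_1$, substitute the explicit twisted operators \eqref{twist-explicit}, and use their spin-raising (upper-triangular) structure to reduce the matrix element to a sum over cosets $S_n/S_n^{\lambda}$ of a single factorized term. The only difference is one of presentation: where the paper asserts the result as a symmetrization of the ``simplest configuration'' and illustrates it with a picture, you carry out the expansion over disjoint tuples $(I_0,I_1,\dots)$ and the bijection with cosets explicitly, which is a legitimate (and arguably more rigorous) filling-in of the same argument.
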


\begin{proof}
For $i > \lambda_1$ (the largest part of $\lambda$), we clearly have $m_i(\lambda) = 0$. Given that 
$\widetilde{S}^{[0,0]}(x_1,\dots,x_n) \ket{\part,\dots,\part} = \ket{\part,\dots,\part}$, the product in \eqref{P-columns2} truncates and we are able to write
\begin{align}
\label{truncated}
\prod_{i=1}^{n} x_i
P_{\lambda}(x_1,\dots,x_n;t)
=
\bra{\hole,\dots,\hole}
\prod_{i=0}^{\lambda_1}
\widetilde{S}^{[m_i(\lambda),0]}(x_1,\dots,x_n)
\ket{\part,\dots,\part}.
\end{align}
We now use the explicit form \eqref{twist-explicit} of the twisted column-to-column transfer matrices. 
Since \eqref{twist-explicit} contains only spin-raising operators (in our language, operators which map $\ket{\part} \rightarrow \ket{\hole}$ in the auxiliary spaces), one can deduce that
\begin{align}
\label{symmet}
\prod_{i=1}^{n} x_i
P_{\lambda}(x_1,\dots,x_n;t)
=
{\rm Sym}
\left\{
\prod_{i=1}^{n}
x_i^{\lambda_i+1}
\prod_{i,j: \lambda_i > \lambda_j}
\frac{(x_i-tx_j)}{(x_i-x_j)}
\right\},
\end{align}
where the term in braces is the simplest possible one that arises in \eqref{truncated}. Namely, it comes from the unique configuration
\begin{align*}
\begin{tikzpicture}[scale=0.7,baseline=0]
\foreach\y in {1,...,5}{
\foreach\x in {1,...,5}{
\dark{\x}{\y}
}
}
\ebull{1}{1.5}{0.07}; \ebull{1}{2.5}{0.07}; \ebull{1}{3.5}{0.07}; \ebull{1}{4.5}{0.07}; \ebull{1}{5.5}{0.07};
\gbull{6}{1.5}{0.07}; \gbull{6}{2.5}{0.07}; \gbull{6}{3.5}{0.07}; \gbull{6}{4.5}{0.07}; \gbull{6}{5.5}{0.07};
\gbull{1.5}{1}{0.07};
\gbull{2.5}{1}{0.07}; 
\gbull{4.4}{1}{0.07}; \gbull{4.6}{1}{0.07}; \gbull{5.5}{1}{0.07};
\draw[gline] (6,1.5) -- (1.5,1.5) -- (1.5,1);
\draw[gline] (6,2.5) -- (2.5,2.5) -- (2.5,1);
\draw[gline] (6,4.5) -- (4.4,4.5) -- (4.4,1);
\draw[gline] (6,3.5) -- (4.6,3.5) -- (4.6,1);
\draw[gline] (6,5.5) -- (5.5,5.5) -- (5.5,1);
\foreach\y in {1,...,5}{
\node at (0.5,-\y+6.5) {$\ss \red x_{\y}$};
}
\node at (1.5,6.5) {$\ss 0$};
\node at (2.5,6.5) {$\ss 0$};
\node at (3.5,6.5) {$\ss 0$};
\node at (4.5,6.5) {$\ss 0$};
\node at (5.5,6.5) {$\ss 0$};
\node at (1.5,0.5) {$\ss m_0$};
\node at (2.5,0.5) {$\ss m_1$};
\node at (3.5,0.5) {$\ss m_2$};
\node at (4.5,0.5) {$\ss m_3$};
\node at (5.5,0.5) {$\ss m_4$};
\node at (7,3.5) {$\equiv$}; 
\end{tikzpicture}
\
\begin{tikzpicture}[scale=0.7,baseline=0]
\foreach\y in {1,...,5}{
\foreach\x in {1,...,5}{
\dark{\x}{\y}
}
}
\bbull{1}{1.5}{0.07}; \bbull{1}{2.5}{0.07}; \bbull{1}{3.5}{0.07}; \bbull{1}{4.5}{0.07}; \bbull{1}{5.5}{0.07};
\ebull{6}{1.5}{0.07}; \ebull{6}{2.5}{0.07}; \ebull{6}{3.5}{0.07}; \ebull{6}{4.5}{0.07}; \ebull{6}{5.5}{0.07};
\bbull{1.5}{1}{0.07};
\bbull{2.5}{1}{0.07}; 
\bbull{4.4}{1}{0.07}; \bbull{4.6}{1}{0.07}; \bbull{5.5}{1}{0.07};
\draw[bline] (1,1.5) -- (1.5,1.5) -- (1.5,1);
\draw[bline] (1,2.5) -- (2.5,2.5) -- (2.5,1);
\draw[bline] (1,3.5) -- (4.4,3.5) -- (4.4,1);
\draw[bline] (1,4.5) -- (4.6,4.5) -- (4.6,1);
\draw[bline] (1,5.5) -- (5.5,5.5) -- (5.5,1);
\foreach\y in {1,...,5}{
\node at (0.5,-\y+6.5) {$\ss \red x_{\y}$};
}
\node at (1.5,6.5) {$\ss 0$};
\node at (2.5,6.5) {$\ss 0$};
\node at (3.5,6.5) {$\ss 0$};
\node at (4.5,6.5) {$\ss 0$};
\node at (5.5,6.5) {$\ss 0$};
\node at (1.5,0.5) {$\ss m_0$};
\node at (2.5,0.5) {$\ss m_1$};
\node at (3.5,0.5) {$\ss m_2$};
\node at (4.5,0.5) {$\ss m_3$};
\node at (5.5,0.5) {$\ss m_4$};
\node at (7,3.5) {$=$};
\end{tikzpicture}
\
\begin{tikzpicture}[scale=0.7,baseline=0]
\node at (11,5.5) {$x_1^{5} \frac{(x_1-t x_2)(x_1-t x_3)(x_1-t x_4)(x_1-t x_5)}
{(x_1-x_2)(x_1-x_3)(x_1-x_4)(x_1-x_5)}$};
\node at (11,4.5) {$x_2^{4} \frac{(x_2-t x_4)(x_2-t x_5)}{(x_2-x_4)(x_2-x_5)}$};
\node at (11,3.5) {$x_3^{4} \frac{(x_3-t x_4)(x_3-t x_5)}{(x_3-x_4)(x_3-x_5)}$};
\node at (11,2.5) {$x_4^{2} \frac{(x_4-t x_5)}{(x_4-x_5)}$};
\node at (11,1.5) {$x_5$};
\end{tikzpicture}
\end{align*}
shown here in the case $\lambda = (4,3,3,1,0)$, where each column is now to be understood as a twisted operator of the form \eqref{twist-explicit}. In view of \eqref{twist-explicit}, the tile $\begin{tikzpicture}[scale=0.5,baseline=-0.1cm] \dark{-0.5}{-0.5};  \gbull{-0.5}{0}{0.09};\ebull{0.5}{0}{0.09}; \end{tikzpicture} \equiv \begin{tikzpicture}[scale=0.5,baseline=-0.1cm] \dark{-0.5}{-0.5};  \ebull{-0.5}{0}{0.09};\bbull{0.5}{0}{0.09}; \end{tikzpicture}$ is now forbidden, while the Boltzmann weight of $\begin{tikzpicture}[scale=0.5,baseline=-0.1cm] \dark{-0.5}{-0.5};  \ebull{-0.5}{0}{0.09};\ebull{0.5}{0}{0.09}; \end{tikzpicture} \equiv \begin{tikzpicture}[scale=0.5,baseline=-0.1cm] \dark{-0.5}{-0.5};  \bbull{-0.5}{0}{0.09};\bbull{0.5}{0}{0.09}; \end{tikzpicture}$ is now non-local (it depends on the other tiles in its column). The sum over all possible terms in \eqref{truncated} is achieved by the symmetrization \eqref{symmet}. Cancelling the common factor $\prod_{i=1}^{n} x_i$ from both sides of \eqref{symmet}, we obtain \eqref{P-sym}. 

\end{proof}

\begin{rmk}
Hall--Littlewood polynomials can also be expressed as a sum over the full symmetric group $S_n$, without dividing out the subgroup $S^{\lambda}_n$, at the expense of an overall normalization: 
\begin{align}
\label{P-sym-2}
P_{\lambda}(x_1,\dots,x_n;t)
=
\frac{1}{v_{\lambda}(t)}
\sum_{\sigma \in S_n}
\
\sigma
\left(
\prod_{i=1}^{n}
x_{i}^{\lambda_i}
\prod_{1 \leq i<j \leq n}
\frac{(x_{i}-tx_{j})}{(x_{i}-x_{j})}
\right),
\qquad
v_{\lambda}(t)
=
\prod_{i=0}^{\infty}
\left(
\prod_{j=1}^{m_i(\lambda)}
\frac{1-t^j}{1-t}
\right).
\end{align}
For more details on the equivalence between \eqref{P-sym} and \eqref{P-sym-2}, see Section 1, Chapter III of \cite{mac}.

\end{rmk}

\subsection{$BC_n$ Hall--Littlewood polynomials: lattice paths}
\label{sec:bc-lattice-paths}

\begin{defn}
Let $BC_n$ Hall--Littlewood polynomials $K_{\lambda}(x^{\pm1};t;\gamma,\delta)$ and $L_{\lambda}(x^{\pm1};t;\gamma,\delta)$ be defined as follows: 
\begin{align}
\label{K}
\prod_{i=1}^{n}
(x_i-t \b{x}_i)
K_{\lambda}(x_1^{\pm1},\dots,x_n^{\pm1};t;\gamma,\delta)
&=
\langle \widehat{0} |
\otimes
\langle \lambda |
\mathbb{T}_{--}(x_n;\gamma,\delta)
\dots
\mathbb{T}_{--}(x_1;\gamma,\delta)
|\widehat{0} \rangle
\otimes
|0 \rangle
\\
\label{L}
\prod_{i=1}^{n}
(\b{x}_i-t x_i)
L_{\lambda}(x_1^{\pm1},\dots,x_n^{\pm1};t;\gamma,\delta)
&=
\langle \widehat{0} |
\otimes
\langle 0 |
\mathbb{T}^{*}_{++}(x_1;\gamma,\delta)
\dots
\mathbb{T}^{*}_{++}(x_n;\gamma,\delta)
|\widehat{0} \rangle
\otimes
|\lambda \rangle
\end{align}
where $|\widehat{0} \rangle = |0\rangle_{-2} \otimes |0\rangle_{-1} \in \widehat{V}$ and $|0\rangle, |\lambda \rangle \in V$, as usual.
\end{defn}

\begin{rmk}{\rm 
For the moment, we treat \eqref{K} and \eqref{L} as definitions of the hyperoctahedral Hall--Littlewood polynomial and its dual, respectively. The fact that it is valid to do so will not be demonstrated until the next subsection, where we will show that \eqref{K} is equivalent to a known formula for $BC_n$ Hall--Littlewood polynomials, in terms of a sum over the hyperoctahedral group \cite{ven}.

It is important to mention that \eqref{K} is not the most general type of $BC_n$ symmetric Hall--Littlewood polynomial, since only two boundary parameters $\gamma,\delta$ are present. As written, \eqref{K} is the $q=0$ case of a two-parameter Koornwinder polynomial, originally defined by Macdonald \cite{mac2}. A four-parameter version of \eqref{K} should also be attainable, but the obvious approach in this direction (namely augmenting the lattice to include four boundary sites, each with its own parameter) does not lead to four-parameter hyperoctahedral Hall--Littlewood polynomials as they appear in \cite{ven}. We have not attempted to resolve this issue in the current work, sticking to two boundary parameters for simplicity.
} 
\end{rmk}

\begin{rmk}{\rm 
Equations \eqref{K} and \eqref{L} allow the $BC_n$ Hall--Littlewood polynomials to be calculated as sums over lattice paths. Indeed, by replacing each double-row transfer matrix with its graphical counterpart, we typically obtain configurations of the form
\begin{equation*}
\begin{tikzpicture}[scale=0.7]
\foreach\y in {1,...,6}{
\foreach\x in {2,...,6}{
\dark{\x}{\y};
\bdark{1}{\y};\bdark{1.5}{\y};
}
}
\foreach\y in {0.5,2.5,4.5}{
\begin{scope}[xscale=-1]
\draw[smooth] (-0.5,\y+1) arc (-90:90:0.5);
\draw[smooth] (-1,\y+2)--(-0.5,\y+2);
\draw[smooth] (-1,\y+1)--(-0.5,\y+1);
\node at (0,\y+1.5) {$\bullet$};
\end{scope}
}
\foreach\y in {1,2,3}{
\node at (-0.5,-2*\y+8.5) {$\ss \red x_{\y}$};
\node at (-0.5,-2*\y+7.5) {$\ss \red \b{x}_{\y}$};
}
\draw[bline] (1,2.5) -- (1.75,2.5) -- (1.75,1.5) -- (4.3,1.5) -- (4.3,1);
\draw[bline] (1,4.5) -- (3.5,4.5) -- (3.5,3.5) -- (4.5,3.5) -- (4.5,1);
\draw[bline] (1,5.5) -- (4.7,5.5) -- (4.7,2.5) -- (5.5,2.5) -- (5.5,1);
\ebull{1}{1.5}{0.07}; \ebull{1}{3.5}{0.07}; \ebull{1}{6.5}{0.07};
\bbull{1}{2.5}{0.07}; \bbull{1}{4.5}{0.07}; \bbull{1}{5.5}{0.07};
\bbull{4.3}{1}{0.07}; \bbull{4.5}{1}{0.07}; \bbull{5.5}{1}{0.07};
\node at (1.25,7.5) {$\ss 0$};
\node at (1.75,7.5) {$\ss 0$};
\node at (2.5,7.5) {$\ss 0$};
\node at (3.5,7.5) {$\ss 0$};
\node at (4.5,7.5) {$\ss 0$};
\node at (5.5,7.5) {$\ss 0$};
\node at (6.5,7.5) {$\ss 0$};
\node at (1.25,0.5) {$\ss 0$};
\node at (1.75,0.5) {$\ss 0$};
\node at (2.5,0.5) {$\ss m_0$};
\node at (3.5,0.5) {$\ss m_1$};
\node at (4.5,0.5) {$\ss m_2$};
\node at (5.5,0.5) {$\ss m_3$};
\node at (6.5,0.5) {$\ss m_4$};
\end{tikzpicture}
\qquad
\begin{tikzpicture}[scale=0.7]
\foreach\y in {1,...,6}{
\foreach\x in {2,...,6}{
\light{\x}{\y};
\blight{1}{\y};\blight{1.5}{\y};
}
}
\foreach\y in {0.5,2.5,4.5}{
\begin{scope}[xscale=-1]
\draw[smooth] (-0.5,\y+1) arc (-90:90:0.5);
\draw[smooth] (-1,\y+2)--(-0.5,\y+2);
\draw[smooth] (-1,\y+1)--(-0.5,\y+1);
\node at (0,\y+1.5) {$\bullet$};
\end{scope}
}
\foreach\y in {1,2,3}{
\node at (-0.5,2*\y-0.5) {$\ss \red x_{\y}$};
\node at (-0.5,2*\y+0.5) {$\ss \red \b{x}_{\y}$};
}
\draw[gline] (1,5.5) -- (1.25,5.5) -- (1.25,6.5) -- (2.5,6.5) -- (2.5,7);
\draw[gline] (1,3.5) -- (1.75,3.5) -- (1.75,4.5) -- (2.5,4.5) -- (2.5,5.5) -- (3.5,5.5) -- (3.5,6.5) -- (4.4,6.5) -- (4.4,7);
\draw[gline] (1,2.5) -- (3.5,2.5) -- (3.5,3.5) -- (4.6,3.5) -- (4.6,6.5) -- (6.5,6.5) -- (6.5,7);
\ebull{1}{1.5}{0.07}; \ebull{1}{4.5}{0.07}; \ebull{1}{6.5}{0.07};
\gbull{1}{2.5}{0.07}; \gbull{1}{3.5}{0.07}; \gbull{1}{5.5}{0.07};
\gbull{2.5}{7}{0.07}; \gbull{4.4}{7}{0.07}; \gbull{6.5}{7}{0.07};
\node at (1.25,7.5) {$\ss 0$};
\node at (1.75,7.5) {$\ss 0$};
\node at (2.5,7.5) {$\ss m_0$};
\node at (3.5,7.5) {$\ss m_1$};
\node at (4.5,7.5) {$\ss m_2$};
\node at (5.5,7.5) {$\ss m_3$};
\node at (6.5,7.5) {$\ss m_4$};
\node at (1.25,0.5) {$\ss 0$};
\node at (1.75,0.5) {$\ss 0$};
\node at (2.5,0.5) {$\ss 0$};
\node at (3.5,0.5) {$\ss 0$};
\node at (4.5,0.5) {$\ss 0$};
\node at (5.5,0.5) {$\ss 0$};
\node at (6.5,0.5) {$\ss 0$};
\end{tikzpicture}
\end{equation*}
where the left lattice is a single configuration of the right hand side of \eqref{K} with $\lambda = (3,2,2)$, and the right lattice a single configuration of the right hand side of \eqref{L} with $\lambda = (4,2,0)$.
}
\end{rmk}

\begin{rmk}{\rm 
While \eqref{K} and \eqref{L} have natural combinatorial interpretations, they do not lead immediately to branching formulae for $BC_n$ Hall--Littlewood polynomials. For example, inserting a complete set of states in \eqref{K}, we find that
\begin{align*}
\prod_{i=1}^{n}
(x_i-t \b{x}_i)
K_{\lambda}(x_1^{\pm1},\dots,x_n^{\pm1};t;\gamma,\delta)
=
\sum_{i,j,\nu}
\langle 0 |_{-2}
\otimes
\langle 0 |_{-1}
\otimes
\langle \lambda |
\mathbb{T}_{--}(x_n;\gamma,\delta)
| i \rangle_{-2}
\otimes
| j \rangle_{-1}
\otimes
|\nu \rangle
\\
\times
\langle i |_{-2}
\otimes
\langle j |_{-1}
\otimes
\langle \nu |
\mathbb{T}_{--}(x_{n-1};\gamma,\delta)
\dots
\mathbb{T}_{--}(x_1;\gamma,\delta)
| 0 \rangle_{-2}
\otimes
| 0 \rangle_{-1}
\otimes
|0\rangle,
\end{align*}
where $i,j$ are summed over all possible non-negative integer values (it is easy to deduce that the only permissible values are $i=j=0$, $i=1,j=0$ and $i=0,j=1$), and $\nu$ is summed over all possible partitions. Since one of $i,j$ can be non-zero, this does not express a $BC_n$ Hall--Littlewood polynomial as a sum over those of type $BC_{n-1}$, but rather over a more general class of hyperoctahedrally symmetric polynomials. For this reason, in the presence of non-zero boundary parameters, we cannot easily relate our formula with the $q=0$ limit of the branching formula in \cite{vde3}.

However, in the special case when both boundary parameters are zero ($\gamma=\delta = 0$), the contribution from the boundary columns becomes trivial and we find that
\begin{align*}
\prod_{i=1}^{n}
(x_i-t \b{x}_i)
K_{\lambda}(x_1^{\pm1},\dots,x_n^{\pm1};t)
=
\sum_{\nu}
\langle 0 |_{-2}
\otimes
\langle 0 |_{-1}
\otimes
\langle \lambda |
\mathbb{T}_{--}(x_n;0,0)
| 0 \rangle_{-2}
\otimes
| 0 \rangle_{-1}
\otimes
|\nu \rangle
\\
\times
\langle 0 |_{-2}
\otimes
\langle 0 |_{-1}
\otimes
\langle \nu |
\mathbb{T}_{--}(x_{n-1};0,0)
\dots
\mathbb{T}_{--}(x_1;0,0)
| 0 \rangle_{-2}
\otimes
| 0 \rangle_{-1}
\otimes
|0\rangle,
\end{align*}
leading to a genuine branching rule. Calculating $\langle 0 |_{-2}
\otimes
\langle 0 |_{-1}
\otimes
\langle \lambda |
\mathbb{T}_{--}(x_n;0,0)
| 0 \rangle_{-2}
\otimes
| 0 \rangle_{-1}
\otimes
|\nu \rangle
$ explicitly, we obtain the branching formula
\begin{align*}
K_{\lambda}(x_1^{\pm1},\dots,x_n^{\pm1};t)
=
\sum_{\nu}
K_{\lambda/\nu}(x_n^{\pm1};t)
K_{\nu}(x_1^{\pm1},\dots,x_{n-1}^{\pm1};t),
\end{align*}
with
\begin{align*}
K_{\lambda/\nu}(x^{\pm1};t)
=
\frac{1}{x-t\b{x}}
\left\{
x
\sum_{\substack{\mu: \nu \preccurlyeq \mu \preccurlyeq \lambda \\ \ell(\mu) = \ell(\lambda)}}
-t\b{x}
\sum_{\substack{\mu: \nu \preccurlyeq \mu \preccurlyeq \lambda \\ \ell(\mu) = \ell(\lambda)-1}}
\right\}
x^{2|\mu|-|\lambda|-|\nu|} 
\psi^{(0)}_{\lambda/\mu}(t) \psi^{(0)}_{\mu/\nu}(t),
\end{align*}
(a function which is in fact symmetric under $x \leftrightarrow \b{x}$) and where we have defined
\begin{align*}
\psi^{(0)}_{\lambda/\mu}(t)
:=
\prod_{i \geq 0: m_i(\mu) = m_i(\lambda)+1}
(1-t^{m_i(\mu)}),
\qquad
\ell(\mu)
:=
\sum_{j=0}^{\infty} m_j(\mu).
\end{align*}
By similar arguments, we find that (for $\gamma=\delta=0$)
\begin{align*}
L_{\lambda}(x_1^{\pm1},\dots,x_n^{\pm1};t)
=
\sum_{\nu}
L_{\nu}(x_1^{\pm1},\dots,x_{n-1}^{\pm1};t)
L_{\lambda/\nu}(x_n^{\pm1};t),
\end{align*}
with
\begin{align*}
L_{\lambda/\nu}(x^{\pm1};t)
=
\frac{1}{\b{x}-tx}
\left\{
\b{x}
\sum_{\substack{\mu: \nu \preccurlyeq \mu \preccurlyeq \lambda \\ \ell(\mu) = \ell(\lambda)}}
-t x
\sum_{\substack{\mu: \nu \preccurlyeq \mu \preccurlyeq \lambda \\ \ell(\mu) = \ell(\lambda)-1}}
\right\}
x^{|\lambda|+|\nu|-2|\mu|} 
\varphi^{(0)}_{\lambda/\mu}(t) \varphi^{(0)}_{\mu/\nu}(t),
\end{align*}
where we have defined
\begin{align*}
\varphi^{(0)}_{\lambda/\mu}(t)
:=
\prod_{i \geq 0: m_i(\mu)+1 = m_i(\lambda)}
(1-t^{m_i(\lambda)}).
\end{align*}
Now since
\begin{align*}
\varphi^{(0)}_{\lambda/\mu}(t) 
= 
\frac{\prod_{j=1}^{m_0(\lambda)} (1-t^j) b_{\lambda}(t)}
{\prod_{k=1}^{m_0(\mu)} (1-t^k) b_{\mu}(t)}  
\psi^{(0)}_{\lambda/\mu}(t),
\end{align*} 
we conclude that
\begin{align*}
L_{\lambda/\nu}(x^{\pm1};t)
=
\frac{\prod_{j=1}^{m_0(\lambda)} (1-t^j) b_{\lambda}(t)}{\prod_{k=1}^{m_0(\nu)} (1-t^k) b_{\nu}(t)}
K_{\lambda/\nu}(x^{\pm1};t),
\end{align*}
which implies that more generally
\begin{align*}
L_{\lambda}(x_1^{\pm1},\dots,x_n^{\pm1};t)
=
\prod_{j=1}^{m_0(\lambda)} (1-t^j) b_{\lambda}(t)
K_{\lambda}(x_1^{\pm1},\dots,x_n^{\pm1};t).
\end{align*}
Hence the $BC_n$ Hall--Littlewood polynomial \eqref{K} and its dual \eqref{L} are equal up to a multiplicative factor  when $\gamma = \delta = 0$, in complete analogy with the $A_n$ Hall--Littlewood polynomials \eqref{P} and \eqref{Q}.

}
\end{rmk}

\subsection{$BC_n$ Hall--Littlewood polynomials: sum over hyperoctahedral group}

Here we proceed along parallel lines to Subsection \ref{sec:a_n-sym}. Our aim is to start from \eqref{K}, which slices the lattice in terms of double rows, and instead slice it by columns. In so doing, we will find that \eqref{K} coincides with a known formula for $BC_n$ Hall--Littlewood polynomials \cite{ven}, proving that \eqref{K} is indeed a valid expression for a Koornwinder polynomial at $q=0$. We begin by defining column-to-column transfer matrices on the double-row lattice:
\begin{align*}
\mathbb{S}^{[i]}(x^{\pm1}_1,\dots,x^{\pm1}_n)
&=
L_{\b{n}}(\b{x}_n,\phi_i,\phid_i)
L_{n}(x_n,\phi_i,\phid_i) 
\dots
L_{\b{1}}(\b{x}_1,\phi_i,\phid_i)
L_{1}(x_1,\phi_i,\phid_i)
\\
\mathbb{B}^{[i]}(x^{\pm1}_1,\dots,x^{\pm1}_n;\gamma)
&=
B^{(\gamma)}_{\b{n}}(\b{x}_n,\phi_i,\phid_i)
B^{(\gamma)}_{n}(x_n,\phi_i,\phid_i) 
\dots
B^{(\gamma)}_{\b{1}}(\b{x}_1,\phi_i,\phid_i)
B^{(\gamma)}_{1}(x_1,\phi_i,\phid_i)
\end{align*}
which are $4^{n} \times 4^{n}$ matrices formed by taking the tensor product of $2n$ $L$ ($B$) matrices with different auxiliary spaces, and elements of ${\rm End}(W_1 \otimes W_{\b{1}} \otimes \cdots \otimes W_n \otimes W_{\b{n}} \otimes V_i)$. Consider the components of 
$\mathbb{S}^{[i]}(x^{\pm1}_1,\dots,x^{\pm1}_n)$ and $\mathbb{B}^{[i]}(x^{\pm1}_1,\dots,x^{\pm1}_n;\gamma)$ in the bosonic space $V_i$:
\begin{align*}
\mathbb{S}^{[l,m]}(x^{\pm1}_1,\dots,x^{\pm1}_n)
&=
\bra{l}_i
L_{\b{n}}(\b{x}_n,\phi_i,\phid_i)
L_{n}(x_n,\phi_i,\phid_i) 
\dots
L_{\b{1}}(\b{x}_1,\phi_i,\phid_i)
L_{1}(x_1,\phi_i,\phid_i)
\ket{m}_i
\\
\mathbb{B}^{[0,0]}(x^{\pm1}_1,\dots,x^{\pm1}_n;\gamma)
&=
\bra{0}_i
B^{(\gamma)}_{\b{n}}(\b{x}_n,\phi_i,\phid_i)
B^{(\gamma)}_{n}(x_n,\phi_i,\phid_i) 
\dots
B^{(\gamma)}_{\b{1}}(\b{x}_1,\phi_i,\phid_i)
B^{(\gamma)}_{1}(x_1,\phi_i,\phid_i)
\ket{0}_i
\end{align*}
which have the simple graphical interpretation
\begin{align*}
\mathbb{S}^{[l,m]}(x^{\pm1}_1,\dots,x^{\pm1}_n)
=
\begin{tikzpicture}[scale=0.6,baseline=(mid.base)]
\node (mid) at (1,3) {};
\node at (0.5,0.5) {$\ss \red \b{x}_{n}$};
\node at (0.5,1.5) {$\ss \red x_{n}$};
\node at (0.5,3.3) {$\red \vdots$};
\node at (0.5,4.5) {$\ss \red \b{x}_{1}$};
\node at (0.5,5.5) {$\ss \red x_{1}$};
\foreach\y in {0,...,5}{
\dark{1}{\y};
}
\node at (1.5,-0.5) {$\ss l$};
\node at (1.5,6.5) {$\ss m$};
\end{tikzpicture}
\qquad
{\rm and}
\qquad
\mathbb{B}^{[0,0]}(x^{\pm1}_1,\dots,x^{\pm1}_n;\gamma)
=
\begin{tikzpicture}[scale=0.6,baseline=(mid.base)]
\node (mid) at (1,3) {};
\node at (0.5,0.5) {$\ss \red \b{x}_{n}$};
\node at (0.5,1.5) {$\ss \red x_{n}$};
\node at (0.5,3.3) {$\red \vdots$};
\node at (0.5,4.5) {$\ss \red \b{x}_{1}$};
\node at (0.5,5.5) {$\ss \red x_{1}$};
\foreach\y in {0,...,5}{
\bdark{1}{\y};
}
\node at (1.25,-0.5) {$\ss 0$};
\node at (1.25,6.5) {$\ss 0$};
\node at (1.25,0.5) {$\ss \red \gamma$};
\node at (1.25,1.5) {$\ss \red \gamma$};
\node at (1.25,2.5) {$\ss \red \gamma$};
\node at (1.25,3.5) {$\ss \red \gamma$};
\node at (1.25,4.5) {$\ss \red \gamma$};
\node at (1.25,5.5) {$\ss \red \gamma$};
\end{tikzpicture}\ \ .
\end{align*}
The $BC_n$ symmetric Hall--Littlewood polynomials \eqref{K} can be expressed in terms of these, as follows:
\begin{multline}
\label{K-columns1}
\prod_{i=1}^{n} (x_i -t \b{x}_i)
K_{\lambda}(x_1^{\pm1},\dots,x_n^{\pm1};t;\gamma,\delta)
=
\\
\bra{\mathbb{K}}
\mathbb{B}^{[0,0]}(x^{\pm1}_1,\dots,x^{\pm1}_n;\gamma)
\mathbb{B}^{[0,0]}(x^{\pm1}_1,\dots,x^{\pm1}_n;\delta)
\prod_{i=0}^{\infty}
\mathbb{S}^{[m_i(\lambda),0]}(x^{\pm1}_1,\dots,x^{\pm1}_n)
\ket{\part,\dots,\part},
\end{multline}
where the left and right edge states are given by
\begin{align*}
\bra{\mathbb{K}} = \bigotimes_{k=1}^{n} \bra{K}_{k\b{k}}
\qquad
{\rm and}
\qquad
\ket{\part,\dots,\part}
=
\bigotimes_{k=1}^{n}
\begin{pmatrix} 0 \\ 1 \end{pmatrix}_k
\otimes
\begin{pmatrix} 0 \\ 1 \end{pmatrix}_{\b{k}}.
\end{align*}
As we did previously, we define twisted versions of the column-to-column transfer matrices, by conjugating with an appropriate $F$ matrix. Since we are working on a lattice with twice as many auxiliary spaces as before, the change of basis is achieved by the $F$ matrix $\mathbb{F}_{1\dots n}(x_1^{\pm1},\dots,x_n^{\pm1}) \equiv 
F_{1\b{1}\dots n \b{n}}(x_1,\b{x}_1,\dots,x_n,\b{x}_n)$. We thus define
\begin{align*}
\widetilde{\mathbb{S}}^{[i]}_{1 \dots n}
(x^{\pm1}_1,\dots,x^{\pm1}_n)
&=
\mathbb{F}_{1 \dots n}(x^{\pm1}_1,\dots,x^{\pm1}_n)
\mathbb{S}^{[i]}_{1 \dots n}
(x^{\pm1}_1,\dots,x^{\pm1}_n)
\mathbb{F}^{-1}_{1 \dots n}(x^{\pm1}_1,\dots,x^{\pm1}_n),
\\
\widetilde{\mathbb{B}}^{[i]}_{1\dots n}(x^{\pm1}_1,\dots,x^{\pm1}_n;\gamma)
&=
\mathbb{F}_{1 \dots n}(x^{\pm1}_1,\dots,x^{\pm1}_n)
\mathbb{B}^{[i]}_{1\dots n}(x^{\pm1}_1,\dots,x^{\pm1}_n;\gamma)
\mathbb{F}^{-1}_{1 \dots n}(x^{\pm1}_1,\dots,x^{\pm1}_n).
\end{align*}
Recalling that $\mathbb{F}^{-1}(x^{\pm1}_1,\dots,x^{\pm1}_n) \ket{\part,\dots,\part}
= \ket{\part,\dots,\part}$, we find that
\begin{multline}
\label{K-columns2}
\prod_{i=1}^{n} (x_i -t \b{x}_i)
K_{\lambda}(x_1^{\pm1},\dots,x_n^{\pm1};t;\gamma,\delta)
\\
\langle \widetilde{\mathbb{K}} |
\widetilde{\mathbb{B}}^{[0,0]}(x^{\pm1}_1,\dots,x^{\pm1}_n;\gamma)
\widetilde{\mathbb{B}}^{[0,0]}(x^{\pm1}_1,\dots,x^{\pm1}_n;\delta)
\prod_{i=0}^{\infty}
\widetilde{\mathbb{S}}^{[m_i(\lambda),0]}(x^{\pm1}_1,\dots,x^{\pm1}_n)
\ket{\part,\dots,\part},
\end{multline}
where $\langle \widetilde{\mathbb{K}} | = \bra{\mathbb{K}}
\mathbb{F}^{-1}(x^{\pm1}_1,\dots,x^{\pm1}_n)$ constitutes a non-trivial modification of 
$\bra{\mathbb{K}}$, and will be calculated explicitly in the following theorem.
\begin{thm}
Let us adopt the convention $x_{\b{k}} := \b{x}_k$, for all $1 \leq k \leq n$. The components of the twisted column-to-column transfer matrices are given explicitly by
\begin{align}
\label{twist-bold-S}
\widetilde{\mathbb{S}}^{[m,0]}(x^{\pm1}_1,\dots,x^{\pm1}_n)
&=
\sum_{\substack{I \subset \{1,\b{1},\ldots,n,\b{n}\} \\ |I|=m}}
\bigotimes_{i \in I}
\begin{pmatrix} 
0 & x_i
\\ 
0 & 0
\end{pmatrix}_i
\ \ 
\bigotimes_{j \not\in I}
\begin{pmatrix} 
x_j
\prod_{k \in I} \left( \frac{x_j-t x_k}{x_j-x_k} \right) & 0
\\ 
0 & 1
\end{pmatrix}_j
\\
\label{twist-bold-B}
\widetilde{\mathbb{B}}^{[0,0]}(x^{\pm1}_1,\dots,x^{\pm1}_n;\gamma)
&=
\bigotimes_{i \in \{1,\b{1},\ldots,n,\b{n}\}}
\begin{pmatrix} 
1 & 0
\\ 
0 & 1-\gamma x_i
\end{pmatrix}_i
\end{align}
while the twisted left boundary is given by
\begin{align}
\label{twist-bdry}
\prod_{i=1}^{n}
\frac{1}{(x_i - t \b{x}_i)}
\langle \widetilde{\mathbb{K}} |
&=
\sum_{\{\epsilon_1,\dots,\epsilon_n\}  \in \{\pm 1\}^n}
\prod_{k=1}^{n}
\left(
\frac{x_k^{\epsilon_k}}{1-x_k^{2\epsilon_k}}
\right)
\prod_{1\leq k < l \leq n}
\left(
\frac{1-t x_k^{\epsilon_k} x_l^{\epsilon_l}}
{1-x_k^{\epsilon_k} x_l^{\epsilon_l}}
\right)
\bigotimes_{k=1}^{n}
\langle -\epsilon_k |_k
\otimes
\langle \epsilon_k |_{\b{k}}
\end{align}
where the sum is taken over all binary strings $\{\epsilon_1,\dots,\epsilon_n\} \in \{\pm 1\}^n$, and
where by definition $\langle +1 |_k = \begin{pmatrix} 1 & 0 \end{pmatrix}_k$ and 
$\langle -1 |_k = \begin{pmatrix} 0 & 1 \end{pmatrix}_k$.
\end{thm}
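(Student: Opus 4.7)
All three formulas will be established by adapting the strategy of the proof of Theorem \ref{thm:twisting}: a twisted operator is evaluated by unpacking $\mathbb{F}$ and $\mathbb{F}^{-1}$ in terms of $\mathcal{R}$-matrix products over permutations, using the intertwining relations \eqref{rll} and \eqref{boundary-rll} to move these $\mathcal{R}$-matrices past the bosonic column, and then reducing the resulting diagram by freezing/vacuum arguments. Since the four-index labelling $\{1,\b{1},\ldots,n,\b{n}\}$ plays no distinguished role in any of these manipulations, much of the work from Theorem \ref{thm:twisting} will be reused verbatim.

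For \eqref{twist-bold-S}, no new argument is in fact needed. The column $\mathbb{S}^{[m,0]}$ is a product of $2n$ ordinary $L$-matrices whose spectral parameters happen to come in reciprocal pairs, and the proof of Theorem \ref{thm:twisting} never used that the $x_i$ are algebraically independent. Applying that theorem to the enlarged list $(x_1,\b{x}_1,\ldots,x_n,\b{x}_n)$, with subsets $I\subset\{1,\b{1},\ldots,n,\b{n}\}$, delivers \eqref{twist-bold-S} directly.

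For \eqref{twist-bold-B}, I would repeat the same computation with $B$-matrices in place of $L$-matrices, using \eqref{boundary-rll} to push the $\mathcal{R}$-matrices to the sides. The decisive input is that $\ket{0}$ is annihilated by $\phi$ and is an eigenvector of both $t^{\mathcal{N}}$ and $1-\gamma x t^{\mathcal{N}}$ with eigenvalues $1$ and $1-\gamma x$ respectively. After the $\mathcal{R}$-matrices have been moved out of the column, the lower-left entry $\phi$ cannot contribute (it would act on vacuum), and the upper-right entry $\gamma x\phi^{\dagger}t^{\mathcal{N}}$ cannot contribute either, because the particle it would create has nowhere to be annihilated when both top and bottom external states are $\ket{0}$. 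Hence each auxiliary line must carry the same state on entry and exit, the intermediate physical states are all frozen to $\ket{0}$, and the only weights collected are the two diagonal eigenvalues, giving exactly \eqref{twist-bold-B}.

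The main obstacle is \eqref{twist-bdry}, which requires an explicit evaluation of $\bra{\mathbb{K}}\mathbb{F}^{-1}(x_1^{\pm 1},\ldots,x_n^{\pm 1})$. Observe first that $\bra{\mathbb{K}}=\bigotimes_k\bra{K}_{k\b{k}}$ already has $2^n$ non-vanishing basis components, indexed by binary strings $\{\epsilon_1,\ldots,\epsilon_n\}\in\{\pm 1\}^n$ recording, for each pair $(k,\b{k})$, which of the two auxiliary spaces carries the particle; each such component carries a prefactor $(-t)^{\#\{k:\epsilon_k=-1\}}$. Expanding $\mathbb{F}^{-1}$ via the permutation sum of Appendix \ref{app:fbasis}, one finds that for each string only one permutation survives the projection onto a given basis covector $\bigotimes_k\bra{-\epsilon_k}_k\otimes\bra{\epsilon_k}_{\b{k}}$, and the resulting product of $\mathcal{R}$-matrix entries, together with the $b(z)^{-1}$ normalizations coming from the factorization of $\mathbb{F}^{-1}$, should assemble into the closed form appearing in \eqref{twist-bdry}, the single-variable factor $x_k^{\epsilon_k}/(1-x_k^{2\epsilon_k})$ arising from the $b^{-1}(\b{x}_k/x_k)$ weight that pairs the $k$ and $\b{k}$ spaces. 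The difficulty is purely a bookkeeping one: tracking the signs from the $-t$ in $\bra{K}$, the $b(z)^{-1}$ factors, and the chosen ordering of $\mathcal{R}$-matrices. I would verify the combinatorics by small-$n$ checks, $n=1$ pinning down the single-variable factor and $n=2$ isolating the pairwise interaction factor, before writing out the general case.
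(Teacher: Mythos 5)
Your treatments of \eqref{twist-bold-S} and \eqref{twist-bold-B} are correct and coincide with the paper's: \eqref{twist-bold-S} is literally Theorem \ref{thm:twisting} applied to the doubled list of spaces and spectral parameters, and \eqref{twist-bold-B} follows by the same $\mathcal{R}$-relocation argument, with the vacuum at both ends of the column forcing diagonality and leaving only the eigenvalues $1$ and $1-\gamma x_i$.

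The gap is in \eqref{twist-bdry}. After writing $\langle \widetilde{\mathbb{K}} |^{j_1 j_{\b{1}}\dots j_n j_{\b{n}}} = \big[\bra{\mathbb{K}}\mathcal{R}^{\rho}_{1\b{1}\dots n\b{n}}\big]^{j_1\dots j_{\b{n}}}\prod b^{-1}_{j_k,j_l}$, the object $\bra{\mathbb{K}}\mathcal{R}^{\rho}$ is not a single product of matrix entries selected by ``one surviving permutation'': the permutation $\rho$ is already fixed by the output indices, and what remains is a genuine partition function summed over all internal configurations of the $\mathcal{R}$-lattice attached to the $n$ U-turns. Your proposal asserts this collapses to closed form by bookkeeping, but establishing the collapse is exactly the content of the paper's proof, which you do not reproduce. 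Concretely, you never show that components with $j_k=j_{\b{k}}$ vanish --- the case $j_k=j_{\b{k}}=\hole$ dies by particle conservation, but the case $j_k=j_{\b{k}}=\part$ requires repositioning the double line via Yang--Baxter and unitarity to land in the vanishing configuration of Lemma \ref{lem:vanishing}, i.e.\ it rests on the reflection equation. Nor do you handle the case $j_k=\part$, $j_{\b{k}}=\hole$, where the two ends of a double line must cross and the crossing is removed by the fish equation \eqref{fish} at the cost of the factor $(x_k-t\b{x}_k)/(\b{x}_k-tx_k)$; this is precisely what produces the $\epsilon_k$-dependent asymmetry between the two surviving cases. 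Without these ingredients the support of $\langle\widetilde{\mathbb{K}}|$ on the $2^n$ strings $\{\epsilon_k\}$ and the factorized weights in \eqref{twist-bdry} are unproven, and small-$n$ verification cannot substitute for the inductive line-stripping argument needed for general $n$.
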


\begin{proof}
We have already proved \eqref{twist-bold-S}, since it is just \eqref{twist-explicit} with $W_1 \otimes \cdots \otimes W_n \rightarrow W_1 \otimes W_{\b{1}} \otimes \cdots \otimes W_n \otimes W_{\b{n}}$ and $\{x_1,\dots,x_n\} \rightarrow \{x_1,\b{x}_1,\dots,x_n,\b{x}_n\}$. Similarly, the proof of \eqref{twist-bold-B} proceeds in much the same way as that of Theorem \ref{thm:twisting}, although the argument is even simpler considering that both the top and bottom of the column have occupation numbers 0 (in particular, this is why $\widetilde{\mathbb{B}}^{[0,0]}$ is diagonal). The most involved proof is therefore that of equation \eqref{twist-bdry}. Our approach is, once again, to consider the components of $\langle \widetilde{\mathbb{K}} |$:
\begin{align}
\nonumber
\langle \widetilde{\mathbb{K}} |^{j_1 j_{\b{1}} \dots j_n j_{\b{n}}}
&=
\Big[
\bra{\mathbb{K}}
\mathcal{R}^{\rho}_{1 \b{1} \dots n  \b{n}}
\prod_{\substack{k<l \\ k,l \in\{1,\b{1},\ldots,n,\b{n}\}}}
\Delta_{kl}^{-1}(x_k,x_l)
\Big]^{j_1 j_{\b{1}} \dots j_n j_{\b{n}}}
\\
\label{...4}
&
=
\Big[
\bra{\mathbb{K}}
\mathcal{R}^{\rho}_{1 \b{1} \dots n  \b{n}}
\Big]^{j_1 j_{\b{1}} \dots j_n j_{\b{n}}}
\times
\prod_{\substack{k<l \\ k,l \in\{1,\b{1},\ldots,n,\b{n}\}}}
b^{-1}_{j_k,j_l}(x_k,x_l)
\end{align}
where $\rho$ is a permutation of $\{1,\b{1},\ldots,n,\b{n}\}$ such that 
$j_{\rho(1)} \leq j_{\rho(\b{1})} \leq \cdots \leq j_{\rho(n)} \leq j_{\rho(\b{n})}$. The non-trivial part of this expression can be represented graphically as follows:
\begin{align}
\label{...3}
\Big[
\bra{\mathbb{K}}
\mathcal{R}^{\rho}_{1 \b{1} \dots n  \b{n}}
\Big]^{j_1 j_{\b{1}} \dots j_n j_{\b{n}}}
=
\begin{tikzpicture}[scale=0.6,baseline=(mid.base)]
\node (mid) at (1,4) {};
\foreach\x in {-1,0,1,2}{
\draw[smooth] (2,3.5+2*\x) arc (90:270:0.5);
\node at (1.5,3+2*\x) {\fs$\bullet$};
}
\node at (1.2,7.5) {$\ss \red x_1$};
\node at (1.2,6.5) {$\ss \red \b{x}_1$};
\node at (1.2,4.2) {$\red \vdots$};
\node at (1.3,1.5) {$\ss \red x_n$};
\node at (1.3,0.5) {$\ss \red \b{x}_n$};
\draw[fline] plot [smooth] coordinates {
(2,7.5) (5,7.5) (7,6.5)};
\draw[fline] plot [smooth] coordinates {
(2,6.5) (4,6.5) (5,5) (7,3.5)};
\draw[fline] plot [smooth] coordinates {
(2,5.5) (5,5.5) (7,4.5)};
\draw[fline] plot [smooth] coordinates {
(2,4.5) (4,4.5) (7,2.5)};
\draw[fline] plot [smooth] coordinates {
(2,3.5) (4.5,3.5) (7,7.5)};
\draw[fline] plot [smooth] coordinates {
(2,2.5) (4.5,2.5) (7,0.5)};
\draw[fline] (2,1.5) -- (7,1.5);
\draw[fline] plot [smooth] coordinates {(2,0.5) (4.5,1) (7,5.5)};
\foreach\x in {1,...,4}{
\gbull{7}{-0.5+\x}{0.09};
}
\foreach\x in {5,...,8}{
\ebull{7}{-0.5+\x}{0.09};
}
\end{tikzpicture}
\end{align}
where the states on the right edge of the lattice are ordered, with particles filling the lowest edges and holes situated at the highest edges. We proceed to show that this partition function is completely factorized, by stripping away its lines one by one. 

Starting with the topmost double line, there are four cases to consider. In the case $j_{1} = j_{\b{1}} = \hole$, the two ends of the top double line are attached to holes. Since the U-turn vertex always produces one hole and one particle at the left edge of the lattice, by particle-conservation arguments we conclude that \eqref{...3} vanishes in this case. Similarly when $j_{1} = j_{\b{1}} = \part$, we can use the Yang--Baxter and unitarity equations to reposition the top double line, producing a lattice configuration of the form \eqref{vanishing-lem}, which again vanishes.

The two remaining cases can be treated in parallel. For $j_{1} = \hole$, $j_{\b{1}} = \part$ the two ends of the top double line split up, one end attaching to a hole, the other attaching to a particle. Of the two possible U-turn vertex configurations, only
$
\begin{tikzpicture}[scale=0.4,baseline=9]
\begin{scope}[xscale=-1,yscale=1]
\draw[smooth] ({2*sqrt(2) + 3 + 4},0) arc (-90:90:1);
\draw[smooth] ({2*sqrt(2) + 3 + 3.5},2)--({2*sqrt(2) + 3 + 4},2);
\draw[smooth] ({2*sqrt(2) + 3 + 3.5},0)--({2*sqrt(2) + 3 + 4},0);
\node at ({2*sqrt(2) + 3 + 5},1) {$\bullet$};
\ebull{{2*sqrt(2) + 6.5}}{2}{0.14};
\gbull{{2*sqrt(2) + 6.5}}{0}{0.14};
\end{scope}
\end{tikzpicture}
$
will contribute to the value of the partition function (the other configuration will vanish, again by particle-conservation). Consulting the Boltzmann weights we find that the top double line can be removed at the expense of the factor 
$\prod_{k>1: j_k = \hole} (1-\b{x}_1 \b{x}_k)/(1-t \b{x}_1 \b{x}_k) 
\prod_{k>1: j_{\b{k}} = \hole} (x_1-x_k)/(x_1-t x_k)$. For $j_1 = \part$, $j_{\b{1}} = \hole$, the two ends of the top double line must cross somewhere. Using the fish equation \eqref{fish} with the $\mathcal{R}$ matrix normalization, this crossing can be removed at the expense of the factor $(x_1 - t \b{x}_1)/(\b{x}_1 - t x_1)$, at which point the calculation proceeds as above with $x_1$ and $\b{x}_1$ interchanged. Hence in this case we obtain the overall factor 
$$(x_1 - t \b{x}_1)/(\b{x}_1 - t x_1)
\prod_{k>1: j_k = \hole} (x_k-x_1)/(x_k-t x_1) 
\prod_{k>1: j_{\b{k}} = \hole} (1-x_1 x_k)/(1-t x_1 x_k).$$
Iterating this over all double lines, we obtain
\begin{multline*}
\prod_{i=1}^{n}
\frac{1}{(x_i - t \b{x}_i)}
\Big[
\bra{\mathbb{K}}
\mathcal{R}^{\rho}_{1 \b{1} \dots n  \b{n}}
\Big]^{j_1 j_{\b{1}} \dots j_n j_{\b{n}}}
=
\delta_{0,\#\{k:j_k = j_{\b{k}} = \hole\}}
\delta_{0,\#\{k:j_k = j_{\b{k}} = \part\}}
\\
\times
\prod_{k: j_k = \hole}
\frac{1}{(x_k - t \b{x}_k)}
\prod_{k: j_{\b{k}} = \hole}
\frac{1}{(\b{x}_k - t x_k)}
\prod_{\substack{k<l \\ j_k = \hole \\ j_l = \hole}}
\frac{(1-\b{x}_k \b{x}_l)}{(1-t \b{x}_k \b{x}_l)}
\prod_{\substack{k<l \\ j_{k} = \hole \\ j_{\b{l}} = \hole}} 
\frac{(x_k-x_l)}{(x_k-t x_l)}
\prod_{\substack{k<l \\ j_{\b{k}} = \hole \\ j_l = \hole}}
\frac{(x_l-x_k)}{(x_l-t x_k)}
\prod_{\substack{k<l \\ j_{\b{k}} = \hole \\ j_{\b{l}} = \hole}} 
\frac{(1-x_k x_l)}{(1-t x_k x_l)}.
\end{multline*} 
Combining this with \eqref{...4} and the relation
\begin{align*}
\prod_{\substack{k<l \\ k,l \in\{1,\b{1},\ldots,n,\b{n}\}}}
b^{-1}_{j_k,j_l}(x_k,x_l)
=
\prod_{\substack{k: j_k = \part \\ l: j_l = \hole}} 
\frac{(x_l-t x_k)}{(x_l-x_k)}
\prod_{\substack{k: j_k = \part \\ l: j_{\b{l}} = \hole}} 
\frac{(1-t x_k x_l)}{(1-x_k x_l)}
\prod_{\substack{k: j_{\b{k}} = \part \\ l: j_l = \hole}} 
\frac{(1-t \b{x}_k \b{x}_l)}{(1-\b{x}_k \b{x}_l)}
\prod_{\substack{k: j_{\b{k}} = \part \\ l: j_{\b{l}} = \hole}} 
\frac{(x_k-t x_l)}{(x_k-x_l)},
\end{align*}
after some cancellation of factors we find that the components of $\langle \widetilde{\mathbb{K}} |$ are given by
\begin{multline*}
\prod_{i=1}^{n}
\frac{1}{(x_i - t \b{x}_i)}
\langle \widetilde{\mathbb{K}} |^{j_1 j_{\b{1}} \dots j_n j_{\b{n}}}
=
\delta_{0,\#\{k:j_k = j_{\b{k}} = \hole\}}
\delta_{0,\#\{k:j_k = j_{\b{k}} = \part\}}
\\
\times
\prod_{k: j_k = \hole}
\frac{1}{(x_k - \b{x}_k)}
\prod_{k: j_{\b{k}} = \hole}
\frac{1}{(\b{x}_k - x_k)}
\prod_{\substack{k>l \\ j_k = \part \\ j_l = \hole}} 
\frac{(x_l-t x_k)}{(x_l-x_k)}
\prod_{\substack{k>l \\ j_k = \part \\ j_{\b{l}} = \hole}} 
\frac{(1-t x_k x_l)}{(1-x_k x_l)}
\prod_{\substack{k>l \\ j_{\b{k}} = \part \\ j_l = \hole}} 
\frac{(1-t \b{x}_k \b{x}_l)}{(1-\b{x}_k \b{x}_l)}
\prod_{\substack{k>l \\ j_{\b{k}} = \part \\ j_{\b{l}} = \hole}} 
\frac{(x_k-t x_l)}{(x_k-x_l)}
\end{multline*}
It is simple to verify that the covector \eqref{twist-bdry} has the same components as those above.

\end{proof}

\begin{lem}
$BC_n$ Hall--Littlewood polynomials are given by
\begin{multline}
\label{BC-HL1}
K_{\lambda}(x_1^{\pm1},\dots,x_n^{\pm1};t;\gamma,\delta)
=
\\
\sum_{\omega \in (S_n/S_n^{\lambda}) \ltimes \{\pm1\}^n}
\omega\left(
\prod_{i=1}^{n}
x_{i}^{\lambda_i}
\frac{(1-\gamma \b{x}_{i})(1-\delta \b{x}_{i})}{(1-\b{x}_{i}^2)}
\prod_{i,j: \lambda_i > \lambda_j}
\frac{(x_i-tx_j)}{(x_i-x_j)}
\prod_{1 \leq i<j \leq n}
\frac{(1- t \b{x}_i \b{x}_j)}{(1- \b{x}_i \b{x}_j)}
\right)
\end{multline}
where the sum is over the group of signed permutations formed by the semi-direct product of $S_n/S_n^{\lambda}$ and $\{\pm1\}^n$.
\end{lem}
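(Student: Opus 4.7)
The plan is to adapt the argument of Lemma~\ref{lem:P-sym} to this setting: start from \eqref{K-columns2} and evaluate each factor using the explicit formulas \eqref{twist-bold-S}, \eqref{twist-bold-B}, and \eqref{twist-bdry}.

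First I would observe that the $m=0$ instance of \eqref{twist-bold-S} is diagonal with entries $(x_j,1)$, and therefore stabilizes $\ket{\part,\dots,\part}$. This truncates the infinite product in \eqref{K-columns2} to the finite range $i = 0,1,\dots,\lambda_1$. Each remaining $\widetilde{\mathbb{S}}^{[m_i(\lambda),0]}$ expands as a sum over subsets $I_i \subset \{1,\b{1},\dots,n,\b{n}\}$ of size $m_i(\lambda)$; since only the spin-raising entries act non-trivially on $\ket{\part,\dots,\part}$ (from the right), the subsets $I_i$ must be pairwise disjoint, and the state produced at the left edge has holes at exactly the positions $H := \bigsqcup_i I_i$, a set of cardinality $n = \ell(\lambda)$.

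Next, I would contract with the diagonal boundary matrices $\widetilde{\mathbb{B}}^{[0,0]}(\gamma)$ and $\widetilde{\mathbb{B}}^{[0,0]}(\delta)$, which multiply by $\prod_{j \notin H}(1-\gamma x_j)(1-\delta x_j)$ (restricted to the positions still occupied by particles). The twisted covector $\langle \widetilde{\mathbb{K}}|$ from \eqref{twist-bdry} is itself a sum over $\epsilon \in \{\pm 1\}^n$, each summand forcing the pair $(k,\b{k})$ to carry one hole and one particle in the pattern $H_\epsilon := \{k : \epsilon_k = -1\} \cup \{\b{k} : \epsilon_k = +1\}$. The compatibility condition $H = H_\epsilon$ selects exactly one element from each pair, leaving $2^n \cdot n!/\prod_j m_j(\lambda)!$ admissible configurations in natural bijection with the signed coset $(S_n/S_n^\lambda) \ltimes \{\pm 1\}^n$: the signs come from $\epsilon$, and the coset from the assignment of parts of $\lambda$ to the chosen positions via the family $(I_i)$.

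To match the weights, I would invoke the $A_n$ computation of Lemma~\ref{lem:P-sym} applied to the doubled variable set $\{x_1,\b{x}_1,\dots,x_n,\b{x}_n\}$, noting that \eqref{twist-bold-S} has the same structural form as \eqref{twist-explicit}. This yields $\prod_{k \in H_\epsilon} x_k^{\lambda_k}$ together with $\prod_{j,k \in H_\epsilon:\ \lambda_j > \lambda_k}(x_j - tx_k)/(x_j - x_k)$, where $\lambda_k$ denotes the part of $\lambda$ assigned to position $k$ by the labelling. The pre-factor $\prod_i(x_i - t\b{x}_i)$ on the left-hand side of \eqref{K-columns2} cancels against the normalizing factor in \eqref{twist-bdry}, which then supplies the remaining $BC_n$-specific weights $\prod_k x_k^{\epsilon_k}/(1 - x_k^{2\epsilon_k})$ and $\prod_{k<l}(1 - tx_k^{\epsilon_k} x_l^{\epsilon_l})/(1 - x_k^{\epsilon_k} x_l^{\epsilon_l})$. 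Collecting all factors and identifying the $\epsilon$-action with the $\{\pm 1\}^n$ component of a signed permutation $\omega$ reproduces precisely the summand in \eqref{BC-HL1}.

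The main obstacle will be combinatorial bookkeeping rather than analysis: one must verify that the map from admissible data $((I_i)_i, \epsilon)$ to elements of $(S_n/S_n^\lambda) \ltimes \{\pm 1\}^n$ is a weight-preserving bijection, and in particular that the variable swaps $x_k \leftrightarrow \b{x}_k$ induced by $\epsilon_k = -1$ correctly realize the action of $\omega$ on the base configuration. Once this bijection is checked, the identity \eqref{BC-HL1} follows directly from the $A_n$ calculation.
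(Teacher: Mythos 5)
Your proposal is correct and follows essentially the same route as the paper: truncate the infinite product in \eqref{K-columns2} using the fact that $\widetilde{\mathbb{S}}^{[0,0]}$ stabilizes $\ket{\part,\dots,\part}$, then combine the explicit formulae \eqref{twist-bold-S}--\eqref{twist-bdry} and recognize the result as the hyperoctahedral symmetrization of the weight of a base configuration. You spell out the configuration-to-signed-permutation bijection more explicitly than the paper does (which simply asserts the symmetrization and checks the base term on an example), but the underlying argument, including the cancellation of $\prod_i(x_i-t\b{x}_i)$ against the normalization of $\langle\widetilde{\mathbb{K}}|$, is the same.
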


\begin{proof}
The proof is similar to that of Lemma \ref{lem:P-sym}. We begin by truncating \eqref{K-columns2}, as follows:
\begin{multline}
\label{truncated-K}
\prod_{i=1}^{n} (x_i -t \b{x}_i)
K_{\lambda}(x_1^{\pm1},\dots,x_n^{\pm1};t;\gamma,\delta)
=
\\
\langle \widetilde{\mathbb{K}} |
\widetilde{\mathbb{B}}^{[0,0]}(x^{\pm1}_1,\dots,x^{\pm1}_n;\gamma)
\widetilde{\mathbb{B}}^{[0,0]}(x^{\pm1}_1,\dots,x^{\pm1}_n;\delta)
\prod_{i=0}^{\lambda_1}
\widetilde{\mathbb{S}}^{[m_i(\lambda),0]}(x^{\pm1}_1,\dots,x^{\pm1}_n)
\ket{\part,\dots,\part}.
\end{multline}
Now by combining the formulae \eqref{twist-bold-S}--\eqref{twist-bdry} in equation \eqref{truncated-K}, we deduce that
\begin{align}
\label{hyp-sym}
K_{\lambda}(x_1^{\pm1},\dots,x_n^{\pm1};t;\gamma,\delta)
=
{\rm HypSym}\left\{
\prod_{i=1}^{n}
x_{i}^{\lambda_i}
\frac{(1-\gamma \b{x}_{i})(1-\delta \b{x}_{i})}{(1-\b{x}_{i}^2)}
\prod_{i,j: \lambda_i > \lambda_j}
\frac{(x_i-tx_j)}{(x_i-x_j)}
\prod_{1 \leq i<j \leq n}
\frac{(1- t \b{x}_i \b{x}_j)}{(1- \b{x}_i \b{x}_j)}
\right\}
\end{align}
where the symmetrization is over $\{x_1,\dots,x_n\}$ and their inversions. The term in braces corresponds with the simplest configuration in \eqref{truncated-K}, namely
\begin{align*}
\begin{tikzpicture}[scale=0.7]
\foreach\y in {1,...,6}{
\foreach\x in {2,...,6}{
\dark{\x}{\y};
\bdark{1}{\y};\bdark{1.5}{\y};
}
}
\foreach\y in {0.5,2.5,4.5}{
\begin{scope}[xscale=-1]
\draw[smooth] (-0.5,\y+1) arc (-90:90:0.5);
\draw[smooth] (-1,\y+2)--(-0.5,\y+2);
\draw[smooth] (-1,\y+1)--(-0.5,\y+1);
\node at (0,\y+1.5) {$\bullet$};
\end{scope}
}
\foreach\y in {1,2,3}{
\node at (-0.5,-2*\y+8.5) {$\ss \red x_{\y}$};
\node at (-0.5,-2*\y+7.5) {$\ss \red \b{x}_{\y}$};
}
\draw[gline] (7,1.5) -- (1,1.5);
\draw[gline] (7,4.5) -- (4.4,4.5) -- (4.4,1);
\draw[gline] (7,3.5) -- (1,3.5);
\draw[gline] (7,2.5) -- (4.6,2.5) -- (4.6,1);
\draw[gline] (7,5.5) -- (1,5.5);
\draw[gline] (7,6.5) -- (6.5,6.5) -- (6.5,1);
\gbull{1}{1.5}{0.07}; \gbull{1}{3.5}{0.07}; \gbull{1}{5.5}{0.07};
\ebull{1}{2.5}{0.07}; \ebull{1}{4.5}{0.07}; \ebull{1}{6.5}{0.07};
\gbull{7}{1.5}{0.07}; \gbull{7}{3.5}{0.07}; \gbull{7}{5.5}{0.07};
\gbull{7}{2.5}{0.07}; \gbull{7}{4.5}{0.07}; \gbull{7}{6.5}{0.07};
\gbull{4.4}{1}{0.07}; \gbull{4.6}{1}{0.07}; \gbull{6.5}{1}{0.07};
\node at (1.25,7.5) {$\ss 0$};
\node at (1.75,7.5) {$\ss 0$};
\node at (2.5,7.5) {$\ss 0$};
\node at (3.5,7.5) {$\ss 0$};
\node at (4.5,7.5) {$\ss 0$};
\node at (5.5,7.5) {$\ss 0$};
\node at (6.5,7.5) {$\ss 0$};
\node at (1.25,0.5) {$\ss 0$};
\node at (1.75,0.5) {$\ss 0$};
\node at (2.5,0.5) {$\ss m_0$};
\node at (3.5,0.5) {$\ss m_1$};
\node at (4.5,0.5) {$\ss m_2$};
\node at (5.5,0.5) {$\ss m_3$};
\node at (6.5,0.5) {$\ss m_4$};
\node at (8,4) {$=$}; 
\end{tikzpicture}
\
\begin{tikzpicture}[scale=0.7,baseline=0]
\node at (1.5,3.5) {$\blue \frac{(1-t\b{x}_1 \b{x}_2)(1-t\b{x}_1 \b{x}_3)(1-t\b{x}_2 \b{x}_3)}
{(1-\b{x}_1 \b{x}_2)(1-\b{x}_1 \b{x}_3)(1-\b{x}_2 \b{x}_3)}$};
\node at (4.6,3.5) {$\blue \ss\times$};
\node at (6,5.5) {$\blue \ss\b{x}_1/(1-\b{x}_1^2)$};
\node at (6,3.5) {$\blue \ss\b{x}_2/(1-\b{x}_2^2)$};
\node at (6,1.5) {$\blue \ss\b{x}_3/(1-\b{x}_3^2)$};
\node at (7.75,3.5) {$\ss\times$};
\node at (10,6.5) {$\ss x_1^{5} \frac{(x_1-t x_2)(x_1-t x_3)}{(x_1-x_2)(x_1-x_3)}$};
\node at (10,5.5) {$\ss (1-\gamma \b{x}_1)(1-\delta \b{x}_1)$};
\node at (10,4.5) {$\ss x_2^{3}$};
\node at (10,3.5) {$\ss (1-\gamma \b{x}_2)(1-\delta \b{x}_2)$};
\node at (10,2.5) {$\ss x_3^{3}$};
\node at (10,1.5) {$\ss (1-\gamma \b{x}_3)(1-\delta \b{x}_3)$};
\end{tikzpicture}
\end{align*}
shown here for the example $\lambda = (4,2,2)$, with each column denoting a twisted column-to-column transfer matrix. We write the corresponding Boltzmann weight of this configuration alongside. The leftmost factors (in blue) come from the twisted boundary 
$\langle \widetilde{\mathbb{K}} |$, while those on the right come from the lattice itself. Notice that there is some cancellation between factors to produce exactly the term in braces in \eqref{hyp-sym}.

\end{proof}

\begin{rmk}
$BC_n$ Hall--Littlewood polynomials can also be expressed as a sum over the full hyperoctahedral group $S_n \ltimes \{\pm1\}^n$, leading to the result of \cite{ven}:
\begin{multline}
\label{BC-HL2}
K_{\lambda}(x_1^{\pm1},\dots,x_n^{\pm1};t;\gamma,\delta)
=
\\
\frac{1}{v_{\lambda}(t)}
\sum_{\omega \in S_n \ltimes \{\pm1\}^n}
\omega\left(
\prod_{i=1}^{n}
x_{i}^{\lambda_i}
\frac{(1-\gamma \b{x}_{i})(1-\delta \b{x}_{i})}{(1-\b{x}_{i}^2)}
\prod_{1 \leq i<j \leq n}
\frac{(x_i-tx_j)(1- t \b{x}_i \b{x}_j)}{(x_i-x_j)(1- \b{x}_i \b{x}_j)}
\right).
\end{multline}
As with the expressions \eqref{P-sym} and \eqref{P-sym-2}, the equivalence between \eqref{BC-HL1} and \eqref{BC-HL2} can be deduced by noticing that the larger sum in \eqref{BC-HL2} produces all terms present in \eqref{BC-HL1} up to an extra multiplicative factor $v_{\lambda}(t)$.

\end{rmk}

\section{Refined Cauchy identity}
\label{sec:cauchy}

In this section we prove a refined version of the Cauchy identity for Hall--Littlewood polynomials. This identity was originally obtained by Warnaar \cite{war} at the level of Macdonald polynomials, using the action of difference operators on the Macdonald Cauchy identity \cite{kn}. Our proof is independent of difference operators, and based on repeated application of the intertwining equations in the $t$-boson model. Furthermore our method of proof generalises to other refined identities that were considered in \cite{bw,bwz-j}, for which (to the best of our knowledge) proofs based on difference operators are not possible.  

\subsection{Lattice formulation of refined Cauchy identity}

\begin{thm}
\label{thm-cauch}
Hall--Littlewood polynomials satisfy the following refined Cauchy identity:
\begin{multline}
\label{ref-cauch}
\sum_{\lambda}
\prod_{i=1}^{m_0(\lambda)}
(1 - u t^{i})
b_{\lambda}(t)
P_{\lambda}(x_1,\dots,x_n;t)
P_{\lambda}(y_1,\dots,y_n;t)
=
\\
\frac{\prod_{i,j=1}^{n} (1- t x_i y_j)}
{\prod_{1 \leq i<j \leq n} (x_i-x_j)(y_i-y_j)}
\det_{1 \leq i,j \leq n}
\left[
\frac{1-ut + (u-1)t x_i y_j}{(1-x_i y_j) (1-t x_i y_j)}
\right],
\end{multline}
where $u$ is an arbitrary parameter.
\end{thm}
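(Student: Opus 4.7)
The plan is to follow verbatim the four-step strategy laid out in the introduction, with the refining parameter $u$ being produced by the inhomogeneous zeroth site $V(\alpha)$ under the specialisation $t^{\alpha}=u$.

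First I would rewrite the left-hand side as a single bosonic partition function. Using Lemma \ref{cor:hl}, the identity $Q_{\lambda}=b_{\lambda}(t)P_{\lambda}$, and the completeness relation $\sum_{\lambda}\ket{\lambda;\alpha}\bra{\lambda;\alpha}=\mathbf{1}$ on $V(\alpha)$, the product $P_{\lambda}(x_1,\dots,x_n;t)\,Q_{\lambda}(y_1,\dots,y_n;t)$ is a matrix element of $T^{*}_{-}(y_1)\cdots T^{*}_{-}(y_n)\,T_{+}(x_n)\cdots T_{+}(x_1)$ between $\bra{0;\alpha}$ and $\ket{0;\alpha}$, up to the factor $\prod_{j=1}^{m_0(\lambda)}(1-t^{\alpha+j})\prod_{i}x_i\prod_{i}y_i$. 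Setting $t^{\alpha}=u$ turns this boundary weight into exactly $\prod_{i=1}^{m_0(\lambda)}(1-ut^{i})$, so that
\begin{align*}
\prod_{i=1}^{n}x_i y_i\,\sum_{\lambda}\prod_{i=1}^{m_0(\lambda)}(1-ut^{i})b_{\lambda}(t)P_{\lambda}(x;t)P_{\lambda}(y;t)
=\bra{0;\alpha}T^{*}_{-}(y_1)\cdots T^{*}_{-}(y_n)T_{+}(x_n)\cdots T_{+}(x_1)\ket{0;\alpha}.
\end{align*}
Graphically this is $2n$ horizontal rows stacked over the semi-infinite bosonic lattice: $n$ light rows (for the $T^*$) above $n$ dark rows (for the $T$), whose auxiliary boundaries fix the $T^{*}_{-}$ and $T_{+}$ components. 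The zeroth column carries the shift $\alpha$ and produces the entire $u$-dependence.

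Next I would attach, on the right of this lattice, a block of $R$-matrices braiding every $y$-line (carrying $\b{y}_i$) with every $x$-line (carrying $x_j$). By a freezing argument, far to the right the $T_{+}$ boundary forces only particles to propagate and the $T^{*}_{-}$ boundary forces only holes; with this configuration the introduced $R_{ab}(xy)$ vertices are forced into a single frozen path pattern with weights that multiply to $1$. Hence the auxiliary column can be inserted without changing the value of the partition function. I would then use the intertwiners \eqref{rll}, \eqref{rll*} and \eqref{rl*l*} to transport the entire $n\times n$ block of $R$-matrices from the right edge of the lattice to the left edge, exchanging the order of spectral parameters.

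After repositioning, the bosonic portion of the lattice is frozen in the opposite way: every row has vacuum on its interior but the zeroth column now carries the propagating particle/hole paths. The only nontrivial contribution of the bosonic part is therefore an overall scalar coming from this single (inhomogeneous) column, which is easily computed from the $L$ and $L^{*}$ Boltzmann weights \eqref{tiles}, \eqref{tiles*} with shift $\alpha$; it produces the prefactor $\prod_{i,j}(1-tx_iy_j)/\prod_i x_i y_i$ together with a $u$-dependent column weight. The remaining auxiliary lattice is precisely a six-vertex partition function $\mathcal{Z}_{\rm DW}(x;y;t;u)$ on an $n\times n$ domain with domain-wall boundaries in the $2n$ external lines, together with one extra ``bosonic'' column carrying the parameter $u$. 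This is exactly the hybrid partition function promised in the statement of the theorem.

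The main obstacle is the final step: evaluating $\mathcal{Z}_{\rm DW}$ as the stated determinant. I would run the Izergin--Korepin argument adapted to this hybrid geometry: verify that as a function of $x_n$ it is (after clearing the expected denominator) a polynomial of controlled degree, symmetric in the remaining variables, satisfies a recursion under $x_n y_j\to 1$ and the obvious boundary value at $x_n\to\infty$. Combining these one characterises $\mathcal{Z}_{\rm DW}$ uniquely, and then checks by Lagrange interpolation (the general machinery to be reviewed in Appendix \ref{app:lagrange}) that the right-hand side of \eqref{ref-cauch} satisfies the same properties. Matching normalisations at $u=1$ (where the bosonic column degenerates and $\mathcal{Z}_{\rm DW}$ reduces to the usual Izergin--Korepin domain wall partition function) pins down the overall constant, completing the proof.
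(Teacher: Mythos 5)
Your proposal reproduces the paper's proof essentially step for step: the same bosonic partition function built from $T^{*}_{-}(y_1)\cdots T^{*}_{-}(y_n)T_{+}(x_n)\cdots T_{+}(x_1)$ with the shift $t^{\alpha}=u$ at the zeroth site, the same freezing argument to attach a trivial $n\times n$ block of $R$ matrices at the right edge, the same use of the intertwining relation to relocate that block to the left so that all but the zeroth column freezes, and the same Lagrange-interpolation (Izergin--Korepin) evaluation of the resulting hybrid partition function $\mathcal{Z}_{\rm DW}$. The only (immaterial) differences are which data you use to pin down the interpolation --- the paper uses the value at all variables equal to zero and the size-one case rather than the $x_n\to\infty$ limit --- and a slight imprecision in describing the zeroth column as contributing a decoupled scalar, whereas it remains coupled to the auxiliary lattice and is evaluated jointly with it in Lemma \ref{lem:u-dwpf}.
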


\begin{proof}
In what follows, we parametrize the refining variable as $u=t^{\alpha}$. Consider the following partition function in the $t$-boson model, whose right edge is contracted with an $n \times n$ lattice of $R$ matrix vertices:
\begin{align}
\label{pf-lhs}
\mathcal{P}(x_1,\dots,x_n;y_1,\dots,y_n;t;\alpha)
=
\begin{tikzpicture}[scale=0.6,baseline=(mid.base)]
\node (mid) at (1,4) {};
\foreach\x in {1,...,5}{
\foreach\y in {1,...,3}{
\light{\x}{\y};
}
\foreach\y in {4,...,6}{
\dark{\x}{\y};
}
}
\node at (1.5,0.5) {$\ss \alpha$};
\node at (2.5,0.5) {$\ss 0$};
\node at (3.5,0.5) {$\ss 0$};
\node at (4.5,0.5) {$\ss 0$};
\node at (5.5,0.5) {$\ss \cdots$};
\node at (1.5,7.5) {$\ss \alpha$};
\node at (2.5,7.5) {$\ss 0$};
\node at (3.5,7.5) {$\ss 0$};
\node at (4.5,7.5) {$\ss 0$};
\node at (5.5,7.5) {$\ss \cdots$};
\foreach\x in {1,...,3}{
\gbull{1}{0.5+\x}{0.09}
\ebull{1}{3.5+\x}{0.09}
}
\node at (0.6,1.5) {$\ss \red \b{y}_1$};
\node at (0.6,2.7) {$\ss \red \vdots$};
\node at (0.6,3.5) {$\ss \red \b{y}_n$};
\node at (0.6,4.5) {$\ss \red x_n$};
\node at (0.6,5.7) {$\ss \red \vdots$};
\node at (0.6,6.5) {$\ss \red x_1$};
\foreach\x in {1,...,3}{
\draw[fline] (7.5+0.5*\x,2+0.5*\x) -- (6,3.5+\x);
\draw[fline] (6,0.5+\x) -- (9.5-0.5*\x,4+0.5*\x);
\gbull{7.5+0.5*\x}{2+0.5*\x}{0.09};
\ebull{9.5-0.5*\x}{4+0.5*\x}{0.09};
}
\end{tikzpicture}
\end{align}
where the zeroth column of the lattice has initial and final occupation number $\alpha$, while all other columns have initial and final occupation number 0. The top $n$ rows of the lattice (with dark shading) correspond with $L$ matrices, while the bottom $n$ rows (with light shading) correspond with $L^{*}$ matrices. The right edge of the partition function is situated at infinity, so the only non-zero contribution comes from having the lowest $n$ edges unoccupied, and the highest $n$ edges occupied. This causes $\mathcal{P}$ to factorize into two parts:
\begin{align}
\label{pf-factorize}
\mathcal{P}(x_1,\dots,x_n;y_1,\dots,y_n;t;\alpha)
=
\begin{tikzpicture}[scale=0.6,baseline=(mid.base)]
\node (mid) at (1,4) {};
\foreach\x in {1,...,5}{
\foreach\y in {1,...,3}{
\light{\x}{\y};
}
\foreach\y in {4,...,6}{
\dark{\x}{\y};
}
}
\node at (1.5,0.5) {$\ss \alpha$};
\node at (2.5,0.5) {$\ss 0$};
\node at (3.5,0.5) {$\ss 0$};
\node at (4.5,0.5) {$\ss 0$};
\node at (5.5,0.5) {$\ss \cdots$};
\node at (1.5,7.5) {$\ss \alpha$};
\node at (2.5,7.5) {$\ss 0$};
\node at (3.5,7.5) {$\ss 0$};
\node at (4.5,7.5) {$\ss 0$};
\node at (5.5,7.5) {$\ss \cdots$};
\foreach\x in {1,...,3}{
\gbull{1}{0.5+\x}{0.09}
\ebull{1}{3.5+\x}{0.09}
}
\foreach\x in {1,...,3}{
\gbull{6}{3.5+\x}{0.09};
\ebull{6}{0.5+\x}{0.09};
}
\node at (0.6,1.5) {$\ss \red \b{y}_1$};
\node at (0.6,2.7) {$\ss \red \vdots$};
\node at (0.6,3.5) {$\ss \red \b{y}_n$};
\node at (0.6,4.5) {$\ss \red x_n$};
\node at (0.6,5.7) {$\ss \red \vdots$};
\node at (0.6,6.5) {$\ss \red x_1$};
\node at (6.5,4) {$\times$};
\foreach\x in {1,...,3}{
\draw[fline] (8.5+0.5*\x,2+0.5*\x) -- (7,3.5+\x);
\draw[fline] (7,0.5+\x) -- (10.5-0.5*\x,4+0.5*\x);
\gbull{8.5+0.5*\x}{2+0.5*\x}{0.09};
\ebull{10.5-0.5*\x}{4+0.5*\x}{0.09};
}
\foreach\x in {1,...,3}{
\gbull{7}{3.5+\x}{0.09};
\ebull{7}{0.5+\x}{0.09};
}
\end{tikzpicture}
\end{align}
where the product of $R$ matrix vertices is trivial, and gives a weight of $1$. The remaining non-trivial partition function in \eqref{pf-factorize} is the following vacuum to vacuum expectation value in the $t$-boson model:
\begin{align}
\label{hl-cauch-lhs}
\mathcal{P}(x_1,\dots,x_n;y_1,\dots,y_n;t;\alpha)
=
\bra{0;\alpha}
T^{*}_{-}(y_1) \dots T^{*}_{-}(y_n)
T_{+}(x_n) \dots T_{+}(x_1)
\ket{0;\alpha}
\end{align}
From Corollary \ref{cor:hl}, we immediately deduce that
\begin{align}
\label{hl-cauch-lhs2}
\mathcal{P}(x_1,\dots,x_n;y_1,\dots,y_n;t;\alpha)
=
\left(
\prod_{i=1}^{n} x_i y_i
\right)
\sum_{\lambda}
\prod_{j=1}^{m_0(\lambda)}
(1-t^{\alpha+j})
P_{\lambda}(x_1,\dots,x_n;t)
Q_{\lambda}(y_1,\dots,y_n;t)
\end{align}
which is the left hand side of \eqref{ref-cauch}, up to the overall factor $\prod_{i=1}^{n} x_i y_i$.

Returning to the original expression \eqref{pf-lhs}, we repeatedly apply the intertwining equation \eqref{rll*} to obtain the alternative expression
\begin{align}
\label{pf-rhs}
\mathcal{P}(x_1,\dots,x_n;y_1,\dots,y_n;t;\alpha)
=
\begin{tikzpicture}[scale=0.6,baseline=(mid.base)]
\node (mid) at (1,4) {};
\node at (-1.3,2.2) {$\ss \red \b{y}_1$};
\node at (-1.9,2.6) {$\ss \red \udiagdot$};
\node at (-2.3,3.2) {$\ss \red \b{y}_n$};
\node at (-2.2,4.8) {$\ss \red x_n$};
\node at (-1.8,5.4) {$\ss \red \ddiagdot$};
\node at (-1.2,5.8) {$\ss \red x_1$};
\foreach\x in {1,...,3}{
\draw[fline] (-2.5+0.5*\x,4+0.5*\x) -- (1,0.5+\x);
\draw[fline] (1,3.5+\x) -- (-0.5-0.5*\x,2+0.5*\x);
\ebull{-2.5+0.5*\x}{4+0.5*\x}{0.09};
\gbull{-0.5-0.5*\x}{2+0.5*\x}{0.09};
}
\foreach\x in {1,...,5}{
\foreach\y in {1,...,3}{
\dark{\x}{\y};
}
\foreach\y in {4,...,6}{
\light{\x}{\y};
}
}
\node at (1.5,0.5) {$\ss \alpha$};
\node at (2.5,0.5) {$\ss 0$};
\node at (3.5,0.5) {$\ss 0$};
\node at (4.5,0.5) {$\ss 0$};
\node at (5.5,0.5) {$\ss \cdots$};
\node at (1.5,7.5) {$\ss \alpha$};
\node at (2.5,7.5) {$\ss 0$};
\node at (3.5,7.5) {$\ss 0$};
\node at (4.5,7.5) {$\ss 0$};
\node at (5.5,7.5) {$\ss \cdots$};
\foreach\x in {1,...,3}{
\gbull{6}{0.5+\x}{0.09}
\ebull{6}{3.5+\x}{0.09}
}
\end{tikzpicture}
\end{align}
in which all $L$ matrices have been transferred to the bottom, and all $L^{*}$ matrices to the top. By particle-conservation arguments, it is easy to deduce that this partition function factorises immediately after the zeroth column:
\begin{align}
\label{rhs-factor}
\mathcal{P}(x_1,\dots,x_n;y_1,\dots,y_n;t;\alpha)
=
\begin{tikzpicture}[scale=0.6,baseline=(mid.base)]
\node (mid) at (1,4) {};
\node at (-1.3,2.2) {$\ss \red \b{y}_1$};
\node at (-1.9,2.6) {$\ss \red \udiagdot$};
\node at (-2.3,3.2) {$\ss \red \b{y}_n$};
\node at (-2.2,4.8) {$\ss \red x_n$};
\node at (-1.8,5.4) {$\ss \red \ddiagdot$};
\node at (-1.2,5.8) {$\ss \red x_1$};
\foreach\x in {1,...,3}{
\draw[fline] (-2.5+0.5*\x,4+0.5*\x) -- (1,0.5+\x);
\draw[fline] (1,3.5+\x) -- (-0.5-0.5*\x,2+0.5*\x);
\ebull{-2.5+0.5*\x}{4+0.5*\x}{0.09};
\gbull{-0.5-0.5*\x}{2+0.5*\x}{0.09};
}
\foreach\x in {1}{
\foreach\y in {1,...,3}{
\dark{\x}{\y};
}
\foreach\y in {4,...,6}{
\light{\x}{\y};
}
}
\node at (1.5,0.5) {$\ss \alpha$};
\node at (1.5,7.5) {$\ss \alpha$};
\foreach\x in {1,...,3}{
\gbull{2}{0.5+\x}{0.09}
\ebull{2}{3.5+\x}{0.09}
}
\node at (2.5,4) {$\times$};
\foreach\x in {3,...,6}{
\foreach\y in {1,...,3}{
\dark{\x}{\y};
}
\foreach\y in {4,...,6}{
\light{\x}{\y};
}
}
\node at (3.5,0.5) {$\ss 0$};
\node at (4.5,0.5) {$\ss 0$};
\node at (5.5,0.5) {$\ss 0$};
\node at (6.5,0.5) {$\ss \cdots$};
\node at (3.5,7.5) {$\ss 0$};
\node at (4.5,7.5) {$\ss 0$};
\node at (5.5,7.5) {$\ss 0$};
\node at (6.5,7.5) {$\ss \cdots$};
\foreach\x in {1,...,3}{
\draw[gline] (3,\x+0.5) -- (7,\x+0.5);
}
\foreach\x in {1,...,3}{
\gbull{3}{0.5+\x}{0.09};
\gbull{7}{0.5+\x}{0.09};
\ebull{3}{3.5+\x}{0.09};
\ebull{7}{3.5+\x}{0.09};
}
\end{tikzpicture}
\end{align}
with the remainder of the lattice (from the first column onward) being completely frozen and with total Boltzmann weight 1. To complete the proof, we show that the remaining non-trivial partition function in \eqref{rhs-factor} evaluates to the determinant on the right hand side of 
\eqref{ref-cauch}. We do this in the following lemma.

\subsection{Determinant evaluation of $Z_{\rm DW}$}

\begin{lem}
\label{lem:u-dwpf}
\begin{align}
\label{lem-u-dwpf}
\begin{tikzpicture}[scale=0.6,baseline=(mid.base)]
\node (mid) at (1,4) {};
\node at (-1.3,2.2) {$\ss \red \b{y}_1$};
\node at (-1.9,2.6) {$\ss \red \udiagdot$};
\node at (-2.3,3.2) {$\ss \red \b{y}_n$};
\node at (-2.2,4.8) {$\ss \red x_n$};
\node at (-1.8,5.4) {$\ss \red \ddiagdot$};
\node at (-1.2,5.8) {$\ss \red x_1$};
\foreach\x in {1,...,3}{
\draw[fline] (-2.5+0.5*\x,4+0.5*\x) -- (1,0.5+\x);
\draw[fline] (1,3.5+\x) -- (-0.5-0.5*\x,2+0.5*\x);
\ebull{-2.5+0.5*\x}{4+0.5*\x}{0.09};
\gbull{-0.5-0.5*\x}{2+0.5*\x}{0.09};
}
\foreach\x in {1}{
\foreach\y in {1,...,3}{
\dark{\x}{\y};
}
\foreach\y in {4,...,6}{
\light{\x}{\y};
}
}
\node at (1.5,0.5) {$\ss \alpha$};
\node at (1.5,7.5) {$\ss \alpha$};
\foreach\x in {1,...,3}{
\gbull{2}{0.5+\x}{0.09}
\ebull{2}{3.5+\x}{0.09}
}
\end{tikzpicture}
=
\frac{\prod_{i=1}^{n} (x_i y_i) \prod_{i,j=1}^{n} (1- t x_i y_j)}
{\prod_{1 \leq i<j \leq n} (x_i-x_j)(y_i-y_j)}
\det_{1 \leq i,j \leq n}
\left[
\frac{1-ut + (u-1)t x_i y_j}{(1-x_i y_j) (1-t x_i y_j)}
\right]
\end{align}
\end{lem}

\begin{proof}
We prove this using standard Lagrange interpolation techniques. Let $Z_{\rm DW}{(n)}$ denote the left hand side of \eqref{lem-u-dwpf} multiplied by\footnote{Note that $Z_{\rm DW}$ differs from the function 
$\mathcal{Z}_{\rm DW}$ in \eqref{1} by the overall normalization $\prod_{i,j=1}^{n}(1-x_i y_j)$, which makes $Z_{\rm DW}$ polynomial in its variables. We use similar notational conventions throughout the rest of the paper.} $\prod_{i,j=1}^{n} (1-x_iy_j) \prod_{k=1}^{n} \b{x}_k \b{y}_k$.  Then $Z_{\rm DW}{(n)}$ has the following properties:
\begin{enumerate}[label=\bf\arabic*.]
\item $Z_{\rm DW}{(n)}$ is symmetric in $\{x_1,\dots,x_n\}$ and $\{y_1,\dots,y_n\}$ separately.

\item $Z_{\rm DW}{(n)}$ is a polynomial in $x_n$ of degree $n$.

\item Setting $x_n = \b{y}_n$, we have the recursion relation
\begin{align*}
Z_{\rm DW}{(n)}
\Big|_{x_n = \b{y}_n}
=
(1-t)
\prod_{i=1}^{n-1}
(1-t x_i y_n)
(1-t y_i \b{y}_n)
Z_{\rm DW}{(n-1)},
\end{align*}
where $Z_{\rm DW}{(n-1)}$ is the re-normalized partition function of one size smaller.

\item Setting all variables to zero, we find
\begin{align*}
Z_{\rm DW}{(n)}
\Big|_{x_1,\dots,x_n = 0}
=
\prod_{i=1}^{n} (1-ut^i).
\end{align*}

\item The re-normalized partition function of size 1 has the explicit form
\begin{align*}
Z_{\rm DW}{(1)} = 1- ut + (u-1)t x_1 y_1.
\end{align*}
\end{enumerate}
Property {\bf 1} follows from the Yang-Baxter/intertwining equation. Property {\bf 2} comes from studying the form of the Boltzmann weights, with the further observation that the partition function in \eqref{lem-u-dwpf} has a common factor of $\prod_{i=1}^{n} x_i y_i$ (every row and column has at least one $c_{+}$ vertex). Properties {\bf 3} and {\bf 4} follow from simple freezing/combinatorial arguments in the six-vertex model, which we suppress here for the sake of brevity. Property {\bf 5} comes from a direct calculation of the partition function of size 1. Together, these five properties uniquely determine $Z_{\rm DW}{(n)}$, which can be deduced by the arguments of Appendix \ref{app:lagrange}. It can be easily verified that the right hand side of \eqref{lem-u-dwpf} satisfies the same properties.
\end{proof} 

It remains only to equate \eqref{hl-cauch-lhs2} and \eqref{lem-u-dwpf}, which gives \eqref{ref-cauch} after removing the spurious factor of $\prod_{i=1}^{n} x_i y_i$ from both sides of the equation.

\end{proof}

\subsection{Specialization to domain wall partition function}

\begin{cor}
Taking the $u=1$ ($\alpha = 0$) case of \eqref{ref-cauch}, we have
\begin{align*}
\sum_{\lambda}
\prod_{i=1}^{m_0(\lambda)}
(1 - t^{i})
b_{\lambda}(t)
P_{\lambda}(x_1,\dots,x_n;t)
P_{\lambda}(y_1,\dots,y_n;t)
=
Z_{\rm DW}(x_1,\dots,x_n;y_1,\dots,y_n;t)
\end{align*}
where $Z_{\rm DW}$ is the domain wall partition function of the six-vertex model \cite{ize,kor}.
\end{cor}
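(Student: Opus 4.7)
The plan is to derive the corollary as an immediate specialization of Theorem \ref{thm-cauch}, with the one real task being to recognize the right-hand side at $u=1$ as the Izergin--Korepin determinant for the domain wall partition function. Setting $u = 1$ in the refining factors on the left hand side of \eqref{ref-cauch} reproduces term-by-term the sum appearing in the corollary. All remaining work concerns the right hand side.

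Substituting $u = 1$ into the matrix entry of \eqref{ref-cauch} gives the collapse
\begin{align*}
\frac{1-ut + (u-1)t x_i y_j}{(1-x_i y_j)(1-t x_i y_j)}
\bigg|_{u=1}
=
\frac{1-t}{(1-x_i y_j)(1-t x_i y_j)},
\end{align*}
so the specialized right hand side becomes
\begin{align*}
\frac{\prod_{i,j=1}^{n}(1-t x_i y_j)}{\prod_{1\le i<j\le n}(x_i-x_j)(y_i-y_j)}
\det_{1\le i,j\le n}\left[\frac{1-t}{(1-x_i y_j)(1-t x_i y_j)}\right],
\end{align*}
which is precisely Izergin's determinant evaluation of $Z_{\rm DW}$ in the trigonometric parametrization used throughout the paper (with the conventional $R$-matrix weights from \eqref{Rmat}). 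Identifying this expression with $Z_{\rm DW}(x_1,\dots,x_n;y_1,\dots,y_n;t)$ completes the proof.

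Alternatively, and more instructively, one can read the corollary directly from the lattice derivation in the proof of Theorem \ref{thm-cauch}. The refining parameter enters only through the zeroth column of the partition function $\mathcal{P}$ in \eqref{pf-lhs}, whose top and bottom occupation numbers are both $\alpha$. At $\alpha = 0$ (equivalently $u = 1$) this column carries vacuum boundary conditions, and after the intertwining step leading to \eqref{rhs-factor} the zeroth column reduces trivially; the non-trivial remaining partition function is then exactly the $n \times n$ six-vertex configuration with domain wall boundary conditions, i.e.\ $Z_{\rm DW}$. The only bookkeeping step is to track the multiplicative factor $\prod_{i=1}^n x_i y_i$ that was divided out in passing from \eqref{hl-cauch-lhs2} to \eqref{ref-cauch}, and to confirm that the normalization factor $\prod_{i,j=1}^n(1-x_i y_j)$ implicit in $\mathcal{Z}_{\rm DW}$ matches the conventional Izergin normalization of $Z_{\rm DW}$. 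Since no new analytic input is required beyond Theorem \ref{thm-cauch} and Lemma \ref{lem:u-dwpf}, there is no genuine obstacle; the only potential pitfall is the routine reconciliation of normalization conventions between $\mathcal{Z}_{\rm DW}$ and the $Z_{\rm DW}$ of \cite{ize,kor}.
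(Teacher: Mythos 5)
Your proposal is correct and follows essentially the same route as the paper, which treats the corollary as an immediate specialization of Theorem \ref{thm-cauch}: setting $u=1$ collapses the determinant entry to $(1-t)/\big((1-x_iy_j)(1-tx_iy_j)\big)$, yielding the Izergin--Korepin determinant for the domain wall partition function. Your supplementary lattice observation (that at $\alpha=0$ the zeroth ``bosonic'' column trivializes, reducing the hybrid partition function to the pure $n\times n$ domain wall configuration) matches the remark made in the paper's introduction and is consistent with its conventions.
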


\section{Refined Littlewood identity}
\label{sec:little}

In this section we prove a refined version of a Littlewood identity for Hall--Littlewood polynomials, that was originally proved in \cite{bwz-j} by very different means. For a comprehensive study of Littlewood identities for Hall--Littlewood polynomials, we refer the reader to \cite{mac,war2} (our formula \eqref{ref-little} is a refinement of Example 3, Section 5, Chapter III of \cite{mac}) and to \cite{rai-war} for recent developments concerning boxed Littlewood formulae for Macdonald polynomials. 

\subsection{A property of the transfer matrices}

\begin{lem}
Let $\bra{\rm e;\alpha}$ be the following weighted sum over all partitions with even multiplicities:
\begin{align}
\label{e-vec}
\bra{\rm e;\alpha}
=
\sum_{\substack{
m_{i}(\lambda) \in 2 \mathbb{N} \\ m_0(\lambda) \in 2 \mathbb{Z} 
}}
\
c_{\lambda}(\alpha,t)
\bra{\lambda;\alpha},
\end{align}
where the sum is taken over all dual states
\begin{align*}
\bra{\lambda;\alpha}
=
\bra{m_0(\lambda)+\alpha}_0
\otimes
\bra{m_1(\lambda)}_1
\otimes
\bra{m_2(\lambda)}_2
\otimes
\cdots
\end{align*}
such that $m_0(\lambda) \in \{\dots,-2,0,2,\dots\} = 2\mathbb{Z}$ and $m_i(\lambda) \in \{0,2,4,\dots\} = 2\mathbb{N}$ for all $i \geq 1$, and where the summation coefficients are given by
\begin{align}
c_{\lambda}(\alpha,t) 
= 
\prod_{i=1}^{\infty}
\prod_{j=1}^{m_i(\lambda)/2} (1-t^{2j-1})
\times
\left\{
\begin{array}{cl}
\displaystyle{
\prod_{j=1}^{m_0(\lambda)/2}
(1-t^{\alpha+2j-1})
},
&
m_0(\lambda) \geq 0,
\\ \\ 
\displaystyle{
\prod_{j=1}^{-m_0(\lambda)/2}
\frac{1}{(1-t^{\alpha-2j+1})}
},
&
m_0(\lambda) \leq 0.
\end{array}
\right.
\end{align}
Then the transfer matrices satisfy 
\begin{align}
\label{lem-even}
\bra{\rm e;\alpha}
T_{-}(x)
=
\bra{\rm e;\alpha}
T^{*}_{+}(x),
\qquad
\bra{\rm e;\alpha}
T_{+}(x)
=
\bra{\rm e;\alpha}
T^{*}_{-}(x).
\end{align}

\end{lem}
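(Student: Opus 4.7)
The plan is to verify both identities in \eqref{lem-even} by taking matrix elements of each side against an arbitrary basis state $\ket{\mu;\alpha}$. First I would use the Boltzmann rules \eqref{tiles}--\eqref{tiles*} to write each matrix element $\bra{\lambda;\alpha} T_{-}(x)\ket{\mu;\alpha}$ and $\bra{\lambda;\alpha} T^{*}_{+}(x)\ket{\mu;\alpha}$ as a sum over single-row lattice configurations. Because the auxiliary boundary states pin both endpoints to a particle (for $T_{-}$) or both to a hole (for $T^{*}_{+}$), each configuration is parametrised by alternating tile-$2$ and tile-$3$ positions. For $T_{-}$ these appear in the pattern $r_1<s_1<\cdots<r_k<s_k$, with tile~$3$ at the $r_i$'s removing particles and tile~$2$ at the $s_i$'s depositing them; for $T^{*}_{+}$ the pattern is $s_1<r_1<\cdots<s_k<r_k$. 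In either case the resulting $\lambda$ satisfies $m_{r_i}(\lambda)=m_{r_i}(\mu)-1$ and $m_{s_i}(\lambda)=m_{s_i}(\mu)+1$, with total weight $x^{\sum_i|s_i-r_i|}\prod_i(1-t^{m_{r_i}(\mu)})$, where the factor at $r_i=0$ reads $(1-t^{m_0(\mu)+\alpha})$.

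Next I would restrict to the partitions $\lambda$ with even multiplicities that actually appear in $\bra{\rm e;\alpha}$. This forces $\{r_i\}\cup\{s_i\}$ to coincide with the (necessarily even-sized) set of positions $a_1<\cdots<a_{2k}$ at which $m_{a_j}(\mu)$ is odd, and the interlacing conditions then select a unique assignment on each side: $r_i=a_{2i-1},\,s_i=a_{2i}$ on the LHS, but $s_i=a_{2i-1},\,r_i=a_{2i}$ on the RHS. Hence for each order $k$ exactly one $\lambda$ (for $T_{-}$) and one $\lambda'$ (for $T^{*}_{+}$) contribute, and they differ from $\mu$ by opposite multiplicity shifts at every $a_j$.

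The claim then reduces to a telescoping identity. The $x$-exponents on the two sides are automatically equal to $\sum_i(a_{2i}-a_{2i-1})$. From the explicit form of $c_\lambda(\alpha,t)$ one computes that $c_\lambda/c_{\lambda'}$ factorises as a product of $(1-t^{m_{a_j}(\mu)})^{\pm 1}$ over the $a_j$, the sign $+$ occurring at the even-indexed positions and $-$ at the odd-indexed ones (with the natural $\alpha$-shifted version at position $0$). Multiplying by the explicit lattice factors $\prod_{j\text{ odd}}(1-t^{m_{a_j}(\mu)})$ on the LHS and $\prod_{j\text{ even}}(1-t^{m_{a_j}(\mu)})$ on the RHS telescopes everything to ratio $1$, which proves the matrix-element equality. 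The twin identity $\bra{\rm e;\alpha}T_{+}(x)=\bra{\rm e;\alpha}T^{*}_{-}(x)$ follows from the same mechanism with the roles of tile~$2$ and tile~$3$ swapped; contributions there require $\mu$ to carry an \emph{odd} number of odd multiplicities.

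The hard part, and the motivation for the piecewise form of $c_\lambda(\alpha,t)$ at the zeroth site, is checking that the ratio $c_\lambda/c_{\lambda'}$ at position $0$ really does reduce to $(1-t^{m_0(\mu)+\alpha})^{\pm 1}$ for every integer value of $m_0(\mu)$, positive or negative. Verifying this in the two regimes $m_0(\lambda)\geq 0$ and $m_0(\lambda)\leq 0$ is elementary but must be done separately; once it is in hand, the telescoping argument handles all other positions uniformly.
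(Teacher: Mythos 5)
Your proposal is correct and follows essentially the same route as the paper: both arguments reduce each matrix element $\bra{\rm e;\alpha}T\ket{\mu;\alpha}$ to the unique configuration producing an all-even-multiplicity partition (your $r_i/s_i$ interlacing is the paper's unique $\mu_{\pm}$), match the $x$-exponents, and then cancel the lattice factors $(1-t^{m_{a_j}(\mu)})$ against the ratio $c_{\mu_+}/c_{\mu_-}$ site by site — your ``telescoping'' is exactly the paper's local tile relations, including the separate check of the two regimes of the piecewise zeroth-site coefficient. The only cosmetic difference is that you derive the uniqueness of the contributing configuration from an explicit path parametrisation (and note the odd-parity constraint for the twin identity), where the paper simply asserts it and illustrates with an example.
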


\begin{proof}
We will show that for any partition state
\begin{align*}
|\mu; \alpha \rangle
=
\ket{m_0(\mu) + \alpha}_0
\otimes
\ket{m_1(\mu)}_1
\otimes
\ket{m_2(\mu)}_2
\otimes
\cdots
\end{align*}
with $m_0(\mu) \in \mathbb{Z}$ and $m_i(\mu) \in \mathbb{N}$ for all $i \geq 1$, one has
\begin{align}
\label{lem-even2}
\bra{\rm e;\alpha}
T_{-}(x)
\ket{\mu;\alpha}
=
\bra{\rm e;\alpha}
T^{*}_{+}(x)
\ket{\mu;\alpha},
\qquad
\bra{\rm e;\alpha}
T_{+}(x)
\ket{\mu;\alpha}
=
\bra{\rm e;\alpha}
T^{*}_{-}(x)
\ket{\mu;\alpha}.
\end{align}
Our starting point is the simple observation that for any partition state $\ket{\mu;\alpha}$, there is a unique way to act with $T_{\pm}(x)$/$T^{*}_{\pm}(x)$ to form a partition $\ket{\lambda;\alpha}$ whose multiplicities $m_i(\lambda)$ are all even. We denote this unique partition by 
$\ket{\mu_{\pm};\alpha}$, with $\pm$ corresponding to a partition which is bigger/smaller (respectively) than the original. Since the transfer matrix elements $T_{\pm}(x)$ act to create bigger partitions, while $T^{*}_{\pm}(x)$ generate smaller partitions, the expectation values in \eqref{lem-even2} simplify greatly and we need only show that
\begin{align}
\label{lem-even3}
c_{\mu_{+}}(\alpha,t)
\bra{\mu_{+};\alpha}
T_{-}(x)
\ket{\mu;\alpha}
&=
c_{\mu_{-}}(\alpha,t)
\bra{\mu_{-};\alpha}
T^{*}_{+}(x)
\ket{\mu;\alpha}
\\
\label{lem-even4}
c_{\mu_{+}}(\alpha,t)
\bra{\mu_{+};\alpha}
T_{+}(x)
\ket{\mu;\alpha}
&=
c_{\mu_{-}}(\alpha,t)
\bra{\mu_{-};\alpha}
T^{*}_{-}(x)
\ket{\mu;\alpha}
\end{align}
Let us focus firstly on the proof of \eqref{lem-even3}. To better illustrate this equation, we give an explicit example with $\mu = (6,4,4,2,1,1)$, $m_0(\mu) = 0$. We have $\mu_{+} = (6,6,4,4,1,1)$, $m_0(\mu_{+}) = 0$ and $\mu_{-} = (4,4,2,2,1,1)$, $m_0(\mu_{-}) = 0$. One finds that
\begin{align*}
c_{\mu_{+}}(\alpha,t) 
\times
\begin{tikzpicture}[baseline=0.25cm,scale=0.7]
\node at (-0.5,0.5) {$\ss \red x$};
\foreach\x in {0,...,7}{
\dark{\x}{0};
}
\draw[gline] (0,0.5) -- (1.3,0.5) -- (1.3,1);
\draw[gline] (1.5,0) -- (1.5,1);
\draw[gline] (1.7,0) -- (1.7,0.5) -- (2.5,0.5) -- (2.5,1);
\draw[gline] (4.4,0) -- (4.4,1); \draw[gline] (4.6,0) -- (4.6,1);
\draw[gline] (6.4,0) -- (6.4,1); \draw[gline] (6.6,0) -- (6.6,0.5) -- (8,0.5);
\gbull{0}{0.5}{0.07}; 
\gbull{1}{0.5}{0.07};
\gbull{2}{0.5}{0.07};
\ebull{3}{0.5}{0.07}; 
\ebull{4}{0.5}{0.07};
\ebull{5}{0.5}{0.07};
\ebull{6}{0.5}{0.07};
\gbull{7}{0.5}{0.07};
\gbull{8}{0.5}{0.07};
\gbull{1.3}{1}{0.07}; \gbull{1.5}{1}{0.07};
\gbull{2.5}{1}{0.07};
\gbull{4.4}{1}{0.07}; \gbull{4.6}{1}{0.07};
\gbull{6.4}{1}{0.07};
\gbull{1.5}{0}{0.07}; \gbull{1.7}{0}{0.07};
\gbull{4.4}{0}{0.07}; \gbull{4.6}{0}{0.07};
\gbull{6.4}{0}{0.07}; \gbull{6.6}{0}{0.07};
\end{tikzpicture}
&=
c_{\mu_{-}}(\alpha,t) 
\times
\begin{tikzpicture}[baseline=0.25cm,scale=0.7]
\node at (-0.5,0.5) {$\ss \red \b{x}$};
\foreach\x in {0,...,7}{
\light{\x}{0};
}
\draw[gline] (1.4,0) -- (1.4,1); \draw[gline] (1.6,0) -- (1.6,1);
\draw[gline] (2.4,0) -- (2.4,1); \draw[gline] (2.6,0) -- (2.6,0.5) -- (4.3,0.5) -- (4.3,1);
\draw[gline] (4.5,0) -- (4.5,1);
\draw[gline] (4.7,0) -- (4.7,0.5) -- (6.5,0.5) -- (6.5,1);
\ebull{0}{0.5}{0.07}; 
\ebull{1}{0.5}{0.07};
\ebull{2}{0.5}{0.07};
\gbull{3}{0.5}{0.07}; 
\gbull{4}{0.5}{0.07};
\gbull{5}{0.5}{0.07};
\gbull{6}{0.5}{0.07};
\ebull{7}{0.5}{0.07};
\ebull{8}{0.5}{0.07};
\gbull{1.4}{1}{0.07}; \gbull{1.6}{1}{0.07};
\gbull{2.4}{1}{0.07};
\gbull{4.3}{1}{0.07}; \gbull{4.5}{1}{0.07};
\gbull{6.5}{1}{0.07};
\gbull{1.4}{0}{0.07}; \gbull{1.6}{0}{0.07};
\gbull{2.4}{0}{0.07}; \gbull{2.6}{0}{0.07};
\gbull{4.5}{0}{0.07}; \gbull{4.7}{0}{0.07};
\end{tikzpicture}
\end{align*}
as required. For a general partition state $\ket{\mu;\alpha}$, it is easy to deduce that both sides of 
\eqref{lem-even3} give the same power of $x$. Furthermore, if $m_i(\mu)$ is even, then $m_i(\mu) = m_i(\mu_{+}) = m_i(\mu_{-})$, meaning that site $i$ of the lattice produces no $t$-dependent factors on either side of \eqref{lem-even3}. If $m_i(\mu)$ is odd, then either $m_i(\mu_{\pm}) = m_i(\mu) \pm 1$ or $m_i(\mu_{\pm}) = m_i(\mu) \mp 1$, and we use either the relation
\begin{align*}
\left(1-t^{\alpha_i+m_i(\mu)}\right)
\begin{tikzpicture}[baseline=0]
\dark{-0.5}{-0.5}; 
\draw[gbline] (0,-0.5) node[below] {$\ss \alpha_i+m_i(\mu)+1$} -- (0,0.5) node[above] 
{$\ss \alpha_i+m_i(\mu)$};
\draw[gline] (0.05,-0.5) -- (0.05,0) -- (0.5,0);
\end{tikzpicture}
=
\begin{tikzpicture}[baseline=0] 
\light{-0.5}{-0.5}; 
\draw[gbline] (0,-0.5) node[below] {$\ss \alpha_i+m_i(\mu)-1$} -- (0,0.5) node[above] 
{$\ss \alpha_i+m_i(\mu)$};
\draw[gline] (-0.05,0.5) -- (-0.05,0) -- (-0.5,0);
\end{tikzpicture}
\qquad
\text{or}
\qquad
\begin{tikzpicture}[baseline=0] 
\dark{-0.5}{-0.5}; 
\draw[gbline] (0,-0.5) node[below] {$\ss \alpha_i+m_i(\mu)-1$} -- (0,0.5) node[above] 
{$\ss \alpha_i+m_i(\mu)$};
\draw[gline] (-0.05,0.5) -- (-0.05,0) -- (-0.5,0);
\end{tikzpicture}
=
\left(1-t^{\alpha_i+m_i(\mu)}\right)
\begin{tikzpicture}[baseline=0]
\light{-0.5}{-0.5}; 
\draw[gbline] (0,-0.5) node[below] {$\ss \alpha_i+m_i(\mu)+1$} -- (0,0.5) node[above] 
{$\ss \alpha_i+m_i(\mu)$};
\draw[gline] (0.05,-0.5) -- (0.05,0) -- (0.5,0);
\end{tikzpicture}
\end{align*}
to compensate for the difference between $c_{\mu_{+}}(\alpha,t)$ and 
$c_{\mu_{-}}(\alpha,t)$ at that site, where $\alpha_0 = \alpha$ and $\alpha_i = 0$ for all $i \geq 1$. The proof of \eqref{lem-even4} is completely analogous.

\end{proof}

We shall represent equations \eqref{lem-even} graphically as follows:
\begin{align}
\label{lem-even5}
&
\begin{tikzpicture}[scale=0.6,baseline=(mid.base)]
\node (mid) at (1,1.3) {};
\foreach\x in {1,...,5}{
\dark{\x}{1};
}
\node at (3.5,0.5) {$\ss \bra{\rm e;\alpha}$};
\gbull{1}{1.5}{0.09};
\gbull{6}{1.5}{0.09};
\node at (0.7,1.5) {$\ss \red x$};
\end{tikzpicture}
\quad
=
\quad
\begin{tikzpicture}[scale=0.6,baseline=(mid.base)]
\node (mid) at (1,1.3) {};
\draw[cross=0.5] (1,1.5) -- (-1,1.5); 
\foreach\x in {1,...,5}{
\light{\x}{1};
}
\node at (3.5,0.5) {$\ss \bra{\rm e;\alpha}$};
\gbull{-1}{1.5}{0.09};
\ebull{6}{1.5}{0.09};
\node at (0.7,1.7) {$\ss \red \b{x}$};
\end{tikzpicture}
\\
\label{lem-even6}
&
\begin{tikzpicture}[scale=0.6,baseline=(mid.base)]
\node (mid) at (1,1.3) {};
\foreach\x in {1,...,5}{
\dark{\x}{1};
}
\node at (3.5,0.5) {$\ss \bra{\rm e;\alpha}$};
\ebull{1}{1.5}{0.09}
\gbull{6}{1.5}{0.09}
\node at (0.7,1.5) {$\ss \red x$};
\end{tikzpicture}
\quad
=
\quad
\begin{tikzpicture}[scale=0.6,baseline=(mid.base)]
\node (mid) at (1,1.3) {};
\draw[cross=0.5] (-1,1.5) -- (1,1.5); 
\foreach\x in {1,...,5}{
\light{\x}{1};
}
\node at (3.5,0.5) {$\ss \bra{\rm e;\alpha}$};
\ebull{-1}{1.5}{0.09};
\ebull{6}{1.5}{0.09};
\node at (0.7,1.7) {$\ss \red \b{x}$};
\end{tikzpicture}
\end{align}
where the cross indicates a spin-flipping operator, which converts a particle to a hole and vice versa (note that it also inverts the spectral parameter). From this it is clear that \eqref{lem-even5} and \eqref{lem-even6} are in fact components of the same equation, in which we do not explicitly specify the state on the left boundary.

\subsection{Lattice formulation of refined Littlewood identity}

\begin{thm}
\label{thm:refined-little}
Hall--Littlewood polynomials satisfy the following refined Littlewood identity:
\begin{multline}
\label{ref-little}
\sum_{
\lambda: m_i(\lambda) \in 2 \mathbb{N}
}
\
\prod_{k=1}^{m_0(\lambda)/2}
(1-u t^{2k-1})
\prod_{i=1}^{\infty}
\prod_{j=1}^{m_i(\lambda)/2}
(1-t^{2j-1})
P_{\lambda}(x_1,\dots,x_{2n};t)
=
\\
\prod_{1 \leq i<j \leq 2n}
\left(
\frac{1-t x_i x_j}{x_i - x_j}
\right)
\pf_{1\leq i < j \leq 2n}
\left[
\frac{(x_i - x_j) (1-ut + (u-1)t x_i x_j)}
{(1-x_i x_j) (1-t x_i x_j)}
\right],
\end{multline}
where the sum is taken over all partitions with even, non-negative multiplicities $m_i(\lambda)$ for all $i \geq 0$. As before, $u$ is a refining parameter whose value is arbitrary.
\end{thm}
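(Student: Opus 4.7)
The plan is to follow verbatim the four-step recipe from the introduction, adapted to the fact that here only one alphabet $\{x_1,\dots,x_{2n}\}$ is present, the sum is over even-multiplicity partitions, and the target is a Pfaffian rather than a determinant. Setting $u=t^{\alpha}$ and comparing the summation coefficients, the LHS of \eqref{ref-little}, multiplied by $\prod_{i=1}^{2n}x_i$, is just
\begin{align*}
\mathcal{L}(x_1,\dots,x_{2n};t;\alpha)
\;:=\;
\bra{\mathrm{e};\alpha}\,T_{+}(x_{2n})\cdots T_{+}(x_{1})\,\ket{0;\alpha},
\end{align*}
which follows by expanding $\bra{\mathrm{e};\alpha}$ via \eqref{e-vec} and invoking the one-line matrix element \eqref{hl-P}; note that the operators $T_{+}(x_i)$ acting on $\ket{0;\alpha}$ produce only states with $m_0\geq0$, so the negative-$m_0$ piece of the even covector is inert.

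Next I would multiply $\mathcal{L}$, from its right edge, by a triangular arrangement of $R$-matrices $\prod_{i<j}R_{ij}(x_i/x_j)$ acting on the $2n$ auxiliary lines. Every right-edge auxiliary state in $\mathcal{L}$ is a green particle, so by freezing each newly introduced vertex sits in an $a_-$ configuration and contributes $1$; this identifies $\mathcal{L}$ with the same quantity decorated by the triangular auxiliary lattice. The push-through is then carried out by repeated application of the intertwining relations \eqref{rll}, \eqref{rll*}, \eqref{rl*l*}, sliding the $R$-matrices from right to left through the bosonic rows. When an $R$-matrix reaches the left edge, the even-covector identities \eqref{lem-even5}--\eqref{lem-even6} convert the outgoing lines (inverting their spectral parameters), so that a crossing of rows $i$ and $j$ at the top becomes a crossing with argument $x_ix_j$ at the bottom. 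This is precisely the role played by the U-turn reflection equation in the $BC_n$ setting, but here it is implemented algebraically by the even-covector property rather than by a boundary solution of the reflection equation.

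After the push-through is complete, the bosonic part is frozen into a single configuration of weight $\prod_{i}x_i$, while all nontrivial dependence is concentrated in a pure six-vertex partition function $Z_{\mathrm{OS}}(x_1,\dots,x_{2n};t;u)$ of upper-triangular shape with U-turn-style boundaries on one side. The final step is to evaluate $Z_{\mathrm{OS}}$ by the Izergin--Korepin technique used in Lemma \ref{lem:u-dwpf}: after multiplying by the common pole normalization $\prod_{i<j}(1-x_ix_j)$ to render it polynomial, $Z_{\mathrm{OS}}$ is (a) symmetric in $\{x_1,\dots,x_{2n}\}$ by the Yang--Baxter equation, (b) of controlled degree in each $x_i$, (c) satisfies a recursion $x_{2n}=\bar{x}_{2n-1}$ that reduces it to the $(2n{-}2)$ case via the fish equation \eqref{fish} and freezing, (d) collapses to $\prod_{i=1}^{n}(1-ut^{2i-1})$ at $x_1=\cdots=x_{2n}=0$, and (e) is explicit for $n=1$. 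These properties uniquely pin down $Z_{\mathrm{OS}}$, and one then checks directly that the Pfaffian on the RHS of \eqref{ref-little} satisfies the same list via the standard Pfaffian-minor expansion in the last row.

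The routine step is the Lagrange-type interpolation in stage four, which parallels Lemma \ref{lem:u-dwpf} almost verbatim once the Pfaffian symmetry is available. The main obstacle is stage three: one needs to show that the composition of RLL-moves with the even-covector reflections \eqref{lem-even5}--\eqref{lem-even6} produces exactly the auxiliary lattice whose Pfaffian evaluates to the RHS, and not some other six-vertex partition function. In particular, tracking the orientations and spectral-parameter inversions introduced by the crossings on the left boundary, and verifying that the resulting geometry matches the (off-diagonally symmetric) ASM-type partition function $\mathcal{Z}_{\mathrm{OS}}$, is where the care is required; everything downstream of that is a bookkeeping exercise.
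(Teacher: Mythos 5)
Your identification of the left-hand side as $\bra{\mathrm{e};\alpha}\,T_{+}(x_{2n})\cdots T_{+}(x_{1})\ket{0;\alpha}$ (up to $\prod_i x_i$), including the observation that the negative-$m_0$ part of the even covector is inert, matches the paper exactly, and your final stage (Lagrange interpolation with symmetry, degree $2n-1$ in $x_{2n}$, the recursion at $x_{2n}=\b{x}_{2n-1}$, the value $\prod_{i=1}^{n}(1-ut^{2i-1})$ at the origin, and the $n=1$ base case) is precisely the paper's Pfaffian lemma. The gap is in the middle step, which you yourself flag as the crux but then sketch incorrectly. First, attaching a triangular array $\prod_{i<j}R_{ij}(x_i/x_j)$ to the right edge while every right-edge state is still a particle does \emph{not} contribute weight $1$: the all-particle vertex has weight $(1-tz)/(1-z)$ in the normalization \eqref{Rmat}, so your array would freeze to $\prod_{i<j}(1-tx_i/x_j)/(1-x_i/x_j)$. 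The weight-$1$ freezing used throughout the paper requires one line of each crossing to carry particles and the other to carry holes, which is only available \emph{after} a row has been converted to a $T^{*}$ row (whose right edge is a hole). Second, and more seriously, the crossings you attach carry the ratio arguments $x_i/x_j$, whereas the partition function $\mathcal{Z}_{\rm OS}$ you are aiming for has crossings with product arguments $x_i x_j$; the identities \eqref{lem-even5}--\eqref{lem-even6} act on a single transfer-matrix row sandwiched against $\bra{\mathrm{e};\alpha}$ and cannot transmute an $R$-matrix with argument $x_i/x_j$ into one with argument $x_i x_j$ as you suggest.

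The correct order of operations, which is what the paper does, is iterative and reversed relative to yours: apply \eqref{lem-even6} \emph{first} to convert the bottom row $T_{+}(x_{2n})$ into a $T^{*}$ row with parameter $\b{x}_{2n}$ and a spin-flip on its left edge; only then attach crossings at the right edge between this hole-carrying row and the particle-carrying dark rows above (these freeze to weight $1$ and, being governed by \eqref{rll*}, automatically carry the arguments $x_i x_{2n}$); push them to the left with \eqref{rll*}, which relocates the light row to the top; and repeat for each remaining row. The triangular array of product-argument crossings, together with the spin-flip operators along the lower boundary and the modified covector $\bra{\alpha_{-}}$ at the base of the zeroth column, emerges from this iteration rather than from a pre-assembled block. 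As written, your stage two produces the wrong auxiliary lattice (wrong weights and wrong spectral arguments), so the derivation would not land on the Pfaffian side of \eqref{ref-little} without reworking this step.
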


\begin{proof}
As we did previously, we parametrize $u = t^{\alpha}$. We consider the following partition in the $t$-boson model:
\begin{align}
\label{pf-little-lhs}
\mathcal{P}(x_1,\dots,x_{2n};t;\alpha)
=
\begin{tikzpicture}[scale=0.6,baseline=(mid.base)]
\node (mid) at (1,3) {};
\foreach\x in {1,...,5}{
\foreach\y in {1,...,4}{
\dark{\x}{\y};
}
}
\node at (1.5,5.5) {$\ss \alpha$};
\node at (2.5,5.5) {$\ss 0$};
\node at (3.5,5.5) {$\ss 0$};
\node at (4.5,5.5) {$\ss 0$};
\node at (5.5,5.5) {$\ss \cdots$};
\node at (3.5,0.5) {$\ss \bra{\rm e;\alpha}$};
\foreach\x in {1,...,4}{
\ebull{1}{0.5+\x}{0.09}
\gbull{6}{0.5+\x}{0.09}
}
\node at (0.5,4.5) {$\ss \red x_1$};
\node at (0.5,3.3) {$\ss \red \vdots$};
\node at (0.5,1.5) {$\ss \red x_{2n}$};
\end{tikzpicture}
\end{align}
From \eqref{hl-P} and the definition \eqref{e-vec} of $\bra{\rm e;\alpha}$ it is immediate that
\begin{align}
\label{hl-little-lhs}
\mathcal{P}(x_1,\dots,x_{2n};t;\alpha)
=
\left( \prod_{i=1}^{2n} x_i \right)
\sum_{
\lambda: m_i(\lambda) \in 2 \mathbb{N}
}
\
\prod_{k=1}^{m_0(\lambda)/2}
(1-u t^{2k-1})
\prod_{i=1}^{\infty}
\prod_{j=1}^{m_i(\lambda)/2}
(1-t^{2j-1})
P_{\lambda}(x_1,\dots,x_{2n};t),
\end{align}
which is equal to the left hand side of \eqref{ref-little}, up to the extra factor 
$\prod_{i=1}^{2n} x_i$. It is important to point out that although $\bra{{\rm e};\alpha}$ is summed over states $\lambda$ for which $m_i(\lambda)$ is negative, these states do not contribute to the final expression \eqref{hl-little-lhs}. This is easily deduced by studying the partition function \eqref{pf-little-lhs}. Indeed, if the occupation number at the base of the zeroth column is $\alpha - 2k$ for some $k \geq 1$, it is impossible to obtain the occupation number $\alpha$ at the top of this column, since no particles are incident on the left edge. 

On the other hand, applying \eqref{lem-even6} once at the base of the lattice, we find that
\begin{align*}
\mathcal{P}(x_1,\dots,x_{2n};t;\alpha)
=
\begin{tikzpicture}[scale=0.6,baseline=(mid.base)]
\node (mid) at (1,3) {};
\foreach\x in {2,...,4}{
\draw[fline] (-1.5,\x+0.5) -- (1,\x+0.5);
}
\draw[fline,cross=0.5] (-1.5,1.5) -- (1,1.5);
\foreach\x in {1,...,5}{
\foreach\y in {2,...,4}{
\dark{\x}{\y};
}
}
\foreach\x in {1,...,5}{
\light{\x}{1};
}
\node at (1.5,5.5) {$\ss \alpha$};
\node at (2.5,5.5) {$\ss 0$};
\node at (3.5,5.5) {$\ss 0$};
\node at (4.5,5.5) {$\ss 0$};
\node at (5.5,5.5) {$\ss \cdots$};
\node at (3.5,0.5) {$\ss \bra{\rm e;\alpha}$};
\foreach\x in {1,...,4}{
\ebull{-1.5}{0.5+\x}{0.09}
}
\ebull{6}{1.5}{0.09}
\foreach\x in {2,...,4}{
\gbull{6}{0.5+\x}{0.09}
}
\node at (0.5,4.7) {$\ss \red x_1$};
\node at (0.5,4.1) {$\ss \red \vdots$};
\node at (0.4,2.7) {$\ss \red x_{2n-1}$};
\node at (0.5,1.7) {$\ss \red \b{x}_{2n}$};
\end{tikzpicture}
=
\begin{tikzpicture}[scale=0.6,baseline=(mid.base)]
\node (mid) at (1,3) {};
\foreach\x in {2,...,4}{
\draw[fline] (-1.5,\x+0.5) -- (1,\x+0.5);
}
\draw[fline,cross=0.5] (-1.5,1.5) -- (1,1.5);
\foreach\x in {1,...,5}{
\foreach\y in {2,...,4}{
\dark{\x}{\y};
}
}
\foreach\x in {1,...,5}{
\light{\x}{1};
}
\node at (1.5,5.5) {$\ss \alpha$};
\node at (2.5,5.5) {$\ss 0$};
\node at (3.5,5.5) {$\ss 0$};
\node at (4.5,5.5) {$\ss 0$};
\node at (5.5,5.5) {$\ss \cdots$};
\node at (3.5,0.5) {$\ss \bra{\rm e;\alpha}$};
\draw[fline] (7,4.5) -- (7,5.5);
\draw[fline] (6,1.5) -- (7,1.5) -- (7,4.5);
\foreach\x in {2,...,4}{
\draw[fline] (8,\x+0.5) -- (6,\x+0.5);
}
\foreach\x in {1,...,4}{
\ebull{-1.5}{0.5+\x}{0.09}
}
\ebull{7}{5.5}{0.09}
\foreach\x in {2,...,4}{
\gbull{8}{0.5+\x}{0.09}
}
\node at (0.5,4.7) {$\ss \red x_1$};
\node at (0.5,4.1) {$\ss \red \vdots$};
\node at (0.4,2.7) {$\ss \red x_{2n-1}$};
\node at (0.5,1.7) {$\ss \red \b{x}_{2n}$};
\end{tikzpicture}
\end{align*}
where we have introduced a product of $R$ matrix vertices at the right edge, which are completely frozen by the vanishing property of $T(x)$ when it has no particle at its far right edge, and produce an overall Boltzmann weight of 1. Using the intertwining equation \eqref{rll*}, we transform this as follows:
\begin{align*}
\mathcal{P}(x_1,\dots,x_{2n};t;\alpha)
=
\begin{tikzpicture}[scale=0.6,baseline=(mid.base)]
\node (mid) at (1,3) {};
\foreach\x in {1,...,3}{
\draw[fline] (-1,\x+0.5) -- (1,\x+0.5);
}
\draw[fline,cross=1] (-1,0.5) -- (0,0.5); 
\draw[fline] (0,0.5) -- (0,4.5) -- (1,4.5);
\foreach\x in {1,...,5}{
\foreach\y in {1,...,3}{
\dark{\x}{\y};
}
}
\foreach\x in {1,...,5}{
\light{\x}{4};
}
\node at (1.5,5.5) {$\ss \alpha$};
\node at (2.5,5.5) {$\ss 0$};
\node at (3.5,5.5) {$\ss 0$};
\node at (4.5,5.5) {$\ss 0$};
\node at (5.5,5.5) {$\ss \cdots$};
\node at (3.5,0.5) {$\ss \bra{\rm e;\alpha}$};
\foreach\x in {1,...,4}{
\ebull{-1}{-0.5+\x}{0.09}
}
\foreach\x in {1,...,3}{
\gbull{6}{0.5+\x}{0.09}
}
\ebull{6}{4.5}{0.09}
\node at (-1.4,3.5) {$\ss \red x_1$};
\node at (-1.4,2.6) {$\ss \red \vdots$};
\node at (-1.7,1.5) {$\ss \red x_{2n-1}$};
\node at (0,0.1) {$\ss \red \b{x}_{2n}$};
\end{tikzpicture}
\end{align*}
Iterating this procedure over the remaining $T(x)$ rows, we transform them all into $T^{*}(x)$ rows, and obtain the following expression:
\begin{align*}
\mathcal{P}(x_1,\dots,x_{2n};t;\alpha)
&=
\begin{tikzpicture}[scale=0.6,baseline=(mid.base)]
\node (mid) at (1,3) {};
\foreach\x in {1,...,4}{
\draw[fline,cross=1] (1,0.5+\x) -- (1.5-\x,1);
}
\foreach\x in {1,...,4}{
\draw[fline] (-1-0.5*\x,3.5-0.5*\x) -- (1.5-\x,1);
\ebull{-1-0.5*\x}{3.5-0.5*\x}{0.09}
}
\foreach\x in {1,...,5}{
\foreach\y in {1,...,4}{
\light{\x}{\y};
}
}
\node at (1.5,5.5) {$\ss \alpha$};
\node at (2.5,5.5) {$\ss 0$};
\node at (3.5,5.5) {$\ss 0$};
\node at (4.5,5.5) {$\ss 0$};
\node at (5.5,5.5) {$\ss \cdots$};
\node at (3.5,0.5) {$\ss \bra{\rm e;\alpha}$};
\foreach\x in {1,...,4}{
\ebull{6}{0.5+\x}{0.09}
}
\node at (-1.7,3.3) {$\ss \red x_{1}$};
\node at (-2.5,2.7) {$\ss \red \ddiagdot$};
\node at (-3.2,1.8) {$\ss \red x_{2n}$};
\node at (-2.6,0.6) {$\ss \red \b{x}_{2n}$};
\node at (-1.1,0.6) {$\ss \red \cdots$};
\node at (0.4,0.6) {$\ss \red \b{x}_{1}$};
\end{tikzpicture}
\\
&=
\begin{tikzpicture}[scale=0.6,baseline=(mid.base)]
\node (mid) at (1,3) {};
\foreach\x in {1,...,4}{
\draw[fline,cross=1] (1,0.5+\x) -- (1.5-\x,1);
}
\foreach\x in {1,...,4}{
\draw[fline] (-1-0.5*\x,3.5-0.5*\x) -- (1.5-\x,1);
\ebull{-1-0.5*\x}{3.5-0.5*\x}{0.09}
}
\foreach\y in {1,...,4}{
\light{1}{\y};
}
\node at (1.5,5.5) {$\ss \alpha$};
\node at (1.5,0.5) {$\ss \alpha_{-}$};
\node at (2.5,3) {$\times$};
\foreach\x in {3,...,6}{
\foreach\y in {1,...,4}{
\light{\x}{\y};
}
}
\node at (3.5,5.5) {$\ss 0$};
\node at (4.5,5.5) {$\ss 0$};
\node at (5.5,5.5) {$\ss 0$};
\node at (6.5,5.5) {$\ss \cdots$};
\node at (3.5,0.5) {$\ss 0$};
\node at (4.5,0.5) {$\ss 0$};
\node at (5.5,0.5) {$\ss 0$};
\node at (6.5,0.5) {$\ss \cdots$};
\foreach\x in {1,...,4}{
\ebull{2}{0.5+\x}{0.09}
\ebull{3}{0.5+\x}{0.09}
\ebull{7}{0.5+\x}{0.09}
}
\node at (-1.7,3.3) {$\ss \red x_{1}$};
\node at (-2.5,2.7) {$\ss \red \ddiagdot$};
\node at (-3.2,1.8) {$\ss \red x_{2n}$};
\node at (-2.6,0.6) {$\ss \red \b{x}_{2n}$};
\node at (-1.1,0.6) {$\ss \red \cdots$};
\node at (0.4,0.6) {$\ss \red \b{x}_{1}$};
\end{tikzpicture}
\end{align*}
where the lattice factorizes into a partition function featuring the zeroth column, and a trivial region, with Boltzmann weight 1, formed by the first column onwards. The dual vector at the base of the zeroth column is given explicitly by
\begin{align*}
\bra{\alpha_{-}}
=
\sum_{k=0}^{\infty}
\prod_{j=1}^{k}
\frac{1}{1-u t^{1-2j}}
\bra{\alpha-2k}_0,
\end{align*}
where we now sum over {\it non-positive} shifts of $\alpha$. Indeed, note that the state variable at the base of the zeroth column cannot be of the form $\alpha+2k$ where $k \geq 1$, since the state variable at the top of that column is $\alpha$ and there are no particles incident on its right edge. To conclude the proof, we show that the remaining non-trivial partition function is equal to the right hand side of \eqref{ref-little}.

\subsection{Pfaffian evaluation of $Z_{\rm OS}$}

\begin{lem}
\begin{align}
\label{lem-u-osasm}
\begin{tikzpicture}[scale=0.6,baseline=(mid.base)]
\node (mid) at (1,3) {};
\foreach\x in {1,...,4}{
\draw[fline,cross=1] (1,0.5+\x) -- (1.5-\x,1);
}
\foreach\x in {1,...,4}{
\draw[fline] (-1-0.5*\x,3.5-0.5*\x) -- (1.5-\x,1);
\ebull{-1-0.5*\x}{3.5-0.5*\x}{0.09}
}
\foreach\y in {1,...,4}{
\light{1}{\y};
}
\node at (1.5,5.5) {$\ss \alpha$};
\node at (1.5,0.5) {$\ss \alpha_{-}$};
\foreach\x in {1,...,4}{
\ebull{2}{0.5+\x}{0.09}
}
\node at (-1.7,3.3) {$\ss \red x_{1}$};
\node at (-2.5,2.7) {$\ss \red \ddiagdot$};
\node at (-3.2,1.8) {$\ss \red x_{2n}$};
\node at (-2.6,0.6) {$\ss \red \b{x}_{2n}$};
\node at (-1.1,0.6) {$\ss \red \cdots$};
\node at (0.4,0.6) {$\ss \red \b{x}_{1}$};
\end{tikzpicture}
=
\left(
\prod_{i=1}^{2n}
x_i
\right)
\times
\prod_{1 \leq i<j \leq 2n}
\left(
\frac{1-t x_i x_j}{x_i - x_j}
\right)
\pf_{1\leq i < j \leq 2n}
\left[
\frac{(x_i - x_j) (1-ut + (u-1)t x_i x_j)}
{(1-x_i x_j) (1-t x_i x_j)}
\right]
\end{align}
\end{lem}

\begin{proof}
Let $Z_{\rm OS}(2n)$ denote the left hand side of \eqref{lem-u-osasm}, multiplied by 
$\prod_{1 \leq i<j \leq 2n} (1-x_i x_j) \prod_{k=1}^{2n} \b{x}_k$. We write down a list of properties which this re-normalized partition function satisfies:
\begin{enumerate}[label=\bf\arabic*.]
\item $Z_{\rm OS}(2n)$ is symmetric in $\{x_1,\dots,x_{2n}\}$.

\item $Z_{\rm OS}(2n)$ is a polynomial in $x_{2n}$ of degree $2n-1$.

\item Setting $x_{2n} = \b{x}_{2n-1}$, we have the recursion relation
\begin{align*}
Z_{\rm OS}(2n)
\Big|_{x_{2n} = \b{x}_{2n-1}}
=
(1-t)
\prod_{i=1}^{2n-2}
(1-t x_i \b{x}_{2n-1}) (1-t x_i x_{2n-1})
Z_{\rm OS}(2n-2),
\end{align*}
where $Z_{\rm OS}(2n-2)$ is the re-normalized partition function of two sizes smaller.

\item Setting all variables to zero, we have
\begin{align*}
Z_{\rm OS}(2n)
\Big|_{x_1,\dots,x_{2n} = 0}
=
\prod_{i=1}^{n} (1-ut^{2i-1}).
\end{align*}

\item The smallest possible re-normalized partition function, of size 2, is given explicitly by
\begin{align*}
Z_{\rm OS}(2)
=
1 - ut + (u-1)t x_1 x_2.
\end{align*}
 
\end{enumerate}
These properties are proved in a very similar way to those of Lemma \ref{lem:u-dwpf}. They are uniquely-determining, again by Lagrange interpolation arguments, and one can easily show that the right hand side of \eqref{lem-u-osasm} obeys them.
\end{proof}

\end{proof}

\subsection{Specialization to off-diagonally symmetric alternating sign matrices}

\begin{cor}
Taking the $u=1$ ($\alpha = 0$) case of \eqref{ref-little}, we have
\begin{align*}
\sum_{
\lambda: m_i(\lambda) \in 2 \mathbb{N}
}
\
\prod_{i=0}^{\infty}
\prod_{j=1}^{m_i(\lambda)/2}
(1-t^{2j-1})
P_{\lambda}(x_1,\dots,x_{2n};t)
=
Z_{\rm OS}(x_1,\dots,x_{2n};t)
\end{align*}
where $Z_{\rm OS}$ is the partition function of the six-vertex model corresponding with off-diagonally symmetric alternating-sign matrices \cite{kup2}.
\end{cor}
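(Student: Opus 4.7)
The plan is to obtain the corollary as a direct $u=1$ specialization of Theorem \ref{thm:refined-little}, plus an identification of the resulting Pfaffian with Kuperberg's evaluation of the OSASM partition function. The reduction on the left-hand side is purely formal: at $u=1$ the factor $\prod_{k=1}^{m_0(\lambda)/2}(1-u t^{2k-1})$ becomes $\prod_{k=1}^{m_0(\lambda)/2}(1-t^{2k-1})$, which is exactly the $i=0$ contribution missing from the product $\prod_{i=1}^{\infty}\prod_{j=1}^{m_i(\lambda)/2}(1-t^{2j-1})$ already present in \eqref{ref-little}. Combining the two yields the compact product $\prod_{i=0}^{\infty}\prod_{j=1}^{m_i(\lambda)/2}(1-t^{2j-1})$ stated in the corollary.

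On the right-hand side, setting $u=1$ collapses every entry of the Pfaffian: $1-u t + (u-1)t x_i x_j = 1-t$, independent of $i,j$. Pulling this constant out of a $2n\times 2n$ antisymmetric Pfaffian produces an overall $(1-t)^n$, so I would rewrite the RHS of \eqref{ref-little} at $u=1$ as
\begin{equation*}
(1-t)^n \prod_{1\leq i<j\leq 2n}\left(\frac{1-tx_ix_j}{x_i-x_j}\right)
\pf_{1\leq i<j \leq 2n}\left[\frac{x_i-x_j}{(1-x_ix_j)(1-tx_ix_j)}\right].
\end{equation*}
The remaining task is to identify this expression with $Z_{\rm OS}$ as defined in \cite{kup2}. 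The cleanest route bypasses a direct Pfaffian comparison and works at the lattice level: in the proof of the Pfaffian-evaluation lemma for $Z_{\rm OS}$, specializing $\alpha=0$ (equivalently $u=1$) in the covector $\bra{\alpha_-}$ removes all refining dependence from the zeroth column; combined with the structure of the remaining frozen region, the partition function then reproduces the standard six-vertex configuration on the OSASM domain of \cite{kup2}, whose value is by definition $Z_{\rm OS}$.

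The main obstacle, such as it is, is verifying this lattice identification carefully, since $\bra{\alpha_-}$ still involves a sum over non-positive shifts $\alpha-2k$ with coefficients $\prod_{j=1}^k 1/(1-u t^{1-2j})$ that at $u=1$ are finite but sign-alternating. I would check that these shifted contributions are precisely those needed to reproduce the OSASM boundary (which allows for a ``$c$-vertex'' along the reflecting diagonal), so that the hybrid partition function $\mathcal{Z}_{\rm OS}$ at $\alpha=0$ specializes exactly to Kuperberg's $Z_{\rm OS}$ rather than to some strictly larger object. Once this is checked, the corollary follows by substituting $u=1$ in \eqref{ref-little} and reading off both sides.
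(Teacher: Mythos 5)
Your proposal is correct and matches the paper's (essentially immediate) derivation: the corollary follows by setting $u=1$ in Theorem \ref{thm:refined-little}, merging the $m_0$ factor into the infinite product on the left, and recognizing the right-hand side (whose Pfaffian entries now carry the constant numerator $1-t$) as Kuperberg's Pfaffian evaluation of the OSASM partition function from \cite{kup2}. The lattice-level verification you flag as the ``main obstacle'' is not actually required, since the already-established Pfaffian evaluation of $Z_{\rm OS}$ lets one identify the $u=1$ specialization by direct comparison of closed-form expressions.
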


\section{Reflecting Cauchy identity}
\label{sec:refl-cauchy}

In this section we prove a result that was conjectured in \cite{bw}. It is a Cauchy-type identity, whose left hand side is an infinite sum of a product of an $A_n$ and a $BC_n$ Hall--Littlewood polynomial. The right hand side is a determinant, and equal to the partition function $Z_{\rm U}$ of U-turn ASMs \cite{tsu,kup2}.

Unlike the identities discussed in the previous sections, we do not include a refining parameter $u$ in \eqref{ref-reflect-cauch}, even though there appears to be no immediate obstacle to doing so. Indeed it is natural to follow the procedure of the preceding sections, and fix the incoming/outgoing occupation numbers of the zeroth column to $\alpha \not= 0$, in this way introducing an extra parameter $u = t^{\alpha}$. While this procedure is straightforward for $A_n$ Hall--Littlewood polynomials, with the dependence on $u$ factorizing out as in Corollary \ref{cor:hl}, for $BC_n$ Hall--Littlewood polynomials the $u$ parameter becomes embedded non-trivially in the polynomial itself (due to the presence of the boundary covectors). We thus obtain a function which is a one-parameter refinement of $BC_n$ Hall--Littlewood polynomials. 

It remains to ask whether such a refined polynomial is already known in the literature. We expected that the polynomial in question would be the $q=0$ specialization of a ``lifted'' Koornwinder polynomial as defined by Rains in \cite{rai}, with $u$ playing the role of the ``lifting'' parameter (denoted $T$ in \cite{rai}). This expectation was based on the fact that we have already conjectured a $u$-refinement of \eqref{ref-reflect-cauch} in \cite{bwz-j}, in which the lifted Koornwinder polynomials make an appearance, and that this more difficult conjecture should also be treatable in the framework of this paper. However, it turns out that this natural approach is not the correct one, and including the $u$ parameter in such a na\"ive way does not lead to lifted Koornwinder polynomials. For this reason we do not present a refined version of \eqref{ref-reflect-cauch}, and the conjecture of \cite{bwz-j} remains open.

\subsection{Lattice formulation of reflecting Cauchy identity}

\begin{thm}
$BC_n$ and $A_n$ Hall--Littlewood polynomials satisfy the following refined Cauchy identity:
\begin{multline}
\label{ref-reflect-cauch}
\sum_{\lambda} 
\prod_{i=1}^{m_0(\lambda)} 
(1-t^i)
b_{\lambda}(t)
K_{\lambda}(x^{\pm1}_1,\dots,x^{\pm1}_n;t) 
P_{\lambda}(y_1,\dots,y_n;t) 
= 
\\
\frac{\displaystyle{
\prod_{i,j=1}^{n} 
(1-t x_i y_j) (1-t \b{x}_i y_j) 
}}
{\displaystyle{
\prod_{1\leq i<j \leq n} (x_i-x_j) (y_i-y_j)(1 - \b{x}_i \b{x}_j)(1 - t y_i y_j)
}}
\det_{1\leq i,j \leq n}
\left[
\frac{(1-t)}{(1- x_i y_j)(1-\b{x}_i y_j)(1-t x_i y_j)(1-t \b{x}_i y_j)}
\right]
\end{multline}
\end{thm}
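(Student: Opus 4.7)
Following the four-step procedure outlined in the introduction, the plan is to begin by constructing a hybrid bosonic partition function
\[
\mathcal{P}(x^{\pm};y;t)
=
\bra{\widehat{0}}\otimes\bra{0}\,
\mathbb{T}^{*}_{++}(x_1;0,0)\cdots \mathbb{T}^{*}_{++}(x_n;0,0)\;
T_{+}(y_n)\cdots T_{+}(y_1)\,
\ket{\widehat{0}}\otimes\ket{0},
\]
whose $\mathbb{T}^{*}_{++}$ double-rows carry the parameters $x_i^{\pm1}$ with vanishing boundary parameters $\gamma=\delta=0$, and whose $T_{+}$ single-rows carry the parameters $y_j$. Inserting a complete set of bosonic states at the interface between the two products, and using \eqref{L} together with the identity $L_\lambda(x^\pm;t)=\prod_{j=1}^{m_0(\lambda)}(1-t^j)\,b_\lambda(t)\,K_\lambda(x^\pm;t)$ derived at the end of Section \ref{sec:bc-lattice-paths}, along with \eqref{hl-P} at $\alpha=0$ for $P_\lambda$, one obtains
\[
\mathcal{P}(x^{\pm};y;t)
=
\prod_{i=1}^{n}(\b{x}_i - t x_i)\, y_i
\;\times
\sum_{\lambda} \prod_{k=1}^{m_0(\lambda)}(1-t^k)\,b_\lambda(t)\,K_\lambda(x^\pm;t)\,P_\lambda(y;t).
\]
Only the vacuum sector of $\widehat V$ survives the insertion, because the $T_{+}(y_j)$ act as the identity on $\widehat V$; hence $\mathcal{P}$ equals the LHS of \eqref{ref-reflect-cauch} up to the explicit prefactor above.

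\paragraph*{Auxiliary lattice and intertwining.} Sufficiently far to the right of the bosonic lattice every site is in the vacuum $\ket{0}$, so the tiles there freeze. On this frozen right edge I would attach an auxiliary six-vertex lattice made from $n$ diagonal lines carrying the $y_j$'s (one per $T_{+}$ row) together with $2n$ diagonal lines carrying the $(x_i,\b{x}_i)$'s (one pair per $\mathbb{T}^{*}_{++}$ double-row), grouped into $n$ U-turn pairs on the far right. Using the vanishing components after \eqref{mat-form-dark}--\eqref{mat-form-light} and the freezing mechanism of Lemma \ref{lem:vanishing}, this auxiliary lattice is frozen in a unique configuration of total weight $1$ and therefore leaves $\mathcal{P}$ unchanged. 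I would then apply repeatedly the intertwining relations \eqref{rll}, \eqref{rll*}, \eqref{rl*l*} together with the reflection equation \eqref{reflection} to push all auxiliary $R$-matrices from the right edge, through the bosonic lattice, to its left edge. After this relocation the bosonic part becomes frozen in a trivial vacuum-to-vacuum configuration contributing only a simple multiplicative factor, while the auxiliary vertices on the left assemble into the six-vertex partition function $\mathcal{Z}_{\rm U}(x^{\pm};y;t)$ with reflecting U-turn boundary of Tsuchiya \cite{tsu}.

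\paragraph*{Evaluation of $\mathcal{Z}_{\rm U}$ and main obstacle.} To finish, I would identify $\mathcal{Z}_{\rm U}$ with the determinant on the right of \eqref{ref-reflect-cauch} via the Lagrange interpolation strategy of Lemma \ref{lem:u-dwpf} (and its Littlewood counterpart): after clearing the obvious denominators, the partition function is polynomial of controlled degree in each $y_j$, separately symmetric in $\{x_1,\dots,x_n\}$ and in $\{y_1,\dots,y_n\}$, and satisfies clean recursion relations at the specializations $y_j=\b{x}_i$ and $y_j=x_i$ (using the fish equation \eqref{fish} and two-vertex freezing adjacent to the reflecting boundary); together with an explicit $n=1$ base case these properties characterize the result uniquely, and the same list is readily verified for the RHS determinant. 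The main technical obstacle will be the intertwining stage: the interaction between the auxiliary $R$-matrices and the boundary covector $\bra{K}$ must be orchestrated via the reflection equation in precisely the right order, and the freezing factors on the bosonic side must be tracked carefully so that the prefactor $\prod_i(\b{x}_i - t x_i)\,y_i$ cancels exactly against factors produced by the auxiliary lattice, leaving no spurious terms.
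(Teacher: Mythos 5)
Your proposal follows essentially the same route as the paper: realize the left-hand side as a bosonic vacuum-to-vacuum expectation value, attach a frozen auxiliary $R$-lattice at the right (infinite) edge, relocate it to the left via the intertwining equations, and evaluate the residual U-turn six-vertex partition function by Lagrange interpolation; your only deviation is the dual pairing $L_\lambda(x^{\pm1})\,P_\lambda(y)$ built from $\mathbb{T}^{*}_{++}$ and $T_{+}$ in place of the paper's $K_\lambda(x^{\pm1})\,Q_\lambda(y)$ built from $\mathbb{T}_{--}$ and $T^{*}_{-}$, which is an immaterial mirror image. The one difficulty you flag---orchestrating the auxiliary $R$ matrices past the boundary covector via the reflection equation---does not actually arise: the relocated $R$ vertices come to rest between the left edge of the bosonic lattice and the U-turns, so only the intertwining equations (plus the vanishing components of the transfer matrices for the initial freezing, rather than Lemma~\ref{lem:vanishing}) are needed.
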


\begin{proof}
We follow a very similar procedure to the proof of Theorem \ref{thm-cauch}. Consider the following partition function in the $t$-boson model:
\begin{align}
\label{cauch-refl-1}
\mathcal{P}(x^{\pm1}_1,\dots,x^{\pm1}_n;y_1,\dots,y_n;t)
=
\begin{tikzpicture}[scale=0.6,baseline=(mid.base)]
\node (mid) at (1,4) {};
\foreach\x in {2,...,6}{
\foreach\y in {1,...,2}{
\light{\x}{\y};
}
\foreach\y in {3,...,6}{
\dark{\x}{\y};
}
}
\foreach\x in {1,...,2}{
\draw[smooth] (2,2.5+2*\x) arc (90:270:0.5);
\node at (1.5,2+2*\x) {\fs$\bullet$};
}
%
\node at (1.3,6.5) {$\ss \red x_1$};
\node at (1.3,5.5) {$\ss \red \b{x}_1$};
\node at (1.8,5.15) {$\ss \red \vdots$};
\node at (1.3,4.5) {$\ss \red x_n$};
\node at (1.3,3.5) {$\ss \red \b{x}_n$};
\node at (1.3,2.5) {$\ss \red \b{y}_n$};
\node at (1.8,2.15) {$\ss \red \vdots$};
\node at (1.3,1.5) {$\ss \red \b{y}_1$};
\node at (2.5,0.5) {$\ss 0$};
\node at (3.5,0.5) {$\ss 0$};
\node at (4.5,0.5) {$\ss 0$};
\node at (5.5,0.5) {$\ss 0$};
\node at (6.5,0.5) {$\ss \cdots$};
\node at (2.5,7.5) {$\ss 0$};
\node at (3.5,7.5) {$\ss 0$};
\node at (4.5,7.5) {$\ss 0$};
\node at (5.5,7.5) {$\ss 0$};
\node at (6.5,7.5) {$\ss \cdots$};
\foreach\x in {1,...,2}{
\gbull{2}{0.5+\x}{0.09}
}
\foreach\x in {0,1,...,3}{
\draw[fline] (8.5+0.5*\x,2+0.5*\x) -- (7,3.5+\x);
\gbull{8.5+0.5*\x}{2+0.5*\x}{0.09}
}
\foreach\x in {1,2}{
\draw[fline] (7,0.5+\x) -- (10.5-0.5*\x,4+0.5*\x);
\ebull{10.5-0.5*\x}{4+0.5*\x}{0.09}
}
\end{tikzpicture}
\end{align}
Recalling that $T^{*}(y)$ vanishes unless its right edge is unoccupied by a particle, we find that the attached lattice of $R$ matrices freezes (with total Boltzmann weight 1) and we are left with
\begin{align}
\mathcal{P}(x^{\pm1}_1,\dots,x^{\pm1}_n;y_1,\dots,y_n;t)
=
\begin{tikzpicture}[scale=0.6,baseline=(mid.base)]
\node (mid) at (1,4) {};
\foreach\x in {2,...,6}{
\foreach\y in {1,...,2}{
\light{\x}{\y};
}
\foreach\y in {3,...,6}{
\dark{\x}{\y};
}
}
\foreach\x in {1,...,2}{
\draw[smooth] (2,2.5+2*\x) arc (90:270:0.5);
\node at (1.5,2+2*\x) {\fs$\bullet$};
}
%
\node at (1.3,6.5) {$\ss \red x_1$};
\node at (1.3,5.5) {$\ss \red \b{x}_1$};
\node at (1.8,5.15) {$\ss \red \vdots$};
\node at (1.3,4.5) {$\ss \red x_n$};
\node at (1.3,3.5) {$\ss \red \b{x}_n$};
\node at (1.3,2.5) {$\ss \red \b{y}_n$};
\node at (1.8,2.15) {$\ss \red \vdots$};
\node at (1.3,1.5) {$\ss \red \b{y}_1$};
\node at (2.5,0.5) {$\ss 0$};
\node at (3.5,0.5) {$\ss 0$};
\node at (4.5,0.5) {$\ss 0$};
\node at (5.5,0.5) {$\ss 0$};
\node at (6.5,0.5) {$\ss \cdots$};
\node at (2.5,7.5) {$\ss 0$};
\node at (3.5,7.5) {$\ss 0$};
\node at (4.5,7.5) {$\ss 0$};
\node at (5.5,7.5) {$\ss 0$};
\node at (6.5,7.5) {$\ss \cdots$};
\foreach\x in {1,...,2}{
\gbull{2}{0.5+\x}{0.09}
}
\node at (7.5,4) {$\times$};
\foreach\x in {0,1,...,3}{
\draw[fline] (9.5+0.5*\x,2+0.5*\x) -- (8,3.5+\x);
\gbull{9.5+0.5*\x}{2+0.5*\x}{0.09}
}
\foreach\x in {1,2}{
\draw[fline] (8,0.5+\x) -- (11.5-0.5*\x,4+0.5*\x);
\ebull{11.5-0.5*\x}{4+0.5*\x}{0.09}
}
\foreach\x in {1,...,2}{
\gbull{2}{0.5+\x}{0.09}
\ebull{7}{0.5+\x}{0.09}
\ebull{8}{0.5+\x}{0.09}
}
\foreach\x in {1,...,4}{
\gbull{7}{2.5+\x}{0.09}
\gbull{8}{2.5+\x}{0.09}
}
\end{tikzpicture}
\end{align}
From this, it is immediate that
\begin{align}
\label{something}
\mathcal{P}(x^{\pm1}_1,\dots,x^{\pm1}_n;y_1,\dots,y_n;t)
=
\prod_{j=1}^{n} (x_j - t \b{x}_j) y_j
\sum_{\lambda}
\prod_{i=1}^{m_0(\lambda)} 
(1-t^{i})
K_{\lambda}(x^{\pm1}_1,\dots,x^{\pm1}_n;t) 
Q_{\lambda}(y_1,\dots,y_n;t).
\end{align}

Returning to the original partition function \eqref{cauch-refl-1}, we use the intertwining equation to transform it to
\begin{align}
\mathcal{P}(x^{\pm1}_1,\dots,x^{\pm1}_n;y_1,\dots,y_n;t)
=
\begin{tikzpicture}[scale=0.6,baseline=(mid.base)]
\node (mid) at (1,4) {};
\foreach\x in {1,...,4}{
\draw[fline] (-2.5+0.5*\x,4+0.5*\x) -- (1,0.5+\x);
}
\foreach\x in {2,3}{
\draw[fline] (1,3.5+\x) -- (-0.5-0.5*\x,2+0.5*\x);
\gbull{-0.5-0.5*\x}{2+0.5*\x}{0.09}
}
\foreach\x in {0,1}{
\draw[smooth] (-2+\x,4.5+\x) arc (225:45:0.353553);
\node at (-2+\x,5+\x) {\fs$\bullet$};
}
\foreach\x in {1,...,5}{
\foreach\y in {1,...,4}{
\dark{\x}{\y};
}
\foreach\y in {5,...,6}{
\light{\x}{\y};
}
}
\node at (1.5,0.5) {$\ss 0$};
\node at (2.5,0.5) {$\ss 0$};
\node at (3.5,0.5) {$\ss 0$};
\node at (4.5,0.5) {$\ss 0$};
\node at (5.5,0.5) {$\ss \cdots$};
\node at (1.5,7.5) {$\ss 0$};
\node at (2.5,7.5) {$\ss 0$};
\node at (3.5,7.5) {$\ss 0$};
\node at (4.5,7.5) {$\ss 0$};
\node at (5.5,7.5) {$\ss \cdots$};
\foreach\x in {1,2}{
\ebull{6}{4.5+\x}{0.09}
}
\foreach\x in {1,...,4}{
\gbull{6}{0.5+\x}{0.09}
}
\node at (-1,6.5) {$\ss \red x_1$};
\node at (-1.5,6) {$\ss \red \b{x}_1$};
\node at (-1.35,5.5) {$\ss \red \ddiagdot$};
\node at (-2,5.5) {$\ss \red x_n$};
\node at (-2.5,5) {$\ss \red \b{x}_n$};
\node at (-2.5,3) {$\ss \red \b{y}_n$};
\node at (-1.9,3) {$\ss \red \udiagdot$};
\node at (-2,2.5) {$\ss \red \b{y}_1$};
\end{tikzpicture}
\end{align}
Using particle conservation arguments, this partition function freezes from its zeroth column onward, as follows:
\begin{align}
\mathcal{P}(x^{\pm1}_1,\dots,x^{\pm1}_n;y_1,\dots,y_n;t)
=
\begin{tikzpicture}[scale=0.6,baseline=(mid.base)]
\node (mid) at (1,4) {};
\foreach\x in {1,...,4}{
\draw[fline] (-3.5+0.5*\x,4+0.5*\x) -- (0,0.5+\x);
}
\foreach\x in {2,3}{
\draw[fline] (0,3.5+\x) -- (-1.5-0.5*\x,2+0.5*\x);
\gbull{-1.5-0.5*\x}{2+0.5*\x}{0.09}
}
\foreach\x in {0,1}{
\draw[smooth] (-3+\x,4.5+\x) arc (225:45:0.353553);
\node at (-3+\x,5+\x) {\fs$\bullet$};
}
\node at (0.5,4) {$\times$};
\foreach\x in {1,...,5}{
\foreach\y in {1,...,4}{
\dark{\x}{\y};
}
\foreach\y in {5,...,6}{
\light{\x}{\y};
}
}
\node at (1.5,0.5) {$\ss 0$};
\node at (2.5,0.5) {$\ss 0$};
\node at (3.5,0.5) {$\ss 0$};
\node at (4.5,0.5) {$\ss 0$};
\node at (5.5,0.5) {$\ss \cdots$};
\node at (1.5,7.5) {$\ss 0$};
\node at (2.5,7.5) {$\ss 0$};
\node at (3.5,7.5) {$\ss 0$};
\node at (4.5,7.5) {$\ss 0$};
\node at (5.5,7.5) {$\ss \cdots$};
\foreach\x in {1,...,4}{
\draw[gline] (1,\x+0.5) -- (6,\x+0.5);
}
\foreach\x in {1,2}{
\ebull{0}{4.5+\x}{0.09}
\ebull{1}{4.5+\x}{0.09}
\ebull{6}{4.5+\x}{0.09}
}
\foreach\x in {1,...,4}{
\gbull{0}{0.5+\x}{0.09}
\gbull{1}{0.5+\x}{0.09}
\gbull{6}{0.5+\x}{0.09}
}
\node at (-2,6.5) {$\ss \red x_1$};
\node at (-2.5,6) {$\ss \red \b{x}_1$};
\node at (-2.35,5.5) {$\ss \red \ddiagdot$};
\node at (-3,5.5) {$\ss \red x_n$};
\node at (-3.5,5) {$\ss \red \b{x}_n$};
\node at (-3.5,3) {$\ss \red \b{y}_n$};
\node at (-2.9,3) {$\ss \red \udiagdot$};
\node at (-3,2.5) {$\ss \red \b{y}_1$};
\end{tikzpicture}
\end{align}
We evaluate the remaining, non-trivial part of this partition function in the next lemma. 

\subsection{Determinant evaluation of $Z_{\rm U}$}

\begin{lem}
\begin{multline}
\label{lem-uasm}
\begin{tikzpicture}[scale=0.8,baseline=1.5cm,rotate=45]
\foreach\x in {1,...,4}{
\draw[fline] (-3.5+0.5*\x,4+0.5*\x) -- (-2+0.5*\x,2.5+0.5*\x);
\gbull{-2+0.5*\x}{2.5+0.5*\x}{0.09};
}
\foreach\x in {2,3}{
\draw[fline] (1-0.5*\x,4.5+0.5*\x) -- (-1.5-0.5*\x,2+0.5*\x);
\gbull{-1.5-0.5*\x}{2+0.5*\x}{0.09};
\ebull{1-0.5*\x}{4.5+0.5*\x}{0.09};
}
\foreach\x in {0,1}{
\draw[smooth] (-3+\x,4.5+\x) arc (225:45:0.353553);
\node at (-3+\x,5+\x) {\fs$\bullet$};
}
\node at (-2,6.5) {$\ss \red x_1$};
\node at (-2.5,6) {$\ss \red \b{x}_1$};
\node at (-2.35,5.5) {$\ss \red \vdots$};
\node at (-3,5.5) {$\ss \red x_n$};
\node at (-3.5,5) {$\ss \red \b{x}_n$};
\node at (-3.5,3) {$\ss \red \b{y}_n$};
\node at (-2.8,3.15) {$\ss \red \cdots$};
\node at (-3,2.5) {$\ss \red \b{y}_1$};
\end{tikzpicture}
\quad
=
\quad
\frac{\displaystyle{
\prod_{i=1}^{n}
(x_i - t \b{x}_i)
y_i
\prod_{i,j=1}^{n} 
(1-t x_i y_j) (1-t \b{x}_i y_j) 
}}
{\displaystyle{
\prod_{1\leq i<j \leq n} (x_i-x_j) (y_i-y_j)(1 - \b{x}_i \b{x}_j)(1 - t y_i y_j)
}}
\\[-0.7cm]
\times
\det_{1\leq i,j \leq n}
\left[
\frac{(1-t)}{(1- x_i y_j)(1-\b{x}_i y_j)(1-t x_i y_j)(1-t \b{x}_i y_j)}
\right].
\end{multline}
\end{lem}

\begin{proof}
Let $Z_{\rm U}(n)$ be the left hand side of \eqref{lem-uasm} multiplied by 
$\prod_{i,j=1}^{n} (1-x_i y_j) (1-\b{x}_i y_j) \prod_{k=1}^{n} \b{y}_k (x_k - t\b{x}_k)^{-1}$. It satisfies the following properties:
\begin{enumerate}[label=\bf\arabic*.]
\item $Z_{\rm U}(n)$ is symmetric in $\{x_1,\dots,x_n\}$ and their inversions, and in 
$\{y_1,\dots,y_n\}$.

\item $Z_{\rm U}(n)$ is a Laurent polynomial in $x_n$ of top/bottom degree $n-1$.

\item Setting $x_n = \b{y}_n$, we have
\begin{align*}
Z_{\rm U}(n) \Big|_{x_n = \b{y}_n}
=
(1-t) \prod_{j=1}^{n-1} (1-t y_j \b{y}_n) (1-y_j y_n)
\prod_{i=1}^{n-1} (1-t x_i y_n) (1-t \b{x}_i y_n)
Z_{\rm U}(n-1),
\end{align*}
where $Z_{\rm U}(n-1)$ is the re-normalized partition function of one size smaller.

\item The smallest partition function $Z_{\rm U}(1)$ is given explicitly by
\begin{align*}
Z_{\rm U}(1) = 1-t.
\end{align*}
\end{enumerate}
As usual, these properties determine $Z_{\rm U}(n)$ uniquely, and it is a simple check to show that the right hand side of \eqref{lem-uasm} obeys them.

\end{proof}
Cancelling the common factor between the right hand sides of \eqref{something} and \eqref{lem-uasm}, we recover \eqref{ref-reflect-cauch}.
\end{proof}


\section{Doubly reflecting Cauchy identity}
\label{sec:doub-refl}

In this section we present a new Cauchy-type identity, equation \eqref{thm-uuasm}, which expresses an infinite sum of a product of two $BC_n$ Hall--Littlewood polynomials as a finite partition function in the six-vertex model. This differs from existing Cauchy identities for Koornwinder polynomials \cite{mim} (and their special cases), which are boxed (the summation is over finitely many terms). Of course infinite sums over $BC_n$ Hall--Littlewood polynomials are generally not well-defined, in view of their inhomogeneity, so we introduce an additional parameter $z$ which allows us to regulate the degree of each term in the sum.

We have not been able to evaluate the right hand side of \eqref{thm-uuasm} as a simple expression involving determinants, and some tests of small examples using symbolic algebra packages indicate that no such expression exists. This is somewhat disappointing, considering that for $z=t^{-1/2}$ it is equal to the partition function $Z_{\rm UU}$ in \cite{kup2}, and given by a product of two determinants. The difficulty stems from the need to solve the recursion relation \eqref{rec-rel} with $z$ generic, which is probably very difficult considering that \eqref{rec-rel} simplifies greatly when $z=t^{-1/2}$ (the factors on its right hand side coalesce to form perfect squares).

For simplicity, since they do not qualitatively change our result\footnote{Including the parameters $\gamma, \delta$ is achieved by replacing the boundaries in \eqref{thm-uuasm} with boundaries of the form \eqref{extended-boundary1}, \eqref{extended-boundary2}. We avoid such complications, since we are already unable to evaluate the partition function \eqref{thm-uuasm} in a simple form.}, in this section we take all Hall--Littlewood boundary parameters $\gamma = \delta = 0$.

\subsection{Lattice formulation of doubly reflecting Cauchy identity}

\begin{thm}
$BC_n$ Hall--Littlewood polynomials satisfy the following Cauchy identity:
\begin{align}
\nonumber
&\quad\quad\quad\quad
\prod_{j=1}^{n} (x_j-t\b{x}_j)(\b{y}_j-t y_j)
\times
\\[-0.8cm]
\label{thm-uuasm}
&\quad\quad\quad\quad
\sum_{\lambda}
z^{n+|\lambda|} 
\prod_{i=1}^{m_0(\lambda)}
(1-t^i)
b_{\lambda}(t)
K_{\lambda}(x_1^{\pm1},\dots,x_n^{\pm1};t) 
K_{\lambda}(y_1^{\pm1},\dots,y_n^{\pm1};t) 
=
\begin{tikzpicture}[scale=0.8,baseline=(mid.base),rotate=45]
\node (mid) at (1,1) {};
\foreach\x in {1,...,4}{
\draw[fline] (-4.5+0.5*\x,4+0.5*\x) -- (-2+0.5*\x,1.5+0.5*\x);
\gbull{-2+0.5*\x}{1.5+0.5*\x}{0.09}
}
\begin{scope}[xscale=1,yscale=-1,yshift=-8cm]
\foreach\x in {1,...,4}{
\draw[fline] (-4.5+0.5*\x,4+0.5*\x) -- (-2+0.5*\x,1.5+0.5*\x);
\ebull{-2+0.5*\x}{1.5+0.5*\x}{0.09}
}
\end{scope}
\foreach\x in {0,1}{
\draw[smooth] (-4+\x,4.5+\x) arc (225:45:0.353553);
\node at (-4+\x,5+\x) {\fs$\bullet$};
}
\begin{scope}[xscale=1,yscale=-1,yshift=-8cm]
\foreach\x in {0,1}{
\draw[smooth] (-4+\x,4.5+\x) arc (225:45:0.353553);
\node at (-4+\x,5+\x) {\fs$\bullet$};
}
\end{scope}
\node at (-3,6.5) {$\ss \red z x_1$};
\node at (-3.5,6) {$\ss \red z \b{x}_1$};
\node at (-3.4,5.6) {\tiny\red $\vdots$};
\node at (-4,5.5) {$\ss \red z x_n$};
\node at (-4.5,5) {$\ss \red z \b{x}_n$};
\node at (-4.5,3) {$\ss \red \b{y}_n$};
\node at (-4,2.5) {$\ss \red y_n$};
\node at (-3.5,2.5) {\tiny\red $\cdots$};
\node at (-3.5,2) {$\ss \red \b{y}_1$};
\node at (-3,1.5) {$\ss \red y_1$};
\end{tikzpicture}
\end{align}
where the sum is over all partitions $\lambda$ and $z$ is an arbitrary parameter.
\end{thm}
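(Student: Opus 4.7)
My plan is to follow the four-step ``bosonic partition function'' method used throughout the paper, adapted to the doubly-reflecting setting. I would begin by constructing a semi-infinite bosonic partition function $\mathcal{P}(x^{\pm},y^{\pm};z;t)$ combining $n$ upper ``$z$-modified'' $\mathbb{T}_{--}$-type double rows, built from $L$-matrices at the spectral-parameter pairs $(zx_i,z\bar{x}_i)$ and closed on the left by $\bra{K}$ covectors, with $n$ lower standard $\mathbb{T}^{*}_{++}$-type double rows at $(y_j,\bar{y}_j)$, and with vacuum boundary conditions on the top and bottom. Inserting a complete set of partition states $\sum_{\lambda}\ket{\lambda}\bra{\lambda}$ at the horizontal midline factorizes $\mathcal{P}=\sum_\lambda A_\lambda B_\lambda$. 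The lower half gives $B_\lambda=\prod_j(\bar{y}_j-ty_j)L_\lambda(y^{\pm})$ via \eqref{L}; combining with the relation $L_\lambda=\prod_{i=1}^{m_0(\lambda)}(1-t^i)b_\lambda(t)K_\lambda$ (established at $\gamma=\delta=0$ in Section~\ref{sec:bc-lattice-paths}) this becomes the expected ``Cauchy summand'' in $y$. The upper half evaluates, I claim, to $A_\lambda=z^{n+|\lambda|}\prod_j(x_j-t\bar{x}_j)K_\lambda(x^{\pm})$, where $z^n$ is extracted from the natural normalization $\prod_j(zx_j-tz\bar{x}_j)=z^n\prod_j(x_j-t\bar{x}_j)$ coming from the $z$-rescaled \eqref{K}, and $z^{|\lambda|}$ comes from the homogeneity $\langle\mu|T(zx)|\nu\rangle=z^{|\mu|-|\nu|}\langle\mu|T(x)|\nu\rangle$ of each single-row transfer-matrix matrix element, telescoped across both rows of the $n$ upper double rows. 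Multiplying these and summing over $\lambda$ recovers the LHS of \eqref{thm-uuasm}.

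For the second evaluation, I would multiply $\mathcal{P}$ on its right edge by a finite tilted-square configuration of six-vertex $R$-matrices, designed so that its attachment is weight-preserving by the usual freezing arguments against the vacuum at infinity. Then, using the intertwining relations \eqref{rll}, \eqref{rll*} and \eqref{rl*l*}, I would systematically migrate this $R$-matrix block leftward through all $4n$ bosonic rows. After the migration the bosonic part lies entirely to the right of the lattice and freezes to Boltzmann weight $1$, while the migrated six-vertex block on the left absorbs the $\bra{K}$ covectors of both halves. Crucially, the upper $L$-type rows and lower $L^{*}$-type rows intertwine the $R$-matrices in opposite diagonal directions (as one sees by comparing the tilted arrangement \eqref{pf-rhs} with its natural $L^{*}$ analogue), so the two sets of bosonic U-turns end up on two opposite diagonals of the resulting tilted square, producing exactly the doubly reflecting six-vertex partition function $\mathcal{Z}_{\rm UU}$ shown on the RHS of \eqref{thm-uuasm}. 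Equating the bosonic and six-vertex evaluations of $\mathcal{P}$ then yields the theorem.

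The main obstacle is verifying the upper-half formula $A_\lambda=z^{n+|\lambda|}\prod_j(x_j-t\bar{x}_j)K_\lambda(x^{\pm})$. Since the modified double rows use spectral-parameter pairs $(zx_i,z\bar{x}_i)$ that are not mutually inverse, they are not literally instances of $\mathbb{T}_{--}(x_i)$, and the identification with $K_\lambda(x^{\pm})$ rests on a direct tile-by-tile counting of the $z$-powers in the Boltzmann weights \eqref{tiles}: precisely those entries of $L$ carrying a factor of $x$ (respectively of $\bar{x}$) pick up an extra factor of $z$ under the rescaling, and the total count of such tiles across the $2n$ upper rows in any admissible configuration of $\bra{\hat{0},\lambda}\,\cdots\,\ket{\hat{0},0}$ must equal $n+|\lambda|$. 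This follows by repeating the single-row homogeneity argument underlying the branching formulae \eqref{P-branch}, \eqref{Q-branch} and then telescoping, but doing so carefully---and in particular keeping track of the correct ordering of the two auxiliary row operators acting on the shared bosonic space---is essential. A subsidiary concern is to choose the finite $R$-matrix block in Step 3 so that the intertwined result is exactly $\mathcal{Z}_{\rm UU}$ as depicted, and to verify that the rescaled spectral parameters $(zx_i,z\bar{x}_i)$ remain compatible with \eqref{rll}--\eqref{rl*l*} and the reflection equation \eqref{reflection}; both are mechanical checks, since the intertwining identities involve only ratios and products of spectral parameters and are unaffected by the overall $z$-rescaling.
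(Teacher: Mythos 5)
Your proposal follows essentially the same route as the paper: the same bosonic partition function with $z$-rescaled $\mathbb{T}_{--}$-type double rows above and $\mathbb{T}^{*}_{++}$-type double rows below, the same midline factorization combined with $L_{\lambda}=\prod_{i=1}^{m_0(\lambda)}(1-t^i)b_{\lambda}(t)K_{\lambda}$, the same attachment of a frozen $R$-matrix block at the right edge, and the same leftward migration via the intertwining relations to expose $\mathcal{Z}_{\rm UU}$. The $z$-power bookkeeping you flag as the main obstacle is indeed the one point the paper treats tersely, and your configuration-by-configuration homogeneity argument (total spectral-parameter degree $n+|\lambda|$ across the upper half) is the correct way to settle it.
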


\begin{proof}
The proof is based on methods now familiar from previous sections, so we shorten our explanations. We consider the following partition function:
\begin{align}
\mathcal{P}(x_1^{\pm1},\dots,x_n^{\pm1};y_1^{\pm1},\dots,y_n^{\pm1};t;z)
=
\begin{tikzpicture}[scale=0.65,baseline=(mid.base)]
\node (mid) at (1,4) {};
\foreach\x in {1,...,5}{
\foreach\y in {0,...,3}{
\light{\x}{\y};
}
\foreach\y in {4,...,7}{
\dark{\x}{\y};
}
}
\foreach\x in {-1,0,1,2}{
\draw[smooth] (1,3.5+2*\x) arc (90:270:0.5);
\node at (0.5,3+2*\x) {\fs$\bullet$};
}
\node at (1.5,-0.5) {$\ss 0$};
\node at (2.5,-0.5) {$\ss 0$};
\node at (3.5,-0.5) {$\ss 0$};
\node at (4.5,-0.5) {$\ss 0$};
\node at (5.5,-0.5) {$\ss \cdots$};
\node at (1.5,8.5) {$\ss 0$};
\node at (2.5,8.5) {$\ss 0$};
\node at (3.5,8.5) {$\ss 0$};
\node at (4.5,8.5) {$\ss 0$};
\node at (5.5,8.5) {$\ss \cdots$};
\foreach\x in {1,...,4}{
\draw[fline] (8.5+0.5*\x,1+0.5*\x) -- (6,3.5+\x);
\draw[fline] (6,-0.5+\x) -- (10.5-0.5*\x,4+0.5*\x);
\gbull{8.5+0.5*\x}{0.5*\x+1}{0.09};
\ebull{10.5-0.5*\x}{4+0.5*\x}{0.09};
}
\node at (0.2,7.5) {$\ss \red z x_1$};
\node at (0.2,6.5) {$\ss \red z \b{x}_1$};
\node at (0.8,6.15) {$\ss \red \vdots$};
\node at (0.2,5.5) {$\ss \red z x_n$};
\node at (0.2,4.5) {$\ss \red z \b{x}_n$};
\node at (0.3,3.5) {$\ss \red \b{y}_n$};
\node at (0.3,2.5) {$\ss \red y_n$};
\node at (0.8,2.15) {$\ss \red \vdots$};
\node at (0.3,1.5) {$\ss \red \b{y}_1$};
\node at (0.3,0.5) {$\ss \red y_1$};
\end{tikzpicture}
\end{align}
Treating $z$ as complex parameter with small modulus, $|z| < 1$, the top (darkly shaded) half of the lattice will vanish unless all of the right edges are occupied by particles. This causes the attached block of $R$ matrices to freeze, and we obtain

\begin{align}
\mathcal{P}(x_1^{\pm1},\dots,x_n^{\pm1};y_1^{\pm1},\dots,y_n^{\pm1};t;z)
=
\begin{tikzpicture}[scale=0.65,baseline=(mid.base)]
\node (mid) at (1,4) {};
\foreach\x in {1,...,5}{
\foreach\y in {0,...,3}{
\light{\x}{\y};
}
\foreach\y in {4,...,7}{
\dark{\x}{\y};
}
}
\foreach\x in {-1,0,1,2}{
\draw[smooth] (1,3.5+2*\x) arc (90:270:0.5);
\node at (0.5,3+2*\x) {\fs$\bullet$};
}
\node at (1.5,-0.5) {$\ss 0$};
\node at (2.5,-0.5) {$\ss 0$};
\node at (3.5,-0.5) {$\ss 0$};
\node at (4.5,-0.5) {$\ss 0$};
\node at (5.5,-0.5) {$\ss \cdots$};
\node at (1.5,8.5) {$\ss 0$};
\node at (2.5,8.5) {$\ss 0$};
\node at (3.5,8.5) {$\ss 0$};
\node at (4.5,8.5) {$\ss 0$};
\node at (5.5,8.5) {$\ss \cdots$};
\node at (6.5,4) {$\times$};
\foreach\x in {1,...,4}{
\draw[fline] (9.5+0.5*\x,1+0.5*\x) -- (7,3.5+\x);
\draw[fline] (7,-0.5+\x) -- (11.5-0.5*\x,4+0.5*\x);
\gbull{9.5+0.5*\x}{0.5*\x+1}{0.09};
\ebull{11.5-0.5*\x}{4+0.5*\x}{0.09};
}
\foreach\x in {1,...,4}{
\gbull{6}{3.5+\x}{0.09}
\gbull{7}{3.5+\x}{0.09}
\ebull{6}{-0.5+\x}{0.09}
\ebull{7}{-0.5+\x}{0.09}
}
\node at (0.2,7.5) {$\ss \red z x_1$};
\node at (0.2,6.5) {$\ss \red z \b{x}_1$};
\node at (0.8,6.15) {$\ss \red \vdots$};
\node at (0.2,5.5) {$\ss \red z x_n$};
\node at (0.2,4.5) {$\ss \red z \b{x}_n$};
\node at (0.3,3.5) {$\ss \red \b{y}_n$};
\node at (0.3,2.5) {$\ss \red y_n$};
\node at (0.8,2.15) {$\ss \red \vdots$};
\node at (0.3,1.5) {$\ss \red \b{y}_1$};
\node at (0.3,0.5) {$\ss \red y_1$};
\end{tikzpicture}
\end{align}
Employing the results from section \ref{sec:bc-lattice-paths}, we then find that
\begin{align*}
\mathcal{P}(x_1^{\pm1},\dots,x_n^{\pm1};y_1^{\pm1},\dots,y_n^{\pm1};t;z)
=
\prod_{k=1}^{n}
(x_k - t \b{x}_k)(\b{y}_k - t y_k)
\sum_{\lambda}
z^{n+|\lambda|} 
K_{\lambda}(x^{\pm1}_1,\dots,x^{\pm1}_n;t) 
L_{\lambda}(y^{\pm1}_1,\dots,y^{\pm1}_n;t).
\end{align*}

It is now very easy to obtain the right hand side of \eqref{thm-uuasm}. By repeated application of the intertwining equation \eqref{rll*} we obtain
\begin{multline}
\mathcal{P}(x_1^{\pm1},\dots,x_n^{\pm1};y_1^{\pm1},\dots,y_n^{\pm1};t;z)
=
\\
\begin{tikzpicture}[scale=0.65,baseline=(mid.base)]
\node (mid) at (1,4) {};
\foreach\x in {1,...,4}{
\draw[fline] (-3.5+0.5*\x,4+0.5*\x) -- (1,-0.5+\x);
}
\begin{scope}[xscale=1,yscale=-1,yshift=-8cm]
\foreach\x in {1,...,4}{
\draw[fline] (-3.5+0.5*\x,4+0.5*\x) -- (1,-0.5+\x);
}
\end{scope}
\foreach\x in {0,1}{
\draw[smooth] (-3+\x,4.5+\x) arc (225:45:0.353553);
\node at (-3+\x,5+\x) {\fs$\bullet$};
}
\begin{scope}[xscale=1,yscale=-1,yshift=-8cm]
\foreach\x in {0,1}{
\draw[smooth] (-3+\x,4.5+\x) arc (225:45:0.353553);
\node at (-3+\x,5+\x) {\fs$\bullet$};
}
\end{scope}
\foreach\x in {1,...,5}{
\foreach\y in {0,...,3}{
\dark{\x}{\y};
}
\foreach\y in {4,...,7}{
\light{\x}{\y};
}
}
\node at (1.5,-0.5) {$\ss 0$};
\node at (2.5,-0.5) {$\ss 0$};
\node at (3.5,-0.5) {$\ss 0$};
\node at (4.5,-0.5) {$\ss 0$};
\node at (5.5,-0.5) {$\ss \cdots$};
\node at (1.5,8.5) {$\ss 0$};
\node at (2.5,8.5) {$\ss 0$};
\node at (3.5,8.5) {$\ss 0$};
\node at (4.5,8.5) {$\ss 0$};
\node at (5.5,8.5) {$\ss \cdots$};
\foreach\x in {1,...,4}{
\ebull{6}{3.5+\x}{0.09}
\gbull{6}{-0.5+\x}{0.09}
}
\node at (-2,6.5) {$\ss \red z x_1$};
\node at (-2.5,6) {$\ss \red z \b{x}_1$};
\node at (-2.35,5.5) {$\ss \red \ddiagdot$};
\node at (-3,5.5) {$\ss \red z x_n$};
\node at (-3.5,5) {$\ss \red z \b{x}_n$};
\node at (-3.5,3) {$\ss \red \b{y}_n$};
\node at (-3,2.5) {$\ss \red y_n$};
\node at (-2.4,2.6) {$\ss \red \udiagdot$};
\node at (-2.5,2) {$\ss \red \b{y}_1$};
\node at (-2,1.5) {$\ss \red y_1$};
\end{tikzpicture}
=
\begin{tikzpicture}[scale=0.65,baseline=(mid.base)]
\node (mid) at (1,4) {};
\foreach\x in {1,...,4}{
\draw[fline] (-4.5+0.5*\x,4+0.5*\x) -- (0,-0.5+\x);
}
\begin{scope}[xscale=1,yscale=-1,yshift=-8cm]
\foreach\x in {1,...,4}{
\draw[fline] (-4.5+0.5*\x,4+0.5*\x) -- (0,-0.5+\x);
}
\end{scope}
\foreach\x in {0,1}{
\draw[smooth] (-4+\x,4.5+\x) arc (225:45:0.353553);
\node at (-4+\x,5+\x) {\fs$\bullet$};
}
\begin{scope}[xscale=1,yscale=-1,yshift=-8cm]
\foreach\x in {0,1}{
\draw[smooth] (-4+\x,4.5+\x) arc (225:45:0.353553);
\node at (-4+\x,5+\x) {\fs$\bullet$};
}
\end{scope}
\node at (0.5,4) {$\times$};
\foreach\x in {1,...,5}{
\foreach\y in {0,...,3}{
\dark{\x}{\y};
}
\foreach\y in {4,...,7}{
\light{\x}{\y};
}
}
\node at (1.5,-0.5) {$\ss 0$};
\node at (2.5,-0.5) {$\ss 0$};
\node at (3.5,-0.5) {$\ss 0$};
\node at (4.5,-0.5) {$\ss 0$};
\node at (5.5,-0.5) {$\ss \cdots$};
\node at (1.5,8.5) {$\ss 0$};
\node at (2.5,8.5) {$\ss 0$};
\node at (3.5,8.5) {$\ss 0$};
\node at (4.5,8.5) {$\ss 0$};
\node at (5.5,8.5) {$\ss \cdots$};
\foreach\x in {0,...,3}{
\draw[gline] (1,\x+0.5) -- (6,\x+0.5);
}
\foreach\x in {1,...,4}{
\ebull{6}{3.5+\x}{0.09};
\ebull{1}{3.5+\x}{0.09};
\gbull{6}{-0.5+\x}{0.09};
\gbull{1}{-0.5+\x}{0.09};
}
\foreach\x in {1,...,4}{
\ebull{0}{3.5+\x}{0.09}
\gbull{0}{-0.5+\x}{0.09}
}
\node at (-3,6.5) {$\ss \red z x_1$};
\node at (-3.5,6) {$\ss \red z \b{x}_1$};
\node at (-3.35,5.5) {$\ss \red \ddiagdot$};
\node at (-4,5.5) {$\ss \red z x_n$};
\node at (-4.5,5) {$\ss \red z \b{x}_n$};
\node at (-4.5,3) {$\ss \red \b{y}_n$};
\node at (-4,2.5) {$\ss \red y_n$};
\node at (-3.4,2.6) {$\ss \red \udiagdot$};
\node at (-3.5,2) {$\ss \red \b{y}_1$};
\node at (-3,1.5) {$\ss \red y_1$};
\end{tikzpicture}
\end{multline}
with the second equality coming from particle-conservation. Noting that frozen tiles all have Boltzmann weight 1, we obtain precisely the partition function on the right hand side of \eqref{thm-uuasm}.

\end{proof}

\subsection{Analysis of $Z_{\rm UU}$}

Equation \eqref{thm-uuasm} is distinct from the earlier identities in this paper, in that we do not know how to evaluate the partition function on the right hand side in a simple closed form. This is due to the extra parameter $z$, which we were forced to include in the left hand side of \eqref{thm-uuasm} to make it a well-defined expression, but which prevents the easy evaluation of the right hand side. 

Nevertheless, this partition function does satisfy a list of properties which determine it uniquely. Let 
$Z_{\rm UU}(n)$ denote the right hand side of \eqref{thm-uuasm}, multiplied by
$$z^{-n}\times \prod_{i,j=1}^{n} (1- z x_i y_j) (1- z \b{x}_i y_j) (1 - z x_i \b{y}_j) (1 - z \b{x}_i \b{y}_j)
\prod_{k=1}^{n} (x_k -t \b{x}_k)^{-1} (\b{y}_k - t y_k)^{-1}.$$ The re-normalized partition function satisfies the following properties:
\begin{enumerate}[label=\bf\arabic*.]
\item $Z_{\rm UU}(n)$ is symmetric in $\{x_1,\dots,x_n\}$ and $\{y_1,\dots,y_n\}$ separately, and in their inversions.
\item $Z_{\rm UU}(n)$ is a Laurent polynomial in $x_n$ with top/bottom degree of at most $2n-1$.
\item Setting $x_n = \b{z} \b{y}_n$, we have the recursion relation
\begin{multline}
\label{rec-rel}
Z_{\rm UU}(n)
\Big|_{x_n = \b{z} \b{y}_n}
=
(1-t)(1-z^2)
\prod_{i=1}^{n-1}
(1-t z x_i y_n)(1-t z \b{x}_i y_n)(1-z x_i \b{y}_n)(1-z \b{x}_i \b{y}_n)
\\
\times
(1-t y_i \b{y}_n)(1-t \b{y}_i \b{y}_n)(1-z^2 y_i y_n)(1-z^2 \b{y}_i y_n)
Z_{\rm UU}(n-1).
\end{multline}

\item The smallest partition function $Z_{\rm UU}(1)$ is given explicitly by
\begin{align*}
Z_{\rm UU}(1)
=
(1-t)(1-z^2).
\end{align*}

\end{enumerate}

\begin{rmk}{\rm 
Although we have not solved the above conditions for general $z$, there are some special values of $z$ for which we are able to obtain an explicit solution. First, and trivially, when $z = \pm1$ we find that 
$Z_{\rm UU}(n)$ vanishes for all $n \geq 1$. This is clear from the fact that $Z_{\rm UU}(1) =0$ at these values of $z$, or by direct application of Lemma \ref{lem:vanishing}.

When $z=t^{-1/2}$, we are able to solve the above conditions explicitly:
\begin{align*}
Z_{\rm UU}(n)
=
&
\frac{\prod_{i,j=1}^{n} (1-t^{\pm \frac{1}{2}} x_i y_j) (1-t^{\pm \frac{1}{2}} x_i \b{y}_j)
(1-t^{\pm \frac{1}{2}} \b{x}_i y_j) (1-t^{\pm \frac{1}{2}} \b{x}_i \b{y}_j)}
{\prod_{1 \leq i<j \leq n} (x_i-x_j)^2 (y_i-y_j)^2 (1-\b{x}_i \b{x}_j)^2 (1-\b{y}_i \b{y}_j)^2}
\\
&\times
\det_{1\leq i,j \leq n}
\left[
\frac{1-t}{(1-t^{\frac{1}{2}} x_i y_j)(1-t^{\frac{1}{2}} x_i \b{y}_j)
(1-t^{\frac{1}{2}} \b{x}_i y_j)(1-t^{\frac{1}{2}} \b{x}_i \b{y}_j)}
\right]
\\
&\times
\det_{1\leq i,j \leq n}
\left[
\frac{1-t^{-1}}{(1-t^{-\frac{1}{2}} x_i y_j)(1-t^{-\frac{1}{2}} x_i \b{y}_j)
(1-t^{-\frac{1}{2}} \b{x}_i y_j)(1-t^{-\frac{1}{2}} \b{x}_i \b{y}_j)}
\right].
\end{align*}
This evaluation of the partition function was obtained by Kuperberg in \cite{kup2}, although here we use slightly different Boltzmann weights and less general boundaries compared with those of \cite{kup2}.

Unfortunately, all of the values $z=\pm1$ and $z=t^{-1/2}$ lead to the divergence of the left hand side of \eqref{thm-uuasm}, meaning that this identity is no longer valid in these special cases.
}
\end{rmk}

\begin{rmk}{\rm
Another special case of \eqref{thm-uuasm} is of interest, namely $t=0$. In this case the $BC_n$ Hall--Littlewood polynomials degenerate to symplectic characters, and we obtain
\begin{multline*}
\sum_{\lambda}
z^{|\lambda|} 
sp_{\lambda}(x^{\pm1}_1,\dots,x^{\pm1}_n) 
sp_{\lambda}(y^{\pm1}_1,\dots,y^{\pm1}_n) 
= 
\frac{z^{-n(n-1)/2}}{\prod_{1 \leq i<j \leq n} (x_i-x_j) (y_i-y_j) (1-\b{x}_i \b{x}_j) (1-\b{y}_i \b{y}_j)}
\\
\times
\det_{1\leq i,j \leq n}
\left[
\frac{(1-z)(1+z)}{(1-z x_i y_j)(1-z x_i \b{y}_j)(1-z \b{x}_i y_j)(1-z \b{x}_i \b{y}_j)}
\right].
\end{multline*}

}
\end{rmk}

\section{Reflecting Littlewood identity}
\label{sec:refl-little}

In this section we present a second new identity, equation \eqref{thm-uosasm}, which is of Littlewood-type. It relates an infinite sum over $BC_{2n}$ Hall--Littlewood polynomials with a partition function of the six-vertex model on a finite domain. As in the previous section, we introduce an extra parameter $z$ into the sum on the left hand side of \eqref{thm-uosasm}, which is necessary to avoid divergence issues. 

The partition function on the right hand side of \eqref{thm-uosasm} is closely related to $Z_{\rm UO}$ in \cite{kup2}, which is a multi-parameter generating series of off-diagonally symmetric ASMs with a U-turn boundary. Indeed, for 
$z = t^{-1/2}$ we obtain the same Boltzmann weights as in \cite{kup2}, and this partition function can be evaluated as the product of two Pfaffians. For generic values of $z$, however, we have not obtained a simple closed form expression. Our results in this section are thus a direct parallel of those obtained in section \ref{sec:doub-refl}.

Throughout this section, we take the Hall--Littlewood boundary parameters to be $\gamma = \delta = 0$.

\subsection{Lattice formulation of reflecting Littlewood identity}

\begin{thm}
$BC_{2n}$ Hall--Littlewood polynomials satisfy the following Littlewood identity:
\begin{align}
\nonumber
&\quad\quad\quad\quad
\prod_{j=1}^{2n} (x_j-t\b{x}_j)
\times
\\[-0.9cm]
\label{thm-uosasm}
&\quad\quad\quad\quad
\sum_{
\lambda: m_i(\lambda) \in 2 \mathbb{N}
}
\
z^{n+|\lambda|/2} 
\prod_{i=0}^{\infty}
\prod_{j=1}^{m_i(\lambda)/2}
(1-t^{2j-1})
K_{\lambda}(x^{\pm1}_1,\dots,x^{\pm1}_{2n};t)
= 
\begin{tikzpicture}[scale=0.7,baseline=(mid.base),rotate=45]
\node (mid) at (1,-2.5) {};
\foreach\x in {1,...,8}{
\draw[fline,cross=1] (1.5-0.5*\x,1+0.5*\x) -- (1.5-\x,1);
\ebull{1.5-0.5*\x}{1+0.5*\x}{0.09};
\draw[fline] (-3-0.5*\x,5.5-0.5*\x) -- (-2.5-0.5*\x,5-0.5*\x);
}
\foreach\x in {1,...,7}{
\draw[fline] (-2.5-0.5*\x,5-0.5*\x) -- (1.5-\x,1);
}
\foreach\x in {0,...,3}{
\draw[smooth] (-7+\x,1.5+\x) arc (225:45:0.353553);
\node at (-7+\x,2+\x) {\fs$\bullet$};
}
\node at (-4.2,5.8) {$\ss \red \sqrt{z} x_1$};
\node at (-4.7,5.3) {$\ss \red \sqrt{z} \b{x}_1$};
\node at (-6,4.2) {$\ss \red \vdots $};
\node at (-7.2,2.8) {$\ss \red \sqrt{z} x_{2n}$};
\node at (-7.7,2.3) {$\ss \red \sqrt{z} \b{x}_{2n}$};
\node at (-6.6,0.5) {$\ss \red \sqrt{\b{z}} x_{2n}$};
\node at (-5.6,0.5) {$\ss \red \sqrt{\b{z}} \b{x}_{2n}$};
\node at (-3.2,0.5) {$\ss \red \ddiagdot$};
\node at (-0.6,0.5) {$\ss \red \sqrt{\b{z}} x_1$};
\node at (0.4,0.5) {$\ss \red \sqrt{\b{z}} \b{x}_1$};
\end{tikzpicture}
\end{align}
where the sum is taken over partitions $\lambda$ with even multiplicities $m_i(\lambda)$ for all $i \geq 0$, and $z$ is a free parameter.
\end{thm}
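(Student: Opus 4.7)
The plan is to mimic the proof of Theorem \ref{thm:refined-little}, but with double-row transfer matrices $\mathbb{T}_{--}(x;0,0)$ in place of the row-to-row $T_+(x)$, so that the summand involves hyperoctahedral $K_\lambda$ instead of $A_n$ Hall--Littlewood $P_\lambda$. Concretely, I would introduce the bosonic partition function
\[
\mathcal{P}(x^{\pm};t;z) = \bra{\rm e;0}\otimes\bra{\widehat{0}}\,\mathbb{T}_{--}(\sqrt{z}\,x_{2n};0,0)\cdots\mathbb{T}_{--}(\sqrt{z}\,x_1;0,0)\,\ket{\widehat{0}}\otimes\ket{0},
\]
so that each double-row transfer matrix carries rapidities $\sqrt{z}\,x_i$ and $\sqrt{\bar z}\,\bar x_i$ on its two rows (matching the top half of the picture on the right of \eqref{thm-uosasm}). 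By \eqref{K} and the definition \eqref{e-vec} of $\bra{\rm e;0}$ --- restricted to $m_0(\lambda)\geq 0$, since the zeroth column has occupation $0$ at both ends --- this evaluates to the left hand side of \eqref{thm-uosasm}, up to a simple overall prefactor that arises when one relates $K_\lambda(\sqrt{z}\,x^\pm;t)$ times $\prod_j(\sqrt{z}\,x_j-t\sqrt{\bar z}\,\bar x_j)$ back to the normalization $\prod_j(x_j-t\bar x_j)\cdot z^{n+|\lambda|/2}K_\lambda(x^\pm;t)$ used in the statement.

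To evaluate $\mathcal{P}$ the other way, I would attach a trivial product of $R$-matrix vertices on the right of the bosonic lattice (exactly as in \eqref{pf-little-lhs}); these freeze to total weight one by the vanishing property of $T_-$. I would then apply the intertwining equations \eqref{rll}, \eqref{rll*} and \eqref{rl*l*} iteratively to slide this auxiliary lattice to the opposite side. At that point the key folding lemma \eqref{lem-even} --- in its graphical form \eqref{lem-even5}--\eqref{lem-even6} --- is used to convert each $T_\pm$ row into a $T^{*}_\mp$ row across the $\bra{\rm e;0}$ boundary. Geometrically, this reflects the top half of the lattice downward, introduces the crossed lines with inverted rapidities $\sqrt{\bar z}\,x_i, \sqrt{\bar z}\,\bar x_i$ in the bottom half, and matches precisely the right hand side picture in \eqref{thm-uosasm}. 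By particle conservation (as in \eqref{rhs-factor}), the bosonic bulk then freezes completely from the zeroth column onwards, producing exactly the six-vertex partition function $\mathcal{Z}_{\rm UO}$.

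The main obstacle is the careful bookkeeping of the $\sqrt{z}$ rescaling. Since $K_\lambda(x^\pm;t;0,0)$ is not homogeneous --- the factors $1/(1-\bar x_i^2)$ and $1/(1-\bar x_i\bar x_j)$ in \eqref{BC-HL2} prevent a clean scaling law under $x_i\mapsto\sqrt{z}\,x_i$ --- the $\bar z$-dependent residues produced by the substitution must be tracked and shown to recombine with the normalization prefactor to give precisely the $z^{n+|\lambda|/2}$ weighting on the left hand side. A secondary obstacle, already flagged for $Z_{\rm UU}$ in Section \ref{sec:doub-refl}, is that the resulting $\mathcal{Z}_{\rm UO}$ does not admit a closed-form evaluation at generic $z$. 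I would characterize it instead by its uniqueness properties: symmetry in $\{x_1^{\pm},\dots,x_{2n}^{\pm}\}$, Laurent polynomial degree in $x_{2n}$, a recursion at $x_{2n}=\bar z\,\bar x_{2n-1}$ analogous to \eqref{rec-rel}, and the base case $n=1$ computed directly; the specialization $z=t^{-1/2}$ should reproduce Kuperberg's Pfaffian formula for off-diagonally symmetric U-turn ASMs \cite{kup2}.
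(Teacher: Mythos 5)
Your overall strategy --- fold the $\bra{\rm e;0}$ boundary row by row using \eqref{lem-even5}--\eqref{lem-even6}, slide a frozen auxiliary lattice across with the intertwining equations, and let the bosonic bulk freeze --- is exactly the paper's. But there is a genuine error in your starting object. You take $\mathcal{P}=\bra{\rm e;0}\otimes\bra{\widehat{0}}\,\mathbb{T}_{--}(\sqrt{z}\,x_{2n};0,0)\cdots\mathbb{T}_{--}(\sqrt{z}\,x_1;0,0)\ket{\widehat{0}}\otimes\ket{0}$, whose double rows carry rapidities $\sqrt{z}\,x_i$ and $\sqrt{\b{z}}\,\b{x}_i$, and you claim this matches the top half of the right-hand picture. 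It does not: in \eqref{thm-uosasm} the un-crossed lines carry $\sqrt{z}\,x_i$ and $\sqrt{z}\,\b{x}_i$, and the $\sqrt{\b{z}}$ rapidities appear only on the crossed lines, where they are generated automatically by the folding lemma (which inverts $\sqrt{z}\,\b{x}_i\mapsto\sqrt{\b{z}}\,x_i$, etc.). More importantly, your $\mathcal{P}$ does not equal the left-hand side of the theorem: by \eqref{K} it equals $\prod_j(\sqrt{z}\,x_j-t\sqrt{\b{z}}\,\b{x}_j)\sum_\lambda c_\lambda(0,t)\,K_\lambda\bigl((\sqrt{z}\,x)^{\pm1};t\bigr)$, and since $K_\lambda$ is inhomogeneous (the factors $1/(1-\b{x}_i^2)$ and $(1-t\b{x}_i\b{x}_j)/(1-\b{x}_i\b{x}_j)$ in \eqref{BC-HL2} acquire irremovable $\b{z}$-dependence under $x_i\mapsto\sqrt{z}\,x_i$), the quantity $K_\lambda\bigl((\sqrt{z}\,x)^{\pm1};t\bigr)$ is \emph{not} $z^{|\lambda|/2}K_\lambda(x^{\pm1};t)$ up to any configuration-independent prefactor. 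The ``bookkeeping of the $\sqrt{z}$ rescaling'' you flag as the main obstacle is not bookkeeping; it is fatal to this choice of $\mathcal{P}$.

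The fix, which is what the paper does, is to rescale each of the $4n$ individual row rapidities by $\sqrt{z}$, i.e.\ to use double rows whose two rows carry $\sqrt{z}\,x_i$ and $\sqrt{z}\,\b{x}_i$ (this is no longer $\mathbb{T}_{--}$ evaluated at a single argument). Treating the $4n$ row rapidities as independent variables, the matrix element $\bra{\widehat{0}}\otimes\bra{\lambda}\,\mathbb{T}_{--}(x_{2n})\cdots\mathbb{T}_{--}(x_1)\ket{\widehat{0}}\otimes\ket{0}=\prod_j(x_j-t\b{x}_j)K_\lambda(x^{\pm1};t)$ is homogeneous of total degree $2n+|\lambda|$ (each double row taking $\nu\to\mu\to\lambda'$ contributes degree $|\lambda'|-|\nu|+1$, and the sum telescopes), so the uniform rescaling produces exactly the factor $(\sqrt{z})^{\,2n+|\lambda|}=z^{n+|\lambda|/2}$ with $K_\lambda$ still evaluated at the original $x^{\pm1}$ --- there are no residues to track. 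With this corrected starting point the rest of your outline (one application of the folding lemma at the base, freezing of the attached $R$-column because every dark row must carry a particle at its right edge when $|z|<1$, iteration over the remaining rows, and total freezing of the bulk since no particles enter from the top or right) goes through and reproduces the paper's proof; your closing remarks on characterizing $Z_{\rm UO}$ by interpolation properties and recovering Kuperberg's Pfaffians at $z=t^{-1/2}$ agree with the paper's discussion but are not needed for the identity itself.
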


\begin{proof}
The proof is closely related to the proof of Theorem \ref{thm:refined-little}. We start with the following partition function:
\begin{align}
\mathcal{P}(x_1^{\pm1},\dots,x_{2n}^{\pm1};t;z)
=
\begin{tikzpicture}[scale=0.6,baseline=(mid.base)]
\node (mid) at (1,4) {};
\foreach\x in {1,...,5}{
\foreach\y in {0,...,7}{
\dark{\x}{\y};
}
}
\foreach\x in {-1,0,1,2}{
\draw[smooth] (1,3.5+2*\x) arc (90:270:0.5);
\node at (0.5,3+2*\x) {\fs$\bullet$};
}
\node at (3.5,-0.5) {$\ss \bra{\rm e;0}$};
\node at (1.5,8.5) {$\ss 0$};
\node at (2.5,8.5) {$\ss 0$};
\node at (3.5,8.5) {$\ss 0$};
\node at (4.5,8.5) {$\ss 0$};
\node at (5.5,8.5) {$\ss \cdots$};
\foreach\x in {1,...,8}{
\gbull{6}{-0.5+\x}{0.09};
}
\node at (0,7.5) {$\ss \red  \sqrt{z} x_1$};
\node at (0,6.5) {$\ss \red  \sqrt{z} \b{x}_1$};
\node at (0,4.2) {$\ss \red \vdots$};
\node at (-0.2,1.5) {$\ss \red  \sqrt{z} x_{2n}$};
\node at (-0.2,0.5) {$\ss \red  \sqrt{z} \b{x}_{2n}$};
\end{tikzpicture}
\end{align}
where the state at the base of the lattice is given by \eqref{e-vec} with $\alpha = 0$ (since we do not have a refining parameter $u$). By inspection, we see that
\begin{align}
\mathcal{P}(x_1^{\pm1},\dots,x_{2n}^{\pm1};t;z)
=
\prod_{j=1}^{2n} (x_j-t\b{x}_j)
\times
\sum_{
\lambda: m_i(\lambda) \in 2 \mathbb{N}
}
\
z^{n+|\lambda|/2} 
\prod_{i=0}^{\infty}
\prod_{j=1}^{m_i(\lambda)/2}
(1-t^{2j-1})
K_{\lambda}(x^{\pm1}_1,\dots,x^{\pm1}_{2n};t).
\end{align}

To produce the right hand side of \eqref{thm-uosasm}, after a single application of \eqref{lem-even5}, \eqref{lem-even6} we find that
\begin{align}
\mathcal{P}(x_1^{\pm1},\dots,x_{2n}^{\pm1};t;z)
=
\begin{tikzpicture}[scale=0.6,baseline=(mid.base)]
\node (mid) at (1,4) {};
\foreach\x in {1,...,5}{
\foreach\y in {1,...,7}{
\dark{\x}{\y};
}
}
\foreach\x in {1,...,5}{
\light{\x}{0};
}
\foreach\x in {-1,0,1,2}{
\draw[smooth] (-1,3.5+2*\x) arc (90:270:0.5);
\node at (-1.5,3+2*\x) {\fs$\bullet$};
}
\foreach\x in {1,...,7}{
\draw[fline] (-1,\x+0.5) -- (1,\x+0.5);
}
\draw[fline,cross=0.5] (-1,0.5) -- (1,0.5);
\node at (3.5,-0.5) {$\ss \bra{\rm e;0}$};
\node at (1.5,8.5) {$\ss 0$};
\node at (2.5,8.5) {$\ss 0$};
\node at (3.5,8.5) {$\ss 0$};
\node at (4.5,8.5) {$\ss 0$};
\node at (5.5,8.5) {$\ss \cdots$};
\node at (0.3,7.8) {$\ss \red  \sqrt{z} x_1$};
\node at (0.3,6.8) {$\ss \red  \sqrt{z} \b{x}_1$};
\node at (0,4.2) {$\ss \red \vdots$};
\node at (0.3,1.8) {$\ss \red  \sqrt{z} x_{2n}$};
\node at (0.3,0.8) {$\ss \red \sqrt{\b{z}} x_{2n}$};
\foreach\x in {2,...,8}{
\gbull{6}{-0.5+\x}{0.09};
}
\ebull{6}{0.5}{0.09};
\end{tikzpicture}
=
\begin{tikzpicture}[scale=0.6,baseline=(mid.base)]
\node (mid) at (1,4) {};
\foreach\x in {1,...,5}{
\foreach\y in {1,...,7}{
\dark{\x}{\y};
}
}
\foreach\x in {1,...,5}{
\light{\x}{0};
}
\foreach\x in {-1,0,1,2}{
\draw[smooth] (-1,3.5+2*\x) arc (90:270:0.5);
\node at (-1.5,3+2*\x) {\fs$\bullet$};
}
\foreach\x in {1,...,7}{
\draw[fline] (-1,\x+0.5) -- (1,\x+0.5);
}
\draw[fline,cross=0.5] (-1,0.5) -- (1,0.5);
\node at (3.5,-0.5) {$\ss \bra{\rm e;0}$};
\node at (1.5,8.5) {$\ss 0$};
\node at (2.5,8.5) {$\ss 0$};
\node at (3.5,8.5) {$\ss 0$};
\node at (4.5,8.5) {$\ss 0$};
\node at (5.5,8.5) {$\ss \cdots$};
\draw[fline] (7,7.5) -- (7,8.5);
\draw[fline] (6,0.5) -- (7,0.5) -- (7,7.5);
\ebull{7}{8.5}{0.09}
\foreach\x in {1,...,7}{
\draw[fline] (8,\x+0.5) -- (6,\x+0.5);
\gbull{8}{\x+0.5}{0.09}
}
\node at (0.3,7.8) {$\ss \red  \sqrt{z} x_1$};
\node at (0.3,6.8) {$\ss \red  \sqrt{z} \b{x}_1$};
\node at (0,4.2) {$\ss \red \vdots$};
\node at (0.3,1.8) {$\ss \red  \sqrt{z} x_{2n}$};
\node at (0.3,0.8) {$\ss \red \sqrt{\b{z}} x_{2n}$};
\end{tikzpicture}
\end{align}
where the final equality follows from the fact that we must have a particle at the right edge of each darkly shaded row, otherwise everything vanishes (assuming that $|z| < 1$). Hence the column of $R$ vertices that we have introduced at the right edge is in fact completely frozen. By repeated application of the intertwining equation \eqref{rll*}, we can transfer the single lightly shaded row to the top of the lattice:
\begin{align}
\mathcal{P}(x_1^{\pm1},\dots,x_{2n}^{\pm1};t;z)
=
\begin{tikzpicture}[scale=0.6,baseline=(mid.base)]
\node (mid) at (1,4) {};
\foreach\x in {1,...,7}{
\draw[fline] (-1,\x+0.5) -- (1,\x+0.5);
}
\draw[fline,cross=1] (-1,0.5) -- (0,0.5); 
\draw[fline] (0,0.5) -- (0,8.5) -- (1,8.5);
\foreach\x in {-1,0,1,2}{
\draw[smooth] (-1,3.5+2*\x) arc (90:270:0.5);
\node at (-1.5,3+2*\x) {\fs$\bullet$};
}
\foreach\x in {1,...,5}{
\foreach\y in {1,...,7}{
\dark{\x}{\y};
}
}
\foreach\x in {1,...,5}{
\light{\x}{8};
}
\node at (1.5,9.5) {$\ss 0$};
\node at (2.5,9.5) {$\ss 0$};
\node at (3.5,9.5) {$\ss 0$};
\node at (4.5,9.5) {$\ss 0$};
\node at (5.5,9.5) {$\ss \cdots$};
\node at (3.5,0.5) {$\ss \bra{\rm e;0}$};
\ebull{6}{8.5}{0.09}
\foreach\x in {1,...,7}{
\gbull{6}{0.5+\x}{0.09}
}
\node at (-0.6,7.8) {$\ss \red  \sqrt{z} x_1$};
\node at (-0.6,6.8) {$\ss \red  \sqrt{z} \b{x}_1$};
\node at (-1,4.2) {$\ss \red \vdots$};
\node at (-0.7,1.8) {$\ss \red  \sqrt{z} x_{2n}$};
\node at (0.3,0) {$\ss \red  \sqrt{\b{z}} x_{2n}$};
\end{tikzpicture}
\end{align}
It is now only a matter of repeating these steps over the remaining darkly shaded rows. The final result is
\begin{align}
\begin{tikzpicture}[scale=0.6,baseline=(mid.base)]
\node (mid) at (1,3) {};
\foreach\x in {1,...,8}{
\draw[fline,cross=1] (1,0.5+\x) -- (1.5-\x,1);
\draw[fline] (-3-0.5*\x,5.5-0.5*\x) -- (-2.5-0.5*\x,5-0.5*\x);
}
\foreach\x in {1,...,7}{
\draw[fline] (-2.5-0.5*\x,5-0.5*\x) -- (1.5-\x,1);
}
\foreach\x in {0,...,3}{
\draw[smooth] (-7+\x,1.5+\x) arc (225:45:0.353553);
\node at (-7+\x,2+\x) {\fs$\bullet$};
}
\foreach\x in {1,...,5}{
\foreach\y in {1,...,8}{
\light{\x}{\y};
}
}
\node at (1.5,9.5) {$\ss 0$};
\node at (2.5,9.5) {$\ss 0$};
\node at (3.5,9.5) {$\ss 0$};
\node at (4.5,9.5) {$\ss 0$};
\node at (5.5,9.5) {$\ss \cdots$};
\node at (3.5,0.5) {$\ss \bra{\rm e;0}$};
\foreach\x in {1,...,8}{
\ebull{6}{0.5+\x}{0.09};
}
\node at (-4.2,5.8) {$\ss \red \sqrt{z} x_1$};
\node at (-5,5.3) {$\ss \red \sqrt{z} \b{x}_1$};
\node at (-6,4.2) {$\ss \red \ddiagdot$};
\node at (-7.4,2.8) {$\ss \red \sqrt{z} x_{2n}$};
\node at (-8.2,2.3) {$\ss \red \sqrt{z} \b{x}_{2n}$};
\node at (-7,0.5) {$\ss \red \sqrt{\b{z}} x_{2n}$};
\node at (-5.6,0.5) {$\ss \red \sqrt{\b{z}} \b{x}_{2n}$};
\node at (-3.2,0.5) {$\ss \red \cdots$};
\node at (-0.8,0.5) {$\ss \red \sqrt{\b{z}} x_1$};
\node at (0.4,0.5) {$\ss \red \sqrt{\b{z}} \b{x}_1$};
\end{tikzpicture}
\end{align}
and since no particles are incident either at the top or the right edge of the lattice, all tiles in the bulk of the lattice freeze to the same configuration with Boltzmann weight 1: 

\begin{align}
\begin{tikzpicture}[scale=0.6,baseline=(mid.base)]
\node (mid) at (1,3) {};
\foreach\x in {1,...,8}{
\draw[fline,cross=1] (1,0.5+\x) -- (1.5-\x,1);
\draw[fline] (-3-0.5*\x,5.5-0.5*\x) -- (-2.5-0.5*\x,5-0.5*\x);
}
\foreach\x in {1,...,7}{
\draw[fline] (-2.5-0.5*\x,5-0.5*\x) -- (1.5-\x,1);
}
\foreach\x in {0,...,3}{
\draw[smooth] (-7+\x,1.5+\x) arc (225:45:0.353553);
\node at (-7+\x,2+\x) {\fs$\bullet$};
}
\node at (1.5,5) {$\times$};
\foreach\x in {2,...,6}{
\foreach\y in {1,...,8}{
\light{\x}{\y};
}
}
\node at (2.5,9.5) {$\ss 0$};
\node at (3.5,9.5) {$\ss 0$};
\node at (4.5,9.5) {$\ss 0$};
\node at (5.5,9.5) {$\ss 0$};
\node at (6.5,9.5) {$\ss \cdots$};
\node at (2.5,0.5) {$\ss 0$};
\node at (3.5,0.5) {$\ss 0$};
\node at (4.5,0.5) {$\ss 0$};
\node at (5.5,0.5) {$\ss 0$};
\node at (6.5,0.5) {$\ss \cdots$};
\foreach\x in {1,...,8}{
\ebull{1}{0.5+\x}{0.09};
\ebull{2}{0.5+\x}{0.09};
\ebull{7}{0.5+\x}{0.09};
}
\node at (-4.2,5.8) {$\ss \red \sqrt{z} x_1$};
\node at (-5,5.3) {$\ss \red \sqrt{z} \b{x}_1$};
\node at (-6,4.2) {$\ss \red \ddiagdot$};
\node at (-7.4,2.8) {$\ss \red \sqrt{z} x_{2n}$};
\node at (-8.2,2.3) {$\ss \red \sqrt{z} \b{x}_{2n}$};
\node at (-7,0.5) {$\ss \red \sqrt{\b{z}} x_{2n}$};
\node at (-5.6,0.5) {$\ss \red \sqrt{\b{z}} \b{x}_{2n}$};
\node at (-3.2,0.5) {$\ss \red \cdots$};
\node at (-0.8,0.5) {$\ss \red \sqrt{\b{z}} x_1$};
\node at (0.4,0.5) {$\ss \red \sqrt{\b{z}} \b{x}_1$};
\end{tikzpicture}
\end{align}
The residual partition function is precisely the right hand side of \eqref{thm-uosasm}.

\end{proof}

\subsection{Analysis of $Z_{\rm UO}$}

As was the case with equation \eqref{thm-uuasm} in the previous section, we have not obtained an explicit evaluation of the partition function on the right hand side of \eqref{thm-uosasm}, for arbitrary $z$. In spite of this, the partition function appearing in \eqref{thm-uosasm} does satisfy a set of properties which determine it uniquely. Let $Z_{\rm UO}(2n)$ denote the right hand side of \eqref{thm-uosasm}, multiplied by 
$$z^{-n}\times \prod_{1 \leq i<j \leq 2n}(1-z x_i x_j)(1-z x_i \b{x}_j)(1-z \b{x}_i x_j)(1-z \b{x}_i \b{x}_j) \prod_{k=1}^{2n}(x_k - t \b{x}_k)^{-1}.$$ It satisfies the following properties:
\begin{enumerate}[label=\bf\arabic*.]
\item $Z_{\rm UO}(2n)$ is symmetric in $\{x_1,\dots,x_{2n}\}$ and in their inversions.
\item $Z_{\rm UO}(2n)$ is a Laurent polynomial in $x_{2n}$ with top/bottom degree of at most $4n-3$.
\item Setting $x_{2n} = \b{z} \b{x}_{2n-1}$, we have the recursion relation
\begin{multline*}
Z_{\rm UO}(2n)
\Big|_{x_{2n} = \b{z} \b{x}_{2n-1}}
=
(1-t)(1-tz)(1+z)
\prod_{i=1}^{2n-2}
(1-t x_i \b{x}_{2n-1}) (1-t \b{x}_i \b{x}_{2n-1})
(1-z^2 x_i x_{2n-1}) (1-z^2 \b{x}_i x_{2n-1})
\\
\times
(1-t z x_i x_{2n-1})  (1-t z \b{x}_i x_{2n-1})
(1-z x_i \b{x}_{2n-1}) (1-z \b{x}_i \b{x}_{2n-1})
Z_{\rm UO}(2n-2).
\end{multline*}

\item The smallest partition function $Z_{\rm UO}(2)$ is given explicitly by
\begin{align*}
Z_{\rm UO}(2)
=
(1-t) (1-tz) (1+z).
\end{align*}

\end{enumerate}

\begin{rmk}{\rm 
When $z=t^{-1/2}$, we are able to solve the above conditions explicitly:
\begin{align*}
Z_{\rm UO}(2n)
=
(-\sqrt{t})^n 
&\times
\prod_{1 \leq i<j \leq 2n}
\frac{(1-t^{\pm \frac{1}{2}} x_i x_j) (1- t^{\pm \frac{1}{2}} x_i \b{x}_j)
(1-t^{\pm \frac{1}{2}} \b{x}_i x_j) (1-t^{\pm \frac{1}{2}} \b{x}_i \b{x}_j)}
{(x_i-x_j)^2(1-\b{x}_i \b{x}_j)^2}
\\
&\times
\pf_{1\leq i<j \leq 2n} \left[
\frac{(1-t)(x_i-x_j)(1-\b{x}_i \b{x}_j)}
{(1-t^\frac{1}{2} x_i x_j) (1-t^\frac{1}{2} x_i \b{x}_j)
(1-t^\frac{1}{2} \b{x}_i x_j)(1-t^\frac{1}{2} \b{x}_i \b{x}_j)}
\right]
\\
&\times
\pf_{1\leq i<j \leq 2n} \left[
\frac{(1-t^{-1})(x_i-x_j)(1-\b{x}_i \b{x}_j)}
{(1-t^{-\frac{1}{2}} x_i x_j) (1-t^{-\frac{1}{2}} x_i \b{x}_j)
(1-t^{-\frac{1}{2}} \b{x}_i x_j) (1-t^{-\frac{1}{2}} \b{x}_i \b{x}_j)}
\right].
\end{align*}
The evaluation of this partition function as a product of Pfaffians is again due to Kuperberg \cite{kup2}. Similarly to the previous section, while the $z=t^{-1/2}$ specialization can be explicitly calculated, it causes the left hand side of \eqref{thm-uosasm} to diverge, meaning that the identity breaks down at this value of $z$.
}
\end{rmk}

\begin{rmk}{\rm 
The $t=0$ case of \eqref{thm-uosasm} leads to the explicit identity
\begin{multline*}
\sum_{
\lambda: m_i(\lambda) \in 2 \mathbb{N}
}
\
z^{|\lambda|/2}
sp_{\lambda}(x^{\pm1}_1,\dots,x^{\pm1}_{2n})
=
\\
\frac{z^{-n(n-1)}}
{\prod_{1 \leq i<j \leq 2n} (x_i - x_j) (1-\b{x}_i \b{x}_j)} 
\pf_{1\leq i< j \leq 2n}
\left[
\frac{(1+z) (x_i-x_j)(1-\b{x}_i \b{x}_j)}
{(1-z {x}_i {x}_j)(1-z {x}_i \b{x}_j)(1-z \b{x}_i {x}_j)(1- z \b{x}_i \b{x}_j)}
\right].
\end{multline*}
}
\end{rmk}

\appendix

\section{$F$ basis in models with six-vertex $R$ matrix}
\label{app:fbasis}

In this appendix we recall the use of factorizing $F$ matrices in quantum integrable models \cite{ms} based on the 
$\mathcal{R}$ matrix
\begin{align}
\label{renorm-R}
\mathcal{R}_{ab}(x/y)
=
\frac{1-x/y}{1-t x/y}
R_{ab}(x/y)
=
\begin{pmatrix}
1 & 0 & 0 & 0 \\
0 & \frac{t(1-x/y)}{1-t x/y} & \frac{(1-t)x/y}{1-t x/y} & 0 \\
0 & \frac{1-t}{1-t x/y} & \frac{1-x/y}{1-t x/y} & 0 \\
0 & 0 & 0 & 1 
\end{pmatrix}_{ab}.
\end{align}
The $\mathcal{R}$ matrix is a re-normalization of \eqref{Rmat} such that the unitarity relation $\mathcal{R}_{ab}(x/y) \mathcal{R}_{ba}(y/x) = 1$ holds. For each $1 \leq i \leq n$ let $W_i = \mathbb{C}^2$, and consider linear operators acting in $W_1 \otimes \cdots \otimes W_n$. For $\sigma$ a permutation of $(1,\dots,n)$, decompose it into simple transpositions and define $\mathcal{R}_{\sigma}^{1\dots n} \in {\rm End}(W_1 \otimes \cdots \otimes W_n)$ as the product of 
$\mathcal{R}$ matrices $\mathcal{R}_{ij}(x_i/x_j) \in {\rm End}(W_i \otimes W_j)$ corresponding with this decomposition. For example, for $n=6$, $\sigma=(4,5,3,2,1,6)$, one has
\begin{align*}
R^{1\dots 6}_{\sigma}
=
\begin{tikzpicture}[scale=0.5,baseline=(mid.base)]
\node (mid) at (1,4) {};
\draw[fline] plot [smooth] coordinates {(2,6.5) (2.5,6.5) (5.5,5.5) (7.5,3.5) (8.5,3.5)};
\draw[fline] plot [smooth] coordinates {(2,5.5) (2.5,5.5) (5.5,4.5) (7.5,2.5) (8.5,2.5)};
\draw[fline] plot [smooth] coordinates {(2,4.5) (2.5,4.5) (5.5,6.5) (7.5,4.5) (8.5,4.5)};
\draw[fline] plot [smooth] coordinates {(2,3.5) (2.5,3.5) (5.5,2.5) (7.5,5.5) (8.5,5.5)};
\draw[fline] plot [smooth] coordinates {(2,2.5) (2.5,2.5) (5.5,3.5) (7.5,6.5) (8.5,6.5)};
\draw[fline] (2,1.5) -- (8.5,1.5);
\node at (1.3,6.5) {$\ss \red \sigma(1)\ $}; 
\node at (1.3,5.5) {$\ss \red \sigma(2)\ $}; 
\node at (1.3,4.5) {$\ss \red \sigma(3)\ $};
\node at (1.3,3.5) {$\ss \red \sigma(4)\ $}; 
\node at (1.3,2.5) {$\ss \red \sigma(5)\ $};
\node at (1.3,1.5) {$\ss \red \sigma(6)\ $};
\node at (9,6.5) {$\ss \red 1$};
\node at (9,5.5) {$\ss \red 2$};
\node at (9,4.5) {$\ss \red 3$};
\node at (9,3.5) {$\ss \red 4$};
\node at (9,2.5) {$\ss \red 5$};
\node at (9,1.5) {$\ss \red 6$};
\end{tikzpicture}
=
R_{21}(x_2/x_1) R_{53}(x_5/x_3) R_{43}(x_4/x_3) R_{51}(x_5/x_1) R_{52}(x_5/x_2)
&
\\[-1cm]
\times\ \  
R_{41}(x_4/x_1) R_{42}(x_4/x_2) R_{31}(x_3/x_1) R_{32}(x_3/x_2)
&.
\\
\end{align*}
Note that, in view of the Yang--Baxter and unitarity equations, all possible decompositions of 
$\sigma$ into simple transpositions lead to the same expression for 
$\mathcal{R}_{\sigma}^{1\dots n}$. The $F$ matrix $F_{1\dots n} \in {\rm End}(W_1 \otimes \cdots \otimes W_n)$ satisfies the factorizing equation
\begin{align}
\label{factor-eqn}
F_{\sigma(1) \dots \sigma(n)} \mathcal{R}_{\sigma}^{1\dots n} = F_{1\dots n}
\end{align}
for all permutations $\sigma$, and following \cite{abfr} it can be expressed explicitly as\footnote{The original solution of equation \eqref{factor-eqn} was given in \cite{ms}, in a form quite different to \eqref{F}. Subsequently, a general solution of \eqref{factor-eqn}, for models with higher-rank $\mathcal{R}$ matrices, was obtained in \cite{abfr}. Here we quote the solution of \cite{abfr}, since it is better suited to our aims.}
\begin{align}
\label{F}
F_{1\dots n}
=
\sum_{i=0}^{n}
\left(
\prod_{1 \leq j \leq i}
E^{(22)}_{j}
\prod_{i < k \leq n}
E^{(11)}_{k}
\right)
+
\sum_{i=1}^{n}
\sum_{\substack{\rho \in S_n \\ \rho(i) > \rho(i+1)}}
\left(
\prod_{1 \leq j \leq i}
E^{(22)}_{\rho(j)}
\prod_{i < k \leq n}
E^{(11)}_{\rho(k)}
\right)
\mathcal{R}_{\rho}^{1\dots n}
\end{align}
where the final sum is over permutations $\rho$ with a single inversion 
$\rho(i) > \rho(i+1)$, and $E^{(kk)}_j$ is a $2 \times 2$ elementary matrix with entry 
$(k,k)$ equal to 1 and all other entries 0, acting in $W_j$. One of the fundamental properties of the $F$ matrix, which follows immediately from \eqref{F}, is that its components are given by
\begin{align*}
\Big[ F_{1\dots n} \Big]_{i_1\dots i_n}^{j_1 \dots j_n}
=
\Big[ \mathcal{R}_{\rho}^{1\dots n} \Big]_{i_1\dots i_n}^{j_1 \dots j_n}
\end{align*}
where $\rho$ is the unique permutation such that
\begin{align*}
i_{\rho(1)} \geq \cdots \geq i_{\rho(n)},
\qquad
\rho(k) < \rho(k+1)\ 
{\rm when}\
i_{\rho(k)} = i_{\rho(k+1)}.
\end{align*}
The $F$ matrix \eqref{F} can be shown to be upper triangular, with non-zero entries on its diagonal, and is thus invertible. Its inverse can be constructed via the $F^{*}$ matrix:
\begin{align*}
F^{*}_{1\dots n}
=
\sum_{i=0}^{n}
\left(
\prod_{1 \leq j \leq i}
E^{(11)}_{j}
\prod_{i < k \leq n}
E^{(22)}_{k}
\right)
+
\sum_{i=1}^{n}
\sum_{\substack{\rho \in S_n \\ \rho(i) > \rho(i+1)}}
\mathcal{R}_{1\dots n}^{\rho}
\left(
\prod_{1 \leq j \leq i}
E^{(11)}_{\rho(j)}
\prod_{i < k \leq n}
E^{(22)}_{\rho(k)}
\right)
\end{align*}
where the final sum is again over permutations $\rho$ with a single inversion $\rho(i) > \rho(i+1)$. It is possible to show that
\begin{align*}
F_{1\dots n}
F^{*}_{1\dots n}
=
\prod_{1 \leq k<l \leq n}
\Delta_{kl}(x_k,x_l),
\end{align*}
where $\Delta_{kl}(x_k,x_l) \in {\rm End}(W_k \otimes W_l)$ is a diagonal matrix with components
\begin{align*}
[\Delta_{kl}(x_k,x_l)]_{i_k i_l}^{j_k j_l}
=
\delta_{i_k,j_k} \delta_{i_l,j_l}
b_{i_k,i_l}(x_k,x_l),
\qquad
b_{i_k,i_l}(x_k,x_l)
=
\left\{
\begin{array}{ll}
1, & i_k = i_l
\\
(x_l - x_k)/(x_l - t x_k), & i_k > i_l
\\
(x_k - x_l)/(x_k - t x_l), & i_k < i_l.
\end{array}
\right.
\end{align*}
The inverse of the $F$ matrix is then given by
\begin{align}
\label{F-inv}
F^{-1}_{1\dots n}
=
F^{*}_{1\dots n}
\prod_{1 \leq k<l \leq n}
\Delta^{-1}_{kl}(x_k,x_l),
\end{align}
and in particular its components can be written in the form
\begin{align*}
\Big[
F^{-1}_{1\dots n}
\Big]_{i_1\dots i_n}^{j_1 \dots j_n}
=
\Big[
\mathcal{R}_{1\dots n}^{\rho}
\prod_{1 \leq k<l \leq n}
\Delta^{-1}_{kl}(x_k,x_l)
\Big]_{i_1\dots i_n}^{j_1 \dots j_n}
=
\Big[
\mathcal{R}_{1\dots n}^{\rho}
\Big]_{i_1\dots i_n}^{j_1 \dots j_n}
\prod_{1\leq k<l \leq n}
b^{-1}_{j_k,j_l}(x_k,x_l),
\end{align*}
where $\rho$ is the unique permutation such that
\begin{align*}
j_{\rho(1)} \leq \cdots \leq j_{\rho(n)},
\qquad
\rho(k) < \rho(k+1)\ 
{\rm when}\
j_{\rho(k)} = j_{\rho(k+1)}.
\end{align*}
The main feature of factorizing $F$ matrices is that they can be used to effect a change of basis, in such a way that certain linear operators acting in $W_1 \otimes \cdots \otimes W_n$ become completely symmetric over the spaces $W_i$. To illustrate this, consider an operator $\mathcal{O}_{1\dots n}(x_1,\dots,x_n) \in {\rm End}(W_1\otimes \cdots \otimes W_n)$ which satisfies
\begin{align*}
\mathcal{R}_{i+1,i}(x_{i+1}/x_i)
\mathcal{O}_{1\dots i,i+1\dots n}(x_1,\dots,x_i,x_{i+1},\dots,x_n)
=
\mathcal{O}_{1\dots i+1,i \dots n}(x_1,\dots,x_{i+1},x_i,\dots,x_n)
\mathcal{R}_{i+1,i}(x_{i+1}/x_i)
\end{align*}
for all $1 \leq i \leq n-1$. Then more generally, 
\begin{align*}
\mathcal{R}_{\sigma}^{1\dots n}
\mathcal{O}_{1\dots n}(x_1,\dots,x_n)
=
\mathcal{O}_{\sigma(1) \dots \sigma(n)}\left(x_{\sigma(1)},\dots,x_{\sigma(n)}\right)
\mathcal{R}_{\sigma}^{1\dots n},
\end{align*}
and using the factorizing equation \eqref{factor-eqn} we find that
\begin{align*}
F_{1\dots n} \mathcal{O}_{1\dots n}(x_1,\dots,x_n) F_{1\dots n}^{-1}
=
F_{\sigma(1) \dots \sigma(n)} 
\mathcal{O}_{\sigma(1)\dots \sigma(n)} \left(x_{\sigma(1)},\dots,x_{\sigma(n)} \right) 
F_{\sigma(1) \dots \sigma(n)}^{-1}.
\end{align*}
Hence the conjugated operator $\widetilde{O}_{1\dots n}(x_1,\dots,x_n) := 
F_{1\dots n} \mathcal{O}_{1\dots n}(x_1,\dots,x_n) F_{1\dots n}^{-1}$ (usually called a ``twisted'' operator in the literature, after {\it Drinfeld twists}) satisfies
\begin{align}
\label{symmetry-O}
\widetilde{O}_{1\dots n}(x_1,\dots,x_n)
=
\widetilde{O}_{\sigma(1) \dots \sigma(n)}\left(x_{\sigma(1)},\dots,x_{\sigma(n)}\right).
\end{align}
The symmetry property \eqref{symmetry-O} is the main reason to study the change of basis induced by the $F$ matrix, since many operators can be written explicitly in the new basis.

\section{Lagrange interpolation}
\label{app:lagrange}

In this paper we evaluate many partition functions by listing a set of properties that they obey, then showing that a relevant determinant or Pfaffian satisfies the same list of properties. It remains only to show that such a list of properties uniquely determines the object in question, which we do with the following lemma.

\begin{lem}
Consider a family of polynomials $f_m(x_1,\dots,x_m)$, $m \geq 1$, which are symmetric in $(x_1,\dots,x_m)$ and degree $d_m$ in any one of these variables. Suppose that 
\begin{align*}
f_m(x_1,\dots,x_{m-1},z_i) 
&= 
C^{(i)}_m
f_{m-1}(x_1,\dots,x_{m-1}),
\quad
1 \leq i \leq d_m,
\end{align*}
for a suitable set of points $\left\{ z_1,\dots,z_{d_m} \right\}$ and coefficients $C^{(i)}_m$, and furthermore that $f_m(0,\dots,0) = C_m^{(0)}$. Then if another family of polynomials $g_m(x_1,\dots,x_m)$, $m \geq 1$ satisfies all of these conditions and $f_1(x) = g_1(x)$, then necessarily
\begin{align*}
f_m(x_1,\dots,x_m) = g_m(x_1,\dots,x_m), 
\quad
\text{for all}\ m \geq 1.
\end{align*}
\end{lem}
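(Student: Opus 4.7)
The plan is to proceed by induction on $m$, with the base case $m=1$ furnished by hypothesis. Assume inductively that $f_{m-1}=g_{m-1}$, and set $D(x_1,\dots,x_m)=f_m(x_1,\dots,x_m)-g_m(x_1,\dots,x_m)$. Then $D$ is a symmetric polynomial of degree at most $d_m$ in each of its variables, and the recursive conditions give
\begin{align*}
D(x_1,\dots,x_{m-1},z_i)=C_m^{(i)}\bigl(f_{m-1}(x_1,\dots,x_{m-1})-g_{m-1}(x_1,\dots,x_{m-1})\bigr)=0
\end{align*}
for $i=1,\dots,d_m$.

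Next I would exploit symmetry: since $D$ is invariant under permutations of $(x_1,\dots,x_m)$, the vanishing $D|_{x_m=z_i}=0$ implies $D|_{x_j=z_i}=0$ for every $j\in\{1,\dots,m\}$ and every $i\in\{1,\dots,d_m\}$. Hence $D$ is divisible by $(x_j-z_i)$ for each such pair, and because these linear factors are pairwise coprime polynomials in the $m$ independent variables, $D$ is divisible by the full product $\prod_{j=1}^{m}\prod_{i=1}^{d_m}(x_j-z_i)$. This product already has degree exactly $d_m$ in every $x_j$, matching the maximum degree allowed for $D$. Consequently $D/\prod_{j,i}(x_j-z_i)$ has degree $\leq 0$ in each variable, so it is a constant $c$, and
\begin{align*}
D(x_1,\dots,x_m)=c\prod_{j=1}^{m}\prod_{i=1}^{d_m}(x_j-z_i).
\end{align*}

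Finally, the normalization at the origin pins down $c$. Using $f_m(0,\dots,0)=g_m(0,\dots,0)=C_m^{(0)}$ gives $D(0,\dots,0)=0$, which together with the above factorization yields $c\cdot\prod_{i=1}^{d_m}(-z_i)^{m}=0$. Provided the interpolation points $z_1,\dots,z_{d_m}$ are all nonzero --- a condition satisfied in every application in the body of the paper, where the $z_i$ are of the form $\bar{y}_j$, $\bar{x}_j\bar{y}_j$, $\bar{z}\bar{x}_{2n-1}$, and so on --- this forces $c=0$ and hence $D\equiv 0$.

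The argument is essentially routine Lagrange interpolation in several variables, and the only genuine subtlety is the combinatorial step of parlaying the univariate vanishing at $x_m=z_i$ into simultaneous divisibility by all $(x_j-z_i)$; I expect that the mildly non-trivial point one must check in applications is that the $z_i$ do indeed avoid $0$, so that the evaluation at the origin provides the final constraint needed to kill the constant $c$.
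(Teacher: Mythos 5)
Your proof is correct and follows essentially the same route as the paper's: induction on $m$, using the $d_m$ interpolation points to exhaust the allowed degree in $x_m$, and the value at the origin to fix the one remaining constant. Your execution via the explicit factorization $f_m-g_m=c\prod_{j,i}(x_j-z_i)$ is in fact tighter than the paper's (which asserts proportionality $f_n=\kappa_n g_n$ directly from agreement at $d_n$ points), and your observation that the $z_i$ must be distinct and nonzero is a genuine precision that the paper hides inside the word ``suitable''.
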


\begin{proof}
Since $f_1(x) = g_1(x)$, assume that 
$f_{n-1}(x_1,\dots,x_{n-1}) = g_{n-1}(x_1,\dots,x_{n-1})$ for some $n \geq 2$. It follows that
\begin{align*}
f_n(x_1,\dots,x_{n-1},z_i) = g_n(x_1,\dots,x_{n-1},z_i),
\quad
1 \leq i \leq d_n,
\end{align*}
meaning that $f_n$ and $g_n$ are equal at as many points as their degree. Then clearly
\begin{align*}
f_n(x_1,\dots,x_n)
=
\kappa_n
g_n(x_1,\dots,x_n)
\end{align*}
where $\kappa_n$ does not depend on $(x_1,\dots,x_n)$ (if it did depend on the variables, it would have to be symmetric in them, but this would then violate the condition that both $f_n$ and $g_n$ are degree $d_n$ polynomials in $x_n$). Finally, using the fact that $f_n(0,\dots,0) = C^{(0}_n = g_n(0,\dots,0)$, we find that $\kappa_n=1$. This proves that 
$f_{n}(x_1,\dots,x_{n}) = g_{n}(x_1,\dots,x_{n})$ for all $n \geq 1$, by induction on $n$.
\end{proof}



\bibliographystyle{abbrv}
\bibliography{references}

\providecommand{\noopsort}[1]{}
\begin{thebibliography}{10}

\bibitem{abfr}
T.~D. Albert, H.~Boos, R.~Flume, and K.~Ruhlig.
\newblock Resolution of the nested hierarchy for rational $sl(n)$ models.
\newblock {\em Journal of Physics A: Mathematical and General}, 33(28):4963,
  2000.

\bibitem{bw}
D.~Betea and M.~Wheeler.
\newblock Refined {C}auchy and {L}ittlewood identities, plane partitions, and
  symmetry classes of alternating sign matrices.
\newblock {\em Journal of Combinatorial Theory, Series A}, 137:126--165, 2016.

\bibitem{bwz-j}
D.~Betea, M.~Wheeler, and P.~Zinn-Justin.
\newblock Refined {C}auchy/{L}ittlewood identities and six-vertex model
  partition functions: {II}. {P}roofs and new conjectures.
\newblock {\em Journal of Algebraic Combinatorics}, 42:555--603, 2015.

\bibitem{bb}
N.~M. Bogoliubov and R.~K. Bullough.
\newblock A $q$-deformed completely integrable bose gas model.
\newblock {\em Journal of Physics A: Mathematical and General}, 25(14):4057,
  1992.

\bibitem{bik}
N.~M. Bogoliubov, A.~G. Izergin, and N.~A. Kitanine.
\newblock Correlation functions for a strongly correlated boson system.
\newblock {\em Nuclear Physics B}, 516(3):501--528, 1998.

\bibitem{bor}
A.~Borodin.
\newblock On a family of symmetric rational functions.
\newblock {\em arXiv:1410.0976}, 2014.

\bibitem{cdgw}
L.~Cantini, J.~de~Gier, and M.~Wheeler.
\newblock Matrix product formula for {M}acdonald polynomials.
\newblock {\em Journal of Physics A: Mathematical and Theoretical},
  48(38):384001, 2015.

\bibitem{dgw}
J.~de~Gier and M.~Wheeler.
\newblock A summation formula for {M}acdonald polynomials.
\newblock {\em arXiv:1506.06597}, 2015.

\bibitem{ize}
A.~G. Izergin.
\newblock Partition function of the six-vertex model in a finite volume.
\newblock {\em Dokl. Akad. Nauk SSSR}, 297(2):331--333, 1987.

\bibitem{kn}
A.~N. Kirillov and M.~Noumi.
\newblock {$q$}-difference raising operators for {M}acdonald polynomials and
  the integrality of transition coefficients.
\newblock In {\em Algebraic methods and {$q$}-special functions ({M}ontr\'eal,
  {QC}, 1996)}, volume~22 of {\em CRM Proc. Lecture Notes}, pages 227--243.
  Amer. Math. Soc., Providence, RI, 1999.

\bibitem{kor}
V.~E. Korepin.
\newblock Calculation of norms of {B}ethe wave functions.
\newblock {\em Comm. Math. Phys.}, 86(3):391--418, 1982.

\bibitem{korff}
C.~Korff.
\newblock Cylindric versions of specialised {M}acdonald functions and a
  deformed {V}erlinde algebra.
\newblock {\em Communications in Mathematical Physics}, 318(1):173--246, 2013.

\bibitem{kup2}
G.~Kuperberg.
\newblock Symmetry classes of alternating-sign matrices under one roof.
\newblock {\em Ann. of Math. (2)}, 156(3):835--866, 2002.

\bibitem{mac}
I.~G. Macdonald.
\newblock {\em Symmetric functions and {H}all polynomials}.
\newblock Oxford Mathematical Monographs. The Clarendon Press Oxford University
  Press, New York, second edition, 1995.
\newblock With contributions by A. Zelevinsky, Oxford Science Publications.

\bibitem{mac2}
I.~G. Macdonald.
\newblock Orthogonal polynomials associated with root systems.
\newblock {\em S{\'e}minaire Lotharingien de Combinatoire}, 45:B45a, 2000.

\bibitem{ms}
J.~M. Maillet and J.~S. de~Santos.
\newblock Drinfel'd twists and algebraic {B}ethe ansatz.
\newblock {\em American Mathematical Society Translations}, 201(Series 2),
  2000.

\bibitem{man}
V.~V. Mangazeev.
\newblock On the {Y}ang--{B}axter equation for the six-vertex model.
\newblock {\em Nuclear Physics B}, 882:70--96, 2014.

\bibitem{mim}
K.~Mimachi.
\newblock A duality of {M}acdonald--{K}oornwinder polynomials and its
  application to integral representations.
\newblock {\em Duke Mathematical Journal}, 107(2):265--281, 2001.

\bibitem{rai}
E.~M. Rains.
\newblock {${\rm BC}_n$}-symmetric polynomials.
\newblock {\em Transform. Groups}, 10(1):63--132, 2005.

\bibitem{rai2}
E.~M. Rains.
\newblock Multivariate quadratic transformations and the interpolation kernel.
\newblock {\em arXiv:1408.0305}, 2014.

\bibitem{rai-war}
E.~M. Rains and S.~O. Warnaar.
\newblock Bounded {L}ittlewood identities.
\newblock {\em arXiv:1506.02755}, 2015.

\bibitem{tsi}
N.~V. Tsilevich.
\newblock The quantum inverse scattering problem method for the {$q$}-boson
  model, and symmetric functions.
\newblock {\em Functional Analysis and Its Applications}, 40(3):53--65, 96,
  2006.

\bibitem{tsu}
O.~Tsuchiya.
\newblock Determinant formula for the six-vertex model with reflecting end.
\newblock {\em J. Math. Phys.}, 39(11):5946--5951, 1998.

\bibitem{vde2}
J.~F. van Diejen and E.~Emsiz.
\newblock Boundary interactions for the semi-infinite $q$-boson system and
  hyperoctahedral {H}all--{L}ittlewood polynomials.
\newblock {\em SIGMA}, 9(077):12, 2013.

\bibitem{vde3}
J.~F. van Diejen and E.~Emsiz.
\newblock Branching formula for {M}acdonald--{K}oornwinder polynomials.
\newblock {\em arXiv:1408.2280}, 2014.

\bibitem{vde1}
J.~F. van Diejen and E.~Emsiz.
\newblock The semi-infinite $q$-boson system with boundary interaction.
\newblock {\em Letters in Mathematical Physics}, 104(1):103--113, 2014.

\bibitem{ven}
V.~Venkateswaran.
\newblock Symmetric and nonsymmetric {H}all-{L}ittlewood polynomials of type
  {BC}.
\newblock {\em arXiv:1209.2933v2}, 2013.

\bibitem{war2}
S.~O. Warnaar.
\newblock Rogers--{S}zeg{\"o} polynomials and {H}all--{L}ittlewood symmetric
  functions.
\newblock {\em Journal of Algebra}, 303(2):810--830, 2006.

\bibitem{war}
S.~O. Warnaar.
\newblock Bisymmetric functions, {M}acdonald polynomials and {$sl_3$} basic
  hypergeometric series.
\newblock {\em Compos. Math.}, 144(2):271--303, 2008.

\end{thebibliography}

\end{document}